%% file: MU-OAMP/main.tex
\documentclass[12pt,draftclsnofoot,onecolumn]{IEEEtran}
\usepackage[utf8]{inputenc}
\input{macros}

\title{An Orthogonal Approximate Message Passing Framework for Multiuser Communications}

\pgfplotsset{compat=1.18}

\begin{document}
	\author{{Burak~\c{C}akmak}, {Hao Yan}, {Alexander Fengler},\\{Giuseppe~Caire}, and 
    {Lei~Liu}
\thanks{The work of B. \c{C}akmak was supported by the Gottfried Wilhelm Leibniz-Preis 2021 of DFG. The work of G. Caire was supported by BMBF Germany in the program of ``Souver\"an. Digital. Vernetzt.'' Joint Project 6G-RIC (Project IDs 16KISK030).
A. Fengler was funded by the Deutsche Forschungsgemeinschaft (DFG, German Research Foundation) – Grant 471512611.}
\thanks{Burak \c{C}akmak and Giuseppe Caire are with the Faculty of Electrical Engineering and Computer Science, Technical University of Berlin, Berlin 10587, Germany (e-mail: burak.cakmak@tu-berlin.de; caire@tu-berlin.de).}
\thanks{Hao Yan and Lei Liu are with the College of Information Science and Electronic Engineering, Zhejiang University, Hangzhou 310027, China (e-mail: hao\_yan@zju.edu.cn; lei\_liu@zju.edu.cn).}
\thanks{Alexander Fengler is with the Department of Electrical Engineering and Information Technology, Karlsruhe Institute of Technology, 76131 Karlsruhe, Germany (e-mail:  fengler@kit.edu).}
} 
	\maketitle
\begin{abstract}
We solve the open problem of constructing a Bayes-optimal iterative signal recovery algorithm 
for linear-Gaussian \emph{multiuser} communication systems with random precoding at the transmitters.
Specifically, we consider the received signal model
$\mathbf{y} = \sum_{u} \mathbf{H}_u \mathbf{\Xi}_u \mathbf{s}_u + \mathbf{n}$,
where $\mathbf{n}$ is white Gaussian noise,  $\{\mathbf{H}_u \in \mathbb{C}^{L 
\times L}\}$ are discrete-time channel matrices --- modeling a wide class of generally time-varying and dispersive linear channels with possibly multiple antennas --- and the precoding matrices $\{\boldsymbol{\Xi}_u \in \mathbb{C}^{L \times N_u}\}$ are drawn independently from a right-unitarily invariant random matrix ensemble. 
We consider generic \emph{non-separable} (coded) systems where the users' signals
$\{\mathbf{s}_u\}$ follow general (non-factorizing) distributions.
For this model, we introduce a novel orthogonal/vector approximate message passing
(OAMP/VAMP)-type framework, including an algorithm and its high-dimensional (but finite-sample) analysis. 
From an algorithmic standpoint, the proposed method can be interpreted as an \emph{interpolation} between
Minka's expectation propagation (EP)---a widely used method in machine
learning---and OAMP.
Our main theoretical contribution is the explicit finite-sample analysis of the
proposed algorithm. Furthermore, we analyze the associated inference problem via a replica-symmetric (RS) ansatz by using
a novel disorder-averaging technique. Both  the (rigorous) high-dimensional analysis of the algorithm and the RS ansatz reveal the same decoupling principle, establishing that the proposed algorithm is asymptotically
Bayes-optimal under the validity of the RS ansatz.
\end{abstract}

%%%%%%%%%%%%%%%%%%%%%%%%%%%%%%%%%%%%%%%%%%%%%%%%%%%%%%%%%%%%%%%%%%%%%%%%%%%%%%%%%
\section{Introduction}

We consider a linear-Gaussian multiaccess channel where $U$ users simultaneously transmit to a receiver. 
The received signal can be expressed as
\begin{equation}
\yv = \sum_{u=1}^{U} \Hm_u \xv_u + \nv,   \label{eq:system_model}
\end{equation}
where \( \nv \sim \mathcal{CN}(\mathbf{0}, \sigma^2 \Id_L) \) denotes additive white Gaussian noise,
\( \xv_u \in \mathbb{C}^{L} \) is the transmitted signal of the \(u\)-th user, and 
\( \Hm_u \in \mathbb{C}^{L \times L} \) is the discrete-time channel matrix of the  \(u\)-th user. This may include a physical 
linear dispersive possibly time-varying and possibly multi-input multi-output (MIMO) channel, e.g., as originated by multiple antenna transmission/reception in wireless communications\cite{tse2005fundamentals}, as well as the precoding and postprocessing for specific waveforms such as OFDM and OTFS (e.g., IDFT/DFT, insertion and removal of cyclic prefix, symplectic DFT, discrete Zak transform and so on). Specific relevant examples are discussed in the results of Section~\ref{sec_sim}. The channel matrices $\{\Hm_u\}$ may be deterministic or a realization of some random matrix. However, in the perspective of the receiver, they are fixed and known. 
%For example, under the OFDM\footnote{Abbreviation for Orthogonal Frequency-Division Multiple Access.} convention, we may consider \( L = nM \), where \( n \) denotes the number of OFDM symbols per frame and \( M \) the number of subcarriers per block.

Following \cite{lei26}, we consider the case where each user applies a linear random precoder before transmission \footnote{For an illustration of the advantages of random precoding, we refer to \cite[Section~III-D]{lei26}. Sparse regression (SR)\cite{barron12} encoded  signals $\xv_u$ naturally have a random precoding structure and can be studied within our framework. }. Specifically,
\begin{equation}
\xv_u = \Xim_u \sv_u ,
\label{eq:precoding}
\end{equation}
where \( \sv_u \in \mathbb{C}^{N_u} \) denotes the information-bearing signal vector and
\( \Xim_u \in \mathbb{C}^{L \times N_u} \) is a random precoding matrix.  Each precoder \( \Xim_u \) is drawn independently from a right-unitarily invariant ensemble, i.e., \( \Xim_u \sim \Xim_u \Vm \) for any unitary matrix \( \Vm \) of appropriate dimensions, statistically independent of \( \Xim_u \). Equivalently, \( \Xim_u \) admits the singular value decomposition
\begin{equation}
\Xim_u = \mathbf{U}_u \mathbf{\Lambda}_u \mathbf{O}_u ,
\label{svd}
\end{equation}
where \(\mathbf{U}_u \in \mathbb{C}^{L \times L} \) and \( \mathbf{O}_u \in \mathbb{C}^{N_u \times N_u} \) are unitary matrices, and \( \mathbf{\Lambda}_u \in \mathbb{R}^{L \times N_u} \) is diagonal. The matrices $\{ \mathbf{U}_u, \mathbf\Lambda_u\}$ are generic 
whereas \( \mathbf{O}_u\) are independent Haar-distributed matrices, i.e., chosen uniformly at random from the set of unitary $N_u\times N_u$ matrices. Notice that while the precoding matrices are generated at random by sampling the appropriate distribution, they are known to the decoder, as part of the transmission scheme. 
Our framework applies to general right-unitarily invariant precoders, which arise in a variety of contexts, e.g., i.i.d. Gaussian precoding matrices in high-dimensional time series~\cite{tieplova2025information}, as well as Haar unitary matrices and their row submatrices (i.e., random semi-unitary matrices~\cite{chikuse2003statistics}) used in our setting.

The goal of the receiver is to recover the signal vectors \( \{\sv_u\} \) given the received signal \( \yv \) and the {\em effective} matrices $\{\Hm_u \Xim_u\}$.

An important feature of the considered setting is that users may employ coded, non-separable signaling schemes, in which the components of each signal vector $\sv_u$ exhibit structured dependencies. As a concrete example, consider the recently proposed sparse-regression LDPC (SR-LDPC) codes~\cite{Ebe25}. These codes partition the signal into multiple sections, each containing exactly one nonzero component. The position of the nonzero component in each section is determined by an outer, non-binary LDPC code, thereby introducing global dependencies across sections.

In the single-user setting (\(U=1\)) with i.i.d.\ input symbols, the associated inference problem is by now well understood. Fundamental performance limits have been characterized using the replica-symmetric (RS) ansatz~\cite{tanaka2002statistical,Tulino13}, and these predictions have been rigorously validated~\cite{reeves2019replica,barbier2020mutual,10272997}. On the algorithmic side, Bayes-optimal recovery can be achieved using the Approximate Message Passing (AMP) framework (algorithm and its high-dimensional analysis) \cite{Kabashima,Donoha,opper2016theory,fan2022approximate}, as well as the orthogonal/vector AMP (OAMP/VAMP) framework~\cite{ma2017orthogonal,rangan2019vector,takeuchi2019rigorous}.

For non-separable input distributions, extensions of the AMP framework were initiated in~\cite{Berthier20}, relying on certain (implicit) limiting assumptions and establishing convergence in a weaker sense than almost sure convergence. These results were recently sharpened  in~\cite{reeves2025dimension}. Extensions of OAMP/VAMP to non-separable inputs were initiated in~\cite{fletcher2018plug}.

Existing AMP and OAMP frameworks, however, are limited to restricted models. In particular, AMP requires \( \Hm_u = \Id_L \) and i.i.d.\ Gaussian precoding matrices \( \Xim_u \). Even in simple static wireless channels, the matrices \( \Hm_u \) are generally diagonal with non-identical entries, thereby violating the assumptions required for AMP state evolution. Alternatively, one may stack all users into a joint linear model by defining
\begin{subequations}
\label{joint}
\begin{align}
\sv &\triangleq [\sv_1^\top, \sv_2^\top, \ldots, \sv_U^\top]^\top, \\
\Am &\triangleq [\Am_1, \Am_2, \ldots, \Am_U], \qquad \text{with } \Am_u \doteq \Hm_u \Xim_u, 
\end{align}     
such that we can rewrite \eqref{eq:system_model} as 
\begin{equation}
\yv = \Am \sv + \nv .  \label{channel2}
\end{equation}
\end{subequations}
Then, one may use the conventional OAMP framework, which is developed under the assumption that \( \Am \) is right-unitarily invariant. However, in the multiuser setting considered here, the right-unitarily invariant property generally fails due to the superposition of independently precoded channels. As a result, a direct application of classical OAMP typically leads to poor performance.

%%%%%%%%%%%%%%%%%%%%%%%%%%%%%%%%%%%%%%%%%%%%%%%%%%%%%%%%%
\subsection{Overview of Contributions}

In this study, we introduce a general theoretical framework for signal recovery in multisource linear observation models of the form~\eqref{eq:system_model} with independent random precoding \eqref{eq:precoding}. The framework accommodates generic channel matrices, right-unitarily invariant precoding matrices, and non-separable input signals. 

We propose a novel OAMP-type algorithm that jointly processes the user-specific effective matrices. From an algorithmic standpoint, the proposed method can be interpreted as an intermediate way between Minka’s expectation propagation (EP) \cite{minka2005divergence} (a commonly used method in machine learning) and OAMP.

We present an explicit (and rigorous) finite-sample analysis~\cite{rush2018,rush24,Cakmakisit24,cakmakit25} of the evolution of the iterative algorithm (referred to as ``dynamics'') applicable to generic non-separable systems.  While strong results in this respect have recently been established for AMP-type dynamics coupled with i.i.d.\ Gaussian random matrices~\cite{reeves2025dimension}, to the best of our knowledge, this is the first work that provides an explicit high-dimensional and finite-sample characterization of OAMP-type dynamics with non-separable inputs.

The proof technique may be of independent methodological interest and is therefore presented in the main body of the paper. Specifically, we leverage the ``Householder Dice'' representation of AMP-type dynamics with Haar-distributed matrices~\cite{lu2021householder} to construct a statistically equivalent representation that eliminates explicit dependencies on Haar matrices. We then apply concentration results from high-dimensional probability (summarized in Appendix~\ref{preliminariesop}) to characterize the resulting dynamics in the high-dimensional limit.

Furthermore, under the RS ansatz, we derive asymptotic expressions for the input–output mutual information and the per-user minimum mean-square error (MMSE). The RS analysis itself relies on a novel disorder-averaging technique (see Appendix~\ref{disorder_average}) rather than the asymptotic Itzykson–Zuber integral approach, which is routinely applied to the single-user case, $U = 1$. 

Both the finite-sample analysis and the RS predictions reveal the same asymptotic behaviour. This correspondence implies, under the validity of the RS ansatz, that the proposed algorithm achieves asymptotic Bayes optimality.

%%%%%%%%%%%%%%%%%%%%%%%%%%%%%%%%%%%%%%%%%%%%%%%%%%%
\subsection{Related Works}

Despite the practical relevance of the multiuser observation model~\eqref{eq:system_model}, its theoretical understanding remains limited. In the special case \( \Hm_u \equiv \Id_L \) and \( \Xim_u \equiv \Om_u \), where \( \Om_u \) is an \( L \times L \) Haar-distributed unitary matrix, the resulting observation model---referred to as the \emph{\(U\)-orthogonal form}---was analyzed via the RS ansatz in~\cite{vehkapera2016analysis}.
Recently, in the context of high-dimensional time series,~\cite{tieplova2025information} studied
   a similar problem where \( \mathbf{H}_u\) is a deterministic circulant matrix and
\( \boldsymbol{\Xi}_u \) is an i.i.d.\ Gaussian random matrix, and input signals
\( \sv_u\) with i.i.d. components. For this particular setting, the authors derived the replica predictions for the mutual information. However, no algorithm was proposed to achieve the Bayes-optimal performance predicted by the theory.

The multiuser OAMP (MU-OAMP) framework developed in this paper resolves this open issue.
From a technical standpoint, the following works are noted: (i) The asymptotic analysis of classical OAMP/VAMP with non-separable inputs was initiated in~\cite{fletcher2018plug}. That analysis relies on implicit limiting assumptions and establishes convergence in a weaker sense than almost sure convergence; (ii) our proof strategy is aligned with \cite{Cakmakisit24,cakmakit25}, which studies AMP-type dynamics involving i.i.d.\ Gaussian random matrices (rather than Haar matrices) and under i.i.d. (separable) input signals.

%%%%%%%%%%%%%%%%%%%%%%%%%%%%%%%%%%%%%

\subsection{Organization}
The remainder of the paper is organized as follows.
Section~\ref{derivation} discusses the algorithmic aspects of the proposed method, including the (self-contained) derivation of the proposed algorithm and its connections to classical EP and OAMP algorithms.
In Section~\ref{main_result}, we introduce the MU-OAMP framework and present its finite-sample analysis. Section~\ref{replica_sect} presents the RS ansatz and the algorithm's Bayesian optimality. Some numerical examples are presented in Section~\ref{sec_sim}. The proof of the main theoretical result is given in Section~\ref{Pth1}. Conclusions and future research directions are discussed in Section~\ref{sec_conc}. Additional technical derivations are deferred to the appendices.

\subsection{{Notations}}\label{NotDef}
We use $\Am > \bf 0$ (resp. $\Am\geq  \bf 0$) to indicate that $\Am$ is positive definite (resp. positive semi-definite). $\Vert \cdot \Vert_2$ is the spectral norm (largest singular value) of the matrix in the argument. For a vector $\av$, $\av>\matr 0$ (resp. $\av\geq  \matr 0$) indicates componentwise positivity (resp. non-negativity). We use $(\cdot)^\transp$ and $(\cdot)^\dagger$ to denote transpose and Hermitian transpose, and $\otimes$ to indicate Kronecker product. Also, $\Id_m$ indicates the $m \times m$ identity matrix. We use $\xv \sim \mathcal{CN}(\boldsymbol{\mu}, \boldsymbol{\Sigma})$ to indicate that $\xv$ is circularly-symmetric complex Gaussian distributed 
with mean $\boldsymbol{\mu}$ and covariance  $\boldsymbol{\Sigma}$, and denote the corresponding 
density function as $\textswab{g}(\cdot \vert \muv, \matr\Sigma)$. We write $\av \sim_{\text{i.i.d.}} A$ 
to indicate that the entries of a random vector $\av$ are statistically independent and identically distributed as the random variable $A$. When required, we also write $\av \sim_{\text{i.i.d.}} {\rm P}$ where ${\rm P}$ is the marginal distribution the components of $\av$. We define the normalized trace of a matrix $\Am\in\CC^{N\times N}$ by
$\langle\Am \rangle\doteq \frac{1}{N}{\rm tr}(\Am)$ and, for $\av,\bv\in \CC^{N}$, we define the normalized inner product
$\langle \av,\bv \rangle\doteq \frac{1}{N}\av^\dagger\bv$. For a function, $\eta:\mathbb{C}^{N}\to \mathbb{C}^{N}$, we denote its $N \times N$ Jacobian matrix by $\eta'$ with the entries
$
[\eta'(\mathbf{r})]_{ij} \doteq \frac{\partial \eta_i(\mathbf{r})}{\partial r_j}$ for all $i,j$ and the derivatives are understood in the Wirtinger sense\footnote{For a complex variable $r = x + \mathrm{i}y$, the Wirtinger derivative is defined as
$
\frac{\partial}{\partial r}
= \frac{1}{2}\left( \frac{\partial}{\partial x} - \mathrm{i}\frac{\partial}{\partial y} \right)$.}.
Finally, for two jointly distributed random vector $\xv, \yv$ we define $\mathrm{mmse}(\mathbf{x}\mid\mathbf{y})
 \doteq \mathbb{E}\!\left[\|\mathbf{x}-\mathbb{E}[\mathbf{x}\mid\mathbf{y}]\|^2\right]$.

%%%%%%%%%%%%%%%%%%%%%%%%%%%%%%%%%%%%%%%%%%%%%%%%%%%%%%%%%%%%%%%%
\section{Basic Algorithmic Aspects}\label{derivation}

In this section, we provide a compact and self-contained derivation of the proposed algorithm and articulate its connection to Minka's EP \cite{minka2005divergence} and OAMP \cite{ma2017orthogonal,rangan2019vector}. The goal of this section is to convey the rationale of the proposed MU-OAMP algorithm before presenting the main results in the next section.

Recall the observation model and notation in \eqref{joint}. Let $p_0$ denote the prior probability density function (pdf) 
of $\sv$. Then, the posterior pdf of $\sv$ given $\yv, \Am$ factorizes according to 
\begin{equation}
\label{posterior}
p(\sv\vert\yv,\Am)\propto p_0(\sv)\underbrace{\textswab{g}(\yv\vert \Am\sv,\sigma^2\Id_L)}_{\doteq p_1(\sv)}\;
\end{equation}
Since the factor $p_0$ is in general non-Gaussian, the exact computation of the joint a-posteriori pdf $p(\sv\vert\yv,\Am)$ and/or of the posterior marginals $p(s_i \vert\yv,\Am)$, which is the ultimate goal of Bayesian inference, is typically infeasible for large dimension $L$. Then, we follow the (Gaussian) expectation consistency (EC) approximate inference method of \cite{opper2005expectation}: the marginal posterior pdfs $p(s_i\vert\matr y, \Am)$ are approximated by Gaussian pdfs $q(s_i)$ which are the marginals of the joint pdf
\begin{equation}
q(\sv)%\triangleq\frac{1}{Z}
\propto\underbrace{\textswab{g}(\sv\vert\fv,{\rm diag}{(\matr\lambda)})}_{\doteq q_0(\sv)}
\underbrace{\textswab{g}(\sv\vert \rv,{\rm diag}{(\matr\tau)})}_{\doteq q_1(\sv)}\label{apdf}
\end{equation}
where the vectors of means $(\fv,\rv)$ and variances $(\matr\lambda,\matr\tau)$ are determined via a system of equations obtained by imposing suitable consistency conditions. The direct solution of such system of equations is generally infeasible. However, the resulting conditions can be interpreted as a (vector-valued) fixed-point equation, for which a solution can be found in an iterative way starting from a suitable initialization. The resulting iterative algorithm provides the rationale for the actual algorithm proposed in this paper.

%%%%%%%%%%%%%%%%%%%%%%%%%%%%
\subsection{Consistency conditions}

For an unnormalized density function in product form $m_0(\sv)m_1(\sv)$ and a measurable function $g(\cdot)$
we introduce the notation \[\mathbb{E}[g(\sv)]_{m_0\cdot m_1}\doteq \frac{\int g(\sv) m_0(\sv)
m_1(\sv){\rm d}\sv}{\int m_0(\sv)
m_1(\sv){\rm d}\sv} \;.\] 
The consistency conditions are obtained by imposing EC  \cite{opper2005expectation} via the equalities
\begin{equation}
\mathbb{E}[g(\matr \sv)]_{q_0\cdot p_1}=
\mathbb{E}[g(\matr \sv)]_{q_0\cdot q_1}=\mathbb{E}[g(\matr \sv)]_{p_0\cdot q_1}\label{EC}\;.
\end{equation}
for $g(\cdot)$ involving first and second moments of $\sv$, since the underlying Gaussian distributions $q_0$ and $q_1$ are completely specified by their mean and variances. 
Notice that the first equality in \eqref{EC} introduces consistency with the channel transition pdf $p_1$, while the second
equality in \eqref{EC} introduces consistency with the prior pdf $p_0$. 
Different conditions (and therefore different iterative algorithms) are obtained depending on the imposed constraint 
on the form of the variance vectors $(\matr\lambda,\matr\tau)$.  In particular, 
consider the following restricted forms
\begin{align}
\label{unified}
(\matr\lambda,\matr\tau)=\left\{\begin{array}{cc}
(\lambda_{1},\lambda_{2},\cdots,\lambda_{N}), (\tau_{1},\tau_{2},\cdots,\tau_{N}) & \text{Minka's EP} \\
(\lambda_{1}\matr 1_{N_1};\lambda_{2}\matr 1_{N_2};\cdots;\lambda_{U}\matr 1_{N_U}),(\tau_{1}\matr 1_{N_1};\tau_{2}\matr 1_{N_2};\cdots;\tau_{U}\matr 1_{N_U})  
& \text{MU-OAMP}  \\ 
 \lambda\matr 1_{N},\tau\matr 1_N
& \text{OAMP/VAMP}   \\ 
\end{array}\right.\
\end{align}
where $N\doteq\sum_uN_u$. Here, for example, Minka's EP and OAMP refer to the algorithms that iteratively solve the EC conditions \eqref{EC} for unrestricted variance profiles of $\matr\lambda, \matr\tau$ and the fully restricted variance profiles $\matr\lambda=\lambda\matr 1_{N}, \matr\tau=\tau\matr 1_{N}$, respectively.

%\begin{definition}\label{generic} For convenience, we may consider a generic $U$-%source model where $N_u$ can be arbitrary number in $[1,N]$ such that $\sum_{u} N_u  %= N$. Then, we set  
%\begin{equation}
%g(\sv)= (\sv, \norm{\sv_1}^2, \norm{\sv_2}^2,\cdots \norm{\sv_U}^2). %\label{generic_moments}
%\end{equation}
%\end{definition}}

Next, we derive the explicit form of the EC \eqref{EC} under the block-constant restriction (middle line in \eqref{unified}).
To this end, we first note the following identities, which follow from the additivity of the \emph{natural} parameters of Gaussian densities under multiplication. Let the Gaussian density $q$ be given as in \eqref{apdf}. Then,
\begin{subequations}
\label{EPrule}
\begin{align}
\frac{1}{\frac{1}{N_u}\mathbb E[\norm{\sv_u-\mathbb E[\sv_u]_{q}}^2]_{q}}
&=\frac{1}{\tau_u}+\frac{1}{\lambda_u}\\
\frac{\mathbb E[\sv_u]_{q}}{\frac{1}{N_u}\mathbb E[\norm{\sv_u-\mathbb E[\sv_u]_q}^2]_{q}}&= \frac{\rv_u}{\tau_u}+\frac{\fv_u}{\lambda_u}  \;.
\end{align}
\end{subequations}
Throughout this section, we partition any vector vector $\av\in \CC^N$ as $\av =[\av_1^\top, \av_2^\top,\cdots, \av_U^\top]^\top$ where $\av_u \in \CC^{N_u}$. Moreover, we define the selection matrix 
$\mathbf{P}_u \in \{0,1\}^{N_u \times N}$ such that $\mathbf{a}_u = \mathbf{P}_u \mathbf{a}$.

Defining 
\begin{align}
&\matr\mu_u \doteq \mathbb E[\sv_u]_{q_0\cdot p_1} \text{~~and~~}\sigma_u^2 \doteq \frac{1}{N_u}\mathbb E[\Vert\sv_u-\matr\mu_u\Vert^2]_{q_0\cdot p_1}\label{mom1}\;, 
\end{align}
and using \eqref{mom1} and \eqref{EPrule} in the first equality in \eqref{EC}, we obtain
\begin{subequations}
\label{conditionq0p1}
\begin{align}
\frac{1}{\tau_u}&=\frac{1}{\sigma_u^2}-\frac{1}{\lambda_u}\\
\frac{\rv_u}{\tau_u}&=\frac{\matr\mu_u}{\sigma_u^2}-\frac{\fv_u}{\lambda_u}\;.
\end{align}   
\end{subequations}
Furthermore, using the fact that, under $q_0 p_1$, $\sv$ is conditionally Gaussian given $\yv, \Am$, we obtain the explicit expressions
\begin{subequations}
\label{q0p1explicit}
\begin{align}
\matr\mu_u&=\Pm_u\left(\matr \Lambda^{-1}+\frac{1}{\sigma^2}\Am^\dagger\Am\right)^{-1}\left(\matr\Lambda^{-1}\fv+\frac{1}{\sigma^2}\Am^\dagger\yv\right)\nonumber \\
&=\fv_u +\lambda_{u}\Am_{u}^\dagger\left(\sigma^2\Id_L+ \sum_{u'}\lambda_{u'}\Am_{u'}\Am_{u'}^\dagger\right)^{-1} \left(\yv-\Am\fv\right)\\
\sigma_u^2&=\left\langle\Pm_u\left(\matr \Lambda^{-1}+\frac{1}{\sigma^2}\Am^\dagger\Am\right)^{-1}\Pm_u^\dagger\right\rangle \nonumber \\
&=\lambda_{u}-\lambda_u^2 \left\langle\Am_u^\dagger \left(\sigma^2\Id_L+ \sum_{u'}\lambda_{u'}\Am_{u'}\Am_{u'}^\dagger\right)^{-1}\Am_u\right\rangle\;, 
\end{align}  
\end{subequations}
where we define the diagonal matrix
$\matr\Lambda\doteq{\rm diag}(\lambda_{1}\matr 1_{N_1};\lambda_{2}\matr 1_{N_2};\cdots;\lambda_{U}\matr 1_{N_U})$. 

Moving forward, we introduce the posterior mean under additive white Gaussian noise observation $\eta_u(\rv;\tau) \doteq \mathbb E[\matr\sv_u\vert \matr \rv=\matr\sv_u+\sqrt{\tau}\zv]$ with $\zv\sim\mathcal{CN}(\matr  0,\Id_{N_u})$ independent of $\sv_u$. 
Let $\eta_{u}'$ denote the $N_u\times N_u$ Jacobian matrix of $\eta_u$. Then, means and variances under $p_0 q_1$ can be written as
\begin{subequations}
\label{second-EC}
\begin{align}
\mathbb E[\sv_u]_{p_0\cdot q_1} &= \eta_u(\rv_u;\tau_u)  \label{meanp0q1} \\
\frac{1}{N_u}\mathbb E[\Vert\sv_u-\mathbb E[\sv_u]_{p_0\cdot q_1}\Vert^2]_{p_0\cdot q_1}& %= \frac{1}{N_u}\mathbb E[\norm{\sv_u-\mathbb E[\sv_u\vert \rv_u]}^2\mid \rv_u=\matr\sv_u+\sqrt{\tau_u}\zv ]
=\tau_{u}\langle \eta_{u}'(\rv_u;\tau_u) \rangle\;.\label{variancep0q1}
\end{align}
\end{subequations}
 Here, the latter step follows from the standard
``linear-response'' argument~\cite{opper2001adaptive} as
%: define the partition function $Z(\boldsymbol{\gamma})\doteq \int p_0(\mathbf{s})\,{\rm e}^{-\rho\|\mathbf{s}\|^2+2{\rm Re}(\mathbf{s}^\dagger \boldsymbol{\gamma})}\,{\rm d}\mathbf{s}$; then, we have
\begin{align}
%\frac{\partial \ln Z(\boldsymbol{\gamma})}{\partial \boldsymbol{\gamma}^*}
 %   &= \mathbb{E}[\mathbf{s}\mid \boldsymbol{\gamma} = \sqrt{\rho}\,\mathbf{s} + \mathbf{z}]
  %  \label{eq:wirtinger1}\\
\frac{\partial}{\partial \boldsymbol{\rv}}
    \mathbb{E}[\mathbf{s}_u\mid \boldsymbol{\rv} = \,\mathbf{s}_u +\sqrt{\tau} \mathbf{z}]
    &= \tau\mathbb{E}\!\left[
        \bigl(\mathbf{s}_u - \mathbb{E}[\mathbf{s}_u \mid \boldsymbol{\rv}]\bigr)
        \bigl(\mathbf{s}_u - \mathbb{E}[\mathbf{s}_u \mid \boldsymbol{\rv}]\bigr)^{\dagger}
    \,\middle|\,
        \boldsymbol{\rv} = \,\mathbf{s}_u + \sqrt{\tau}\mathbf{z}
    \right].\label{eq:wirtinger2}
\end{align} 

Using \eqref{EPrule} in the second EC equality in \eqref{EC} we obtain
\begin{subequations}
\label{conditionp0q1}
 \begin{align}
\frac{1}{\lambda_u}&=\frac{1}{\tau_{u}\langle \eta_{u}'(\rv_u;\tau_u) \rangle}-\frac{1}{\tau_u}\\
\frac{\fv_u}{\lambda_u}&=\frac{\eta_u(\rv_u;\tau_u)}{\tau_{u}\langle \eta_{u}'(\rv_u;\tau_u) \rangle}-\frac{\rv_u}{\tau_u}\;.
\end{align}
\end{subequations}
The complete set of EC equations is given by \eqref{conditionq0p1}, \eqref{q0p1explicit}, and \eqref{conditionp0q1}. 

\subsection{Iterative solution of the fixed-point equations}

The basic MU-OAMP algorithm is obtained as the iterative solution of the  EC seen as a system of fixed-point equations.  
Starting with initializations $\{\fv_u^{(1)}\}$ and $\{\lambda_u^{(1)}\}$, 
the algorithm consists of the following vector updates for $t=1,2,\cdots$ 
\begin{subequations}
\label{algv}
 \begin{align}
\matr\Sigma^{(t)} &= \left( \sigma^2\Id_{L} + \sum_{u} \lambda_{u}^{(t)} \Am_u \Am_u^\dagger\right)^{-1}\label{filter} \\
\rv_u^{(t)} &=\fv_u^{(t)}+\frac{\Am_u^\dagger\matr\Sigma^{(t)}}{\langle \Am_u^\dagger \matr\Sigma^{(t)} \Am_u \rangle} \left( \yv - \sum_{u'}\Am_{u'}\fv_{u'}^{(t)}\right), \\
\fv_u^{(t+1)}&=\frac{\tau_u^{(t)}\eta_u(\rv_u^{(t)};\tau_u^{(t)})-\nu_u^{(t)}\rv_u^{(t)}}{\tau_u^{(t)}-\nu_u^{(t)}},
\label{ziopirillo}
\end{align}   
\end{subequations}
and scalar updates
\begin{subequations}
\label{algs}
\begin{align}
\tau_{u}^{(t)}&=F_u(\matr \lambda^{(t)})\\
\nu_u^{(t)}&=\tau_{u}^{(t)}\langle \eta_{u}'(\rv_u^{(t)};\tau_u^{(t)}) \rangle\label{empirical}\\
\lambda_{u}^{(t+1)}&=\frac{\tau_{u}^{(t)}\nu_{u}^{(t)}}{\tau_{u}^{(t)}-\nu_{u}^{(t)}} \; ,
\end{align}
\end{subequations}
where $\boldsymbol{\lambda}^{(t)}\doteq(\lambda_1^{(t)},\lambda_2^{(t)},\ldots,\lambda_U^{(t)})$
and where we define, for $\mathbf{x}\in[0,\infty)^U$,
\begin{align}
    F_{u}(\mathbf{x})
    &\doteq \frac{1}{\left\langle\mathbf{A}_u^{\dagger}
        \left(\sigma^2\mathbf{I}
            + \sum_{u'} x_{u'}\mathbf{A}_{u'}\mathbf{A}_{u'}^{\dagger}
        \right)^{-1}
        \mathbf{A}_u\right\rangle} - x_u\;. \label{fNu}
\end{align}

%%%%%%%%%%%%%%%%%%%%%%%%%%%%%%%%%%%%%%%%%%%%%%%%%%%%%%%%%%%%%%%%%%%%%%%%%
\section{The MU-OAMP Framework}\label{main_result}

In the previous section, we derived a fixed-point algorithm from the EC
moment-matching rationale under the block-constant restriction of the parameters (see \eqref{unified}). 
In general, the posterior-mean {\em denoiser}
$\eta_u(\mathbf{r};\tau)=\mathbb{E}\!\left[\mathbf{s}_u \mid \mathbf{r} =
\mathbf{s}_u + \sqrt{\tau}\,\mathbf{z}\right]$ may be difficult to compute. 
Moreover, the matrix inversion involved in $\boldsymbol{\Sigma}^{(t)}$ in~\eqref{filter} may constitute 
a computational bottleneck. To address these issues, we propose a generalized algorithm (whose motivation will follow from the subsequent high-dimensional analysis) in which
\begin{itemize}
    \item[(i)]  the posterior-mean denoiser $\eta_u(\mathbf{r}, \tau)$ is replaced by
                a generic sufficiently well-behaved denoiser function;
    \item[(ii)] the matrix sequence (in $t$) $\boldsymbol{\Sigma}^{(t)}$ defined
                in~\eqref{filter} is replaced by a generic positive-definite matrix sequence
$\matr{\Sigma}^{(t)}\geq \matr 0$  independent of the Haar unitaries $\{\Om_u\}$ (see \eqref{svd}).
\end{itemize}
Specifically, our proposed algorithm is initialized with $\fv_u^{(1)} = \mathbb{E}[\sv_u]$ for each $u =1, \ldots, U$, and proceeds for $t = 1,2,\dots,T$ according to
\begin{subequations}\label{oamp0}
\begin{align}
\rv_u^{(t)} &= \fv_u^{(t)}+\frac{\Am_u^\dagger \matr{\Sigma}^{(t)}}{\langle \Am_u^\dagger \matr{\Sigma}^{(t)} \Am_u\rangle}
\left(\yv - \sum_{u'} \Am_{u'} \fv_{u'}^{(t)}\right) \\
\fv_u^{(t+1)} &= f_{u,t}(\rv_u^{(t)}), \label{sucamillo}
\end{align}
\end{subequations}
where we define
\[
f_{u,t}(\rv) \doteq 
\frac{\tau_u^{(t)}\,\eta_u(\rv;\tau_u^{(t)}) - \nu_u^{(t)}\,\rv}
{\max(\epsilon_u,\tau_u^{(t)} - \nu_u^{(t)})}
\]
and where $\eta_u(\cdot;\cdot): \mathbb{C}^{N_u}\times \RR_{+} \to \mathbb{C}^{N_u}$ is a general denoiser, 
\begin{align*}
\epsilon_u\doteq\frac{\sigma^2}{\langle\Am_u^\dagger\Am_u\rangle} \frac{\sigma^2}{\sigma^2+\lambda_u^{(1)}\langle\Am_u^\dagger\Am_u\rangle}\;.
\end{align*}
In particular, we note that if $\eta_u$ is the posterior-mean denoiser, we have (see \eqref{epsilonbound})\vspace{-0.2cm}
\[\max(\epsilon_u,\tau_u^{(t)} - \nu_u^{(t)})=\tau_u^{(t)} - \nu_u^{(t)}\;\vspace{-0.2cm}\]
and \eqref{sucamillo} reduces to \eqref{ziopirillo}. 

For the scalar updates, we note that \eqref{algs} contains 
the empirical average of the posterior variances along the $u$-th vector (see \eqref{variancep0q1}). To obtain "smoother" (without extra concentration assumptions) finite-sample behavior of the algorithm dynamics, we replace such empirical average with a suitable ensemble expectation. The resulting recursive scalar update 
for arbitrary denoiser functions $\eta_u$ and $\matr{\Sigma}^{(t)}$ is referred to as state evolution (SE):
\begin{definition}[SE]\label{SE_new}
For $u = 1, \ldots, U$, let $\lambda_{u}^{(1)}=\frac{1}{N_u}\mathbb E[\norm{\sv_u-\mathbb E[\sv_u]}^2]$. 
Then, conditioned on $\{\Am_u\Am_u^\dagger\}$, the SE is recursively constructed for $t=1,2,...$ as
\begin{subequations}\label{SE_rec}
 \begin{align}
\Rm^{(t)}&=\sigma^2\Id_L+\sum_u\lambda_u^{(t)}\Am_u\Am_u^\dagger\label{rrm}\\
\tau_u^{(t)}&=\frac{\langle\Am_u^\dagger\matr\Sigma^{(t)}\Rm^{(t)}\matr\Sigma^{(t)}\Am_u\rangle}{\langle \Am_u^\dagger\matr\Sigma^{(t)}\Am_u \rangle^2}
-\lambda_u^{(t)} \label{taut}\\
\nu_{u}^{(t)}&=\tau_{u}^{(t)}\mathbb E[\langle \eta'_{u}(\sv_u+\sqrt{\tau_{u}^{(t)}}\zv;\tau_u^{(t)})\rangle]\\
\lambda_u^{(t+1)}&= \frac{1}{N_u}\mathbb E[\norm{\sv_u-f_{u,t}(\sv_u+\sqrt{\tau_u^{(t)}}\zv)}^2],
\end{align}  
\end{subequations}
where expectations are w.r.t. the mutually independent random vectors $\sv_u \sim p_0$, $\zv \sim \Cc\Nc(0,\Id_{N_u})$ and the expectations are taken conditioned on $\{\Am_u\Am_u^\dagger\}$. 
\end{definition}

%Note that the SE recursion does not depend on the Haar matrices $\{\Om_u\}$~in~\eqref{svd}. 
Next, we verify that by restricting the general dynamics in~\eqref{oamp0} to the
posterior-mean denoiser $\eta_u(\mathbf{r};\tau)$ and choosing
$\boldsymbol{\Sigma}^{(t)}$ as in~\eqref{filter}, we recover
\eqref{algv} and \eqref{algs} with the only difference that 
$\mathbb{E}[\langle \eta'_u(\mathbf{s}_u+\sqrt{\tau_u^{(t)}}\,\mathbf{z};\tau_u^{(t)})\rangle]$
appears instead of its algorithmic counterpart
$\langle \eta'_u(\mathbf{r}_u^{(t)};\tau_u^{(t)})\rangle$.  We refer to this case as the {\em Bayesian-optimal} setting, because in Section~\ref{replica_sect} we shall show, subject to the vaility of the replica ansatz, that for sufficiently large $1 \ll t \ll N_u$, this construction yields the \emph{Bayesian optimal solution} in the so-called algorithmic region \cite{barbier2020mutual}, where the RS ansatz admits a unique fixed point.

\begin{remark}[Bayesian Optimal Setting] \label{Bayesian_optimal}
Let us consider the choice
\begin{equation}\label{filter_lmmse}
    \boldsymbol{\Sigma}^{(t)}=
    \left(\sigma^{2}\mathbf{I}_{L}
        + \sum_u \lambda_{u}^{(t)}\,
          \mathbf{A}_{u}\mathbf{A}_{u}^{\dagger}
    \right)^{-1},
\end{equation}
where $\lambda_{u}^{(t)}$ are defined in the SE recursion \eqref{SE_rec}. Then, the update
rule in~\eqref{taut} simplifies to
\begin{equation}
    \tau_{u}^{(t)} = F_{u}(\boldsymbol{\lambda}^{(t)})\;.
\end{equation}
Moreover, let us set $\eta_{u}(\mathbf{r};\tau)=
\mathbb{E}[\mathbf{s}_u\mid \rv = \mathbf{s}_u+\sqrt{\tau}\,\mathbf{z}]$, so that,
conditioned on $\tau_u^{(t)}$, we have
\begin{align}
    \nu_u^{(t)}
    &= \frac{1}{N_u}\,\mathrm{mmse}
       \!\left(\mathbf{s}_u \mid \mathbf{s}_u + \sqrt{\tau_u^{(t)}}\,\mathbf{z}\right).
\end{align}
We also note that
\begin{align}
    \tau_u^{(t)}-\nu_u^{(t)}
    &\overset{(a)}{\geq}
     \frac{\bigl(\tau_u^{(t)}\bigr)^2}{\tau_u^{(t)}+\lambda_u^{(1)}}\overset{(b)}{\geq}
     \frac{\bigl(F_{u}(\mathbf{0})\bigr)^2}{F_{u}(\mathbf{0})+\lambda_u^{(1)}}
     \equiv \epsilon_u\;,\label{epsilonbound}
\end{align}
where (a) follows directly from~\cite[Lemma~2]{bustin2012mmse}, and
(b) follows from the fact that the mapping $F_u$ is monotone in the sense that $F_{u}(\mathbf{x}) \leq
F_{u}(\mathbf{x}')$ for any $\mathbf{0} \leq \mathbf{x} \leq \mathbf{x}'$
(i.e., $0 \leq x_u \leq x_u'$ for all $u$) (see
Appendix~\ref{proof_rem_contradiction}) and thus
\begin{equation}
\tau_u^{(t)} \geq F_{u}(\mathbf{0})\;.\label{tbound}
\end{equation}
In summary, with these choices of the denoiser $\eta_u$ and the matrix filter $\Sigmam^{(t)}$ the SE in~\eqref{SE_rec} simplifies to
\begin{subequations}\label{SE_optimal}
\begin{align}
    \tau_{u}^{(t)}
    &= F_{u}(\boldsymbol{\lambda}^{(t)})\\
    \nu_u^{(t)}
    &= \frac{1}{N_u}\,\mathrm{mmse}
       \!\left(\mathbf{s}_u \mid \mathbf{s}_u
               + \sqrt{\tau_u^{(t)}}\,\mathbf{z}\right)\\
    \lambda_{u}^{(t+1)}
    &= \frac{\tau_{u}^{(t)}\,\nu_{u}^{(t)}}
            {\tau_{u}^{(t)}-\nu_{u}^{(t)}}\;.
\end{align}
\end{subequations}
\end{remark} 

In passing we note that, in the case of \emph{approximate} statistical symmetry among the users, so that 
$\lambda_{u}^{(t)} \approx \lambda_{u'}^{(t)}$ for all $u\neq u'$, 
we may consider the (conventional) OAMP-type filtering
\begin{equation}
\label{filter_oamp}
\matr{\Sigma}^{(t)} =
   \big(\sigma^2\Id_L+{\lambda}^{(t)}\Am\Am^\dagger\big)^{-1} \quad {\lambda}^{(t)} \doteq \frac{1}{U}\sum_{u'\leq U}\lambda_{u'}^{(t)}\;
\end{equation}
where $\Am =[\Am_1, \Am_2, \ldots, \Am_U]$ and the terms $\lambda_u^{(t)}$ are computed from the SE in Definition~\ref{SE_new}. Thus, the complexity of matrix inversion at every iteration
can be avoided by precomputing, offline, the spectral decomposition of $\Am\Am^\dagger$. 
However, this filtering choice may yield significantly degraded performance as the statistical 
asymmetry among users increases. Nevertheless, the high-dimensional analysis of the algorithm remains valid for any choice of the filter $\boldsymbol{\Sigma}^{(t)}$. 

%%%%%%%%%%%%%%%%%%%%%%%%%%%%%%%%%%%%%%%%%%%%%%%%%%%%%%%%%%%%%%%
\subsection{The High-Dimensional Analysis}
\label{o1_notation}
We analyze the dynamics of the proposed algorithm in terms of the following notion of concentration inequalities with $\mathcal L_p$ norms  $\Vert \cdot\Vert_{\mathcal L^p}\eqdef (\mathbb E\Vert \cdot\Vert^p)^\frac {1}{p}$ in \cite{cakmakit25}:  For sequences (in $L$ or $N_u$) of a RV $\matr\delta\in \CC^{d}$ where $d$ is arbitrary (e.g. $d=1$, $d=N_u$, etc.) we write 
	\begin{equation}
		\matr\delta=\Op{1} ~~\text{if}~~\Vert \matr\delta\Vert_{\mathcal L^p}\leq C_p
        \label{opnotation} 
	\end{equation}
for all $p\in \NN$ and some constants $C_p$.  Furthermore, we write for any deterministic $\kappa>0$ (e.g. $\kappa=L$, $\kappa=1$, $\kappa=1/\sqrt{N_u}$, etc.)
\begin{align}
\matr\delta=\Op{\kappa} \quad \text{if}\quad \frac{1}{\kappa}{\matr\delta}=\Op{1}.
\end{align}

Since the notion of $\Op{1}$ is not very common, we highlight several of its key aspects. While sub-Gaussian or sub-exponential random variables \cite{vershynin2018high}  are commonly used to derive strong non-asymptotic bounds \cite{rush2018,rush24}, the $\Op{1}$ notion allows for a useful relaxation that includes a broader class of distributions exhibiting heavy exponential tails. For instance, if $A$ is a sub-Gaussian random variable, then $A^C = \Op{1}$ for any (large) constant $C$. Distributions of this type fall into the class of sub-Weibull distributions; see \cite{vladimirova2020sub} for a detailed discussion. Furthermore, we note that the notion of $\Op{1}$ is (slightly) stronger than the so-called \emph{stochastic domination} $\mathcal O_\prec(1)$ \cite[Definition 6.4]{erdHos2017dynamical} (see also the equivalence given in \cite[Lemma 10.1]{erdHos2017dynamical}), which is now a widely used notion in random matrix theory.

\begin{assumption}[Model Assumptions]\label{as1}
Consider the observation model in \eqref{eq:system_model}. 
\begin{itemize}
\item [(i)]  The scaling parameters are $L,N_1,N_2,\cdots,N_U$ while $L/N_{u}=\Op{1}$ for all $u$. The number of users $U$, the number of iterations $T$, and the noise variance $\sigma^2$ are independent of $L$. 
\item[(ii)] Each \( \Om_u \) in \eqref{svd} is arbitrary and drawn from a Haar unitary ensemble. 
\item [(iii)] Let $\frac{1}{\sqrt N_u}\norm{\sv_u}=\Op{1}$,  $\Vert\Am_u\Vert_{2}=\Op{1}$, $\|\matr\Sigma^{(t)}\|_2 = \Op{1}$, $(\langle\Am_u^\dagger\matr\Sigma^{(t)}\Am_u \rangle)^{-1}=\Op{1}$, and $(\tau_u^{(t)})^{-1}=\Op{1}$ for all $(u,t)\in[U]\times [T]$  with $\Vert (\cdot) \Vert_2$ denoting the largest singular value of the matrix in the argument.
\end{itemize}
\end{assumption}
We consider a generic denoiser mapping and a general filtering matrix $\matr \Sigma^{(t)}$. Although Assumption (iii) may appear somewhat implicit in this general setting, the following remarks demonstrate that it can easily be satisfied in practical scenarios.

\begin{itemize}
    \item For channel matrices with $\|\Hm_u\|_2 = \Op{1}$ and  
by using random precoding with $\|\Xim_u\|_2 = \Op{1}$, the submultiplicativity 
of the spectral norm, together with the product rule for $\Op{1}$ (i.e., $XY = \Op{1}$ if 
$X = \Op{1}$ and $Y = \Op{1}$), implies $\|\Am_u\|_2 \leq \|\Hm_u\|_2 \, \|\Xim_u\|_2 = \Op{1}$.

    \item Let $\matr\Sigma^{(t)} $ be of the form $\matr\Sigma^{(t)} = (\sigma^2 \Id + \Xm)^{-1}$, where $\Xm \geq \matr 0$. Then, 
    $\|\matr\Sigma^{(t)}\|_2 \leq 1/{\sigma^2}$ Moreover, we have ${\langle \Am_u^\dagger \matr\Sigma^{(t)} \Am_u  \rangle}\geq (\sigma^2+\norm{\Xm}_2)^{-1}\langle\Am_u^\dagger\Am_u\rangle $.
 Thus, given $1/\langle \Am_u^\dagger \Am_u \rangle = \Op{1}$ and $\|\Xm\|_2 = \Op{1}$, we have $1/\langle  \Am_u^\dagger \matr\Sigma^{(t)} \Am_u\rangle=\Op{1}$.

 \item Since $\tau_u^{(t)}$ represents the noise variance in the decoupled input-output
$(\sv_u, \sv_u + \sqrt{\tau_u^{(t)}}\,\zv)$ system, we bound $1/\tau_u^{(t)}$ by restricting 
attention to the Bayesian-optimal setting in Remark~\ref{Bayesian_optimal}. Then, we get from \eqref{tbound} that $\frac{1}{\tau_u^{(t)}} \leq  
\frac{\langle \Am_u^\dagger \Am_u \rangle}{\sigma^2}=\Op{1}$\;.
\end{itemize}
For readability, we defer additional technical assumptions (Assumption~\ref{asm_psd} and \ref{add_asm}) to the next subsection.

We now present our main result: an explicit high-dimensional decomposition of $\rv_u^{(t)}$ in terms of the true signal $\sv_u$
and an additive Gaussian process, whose covariance structure is governed by the two-time generalization of the SE:

\begin{definition}[Two-time SE]\label{defSE}
Conditioned on $\{\Am_u\Am_u^\dagger\}$, we recursively construct $\mathcal C_u^{(t,s)}$ for $1\leq s,t\leq T$ accordingly,
\begin{subequations}
\begin{align}
\Rm^{(t,s)}&=\sigma^2\Id_L+\sum_u\widetilde{\mathcal C}_{u}^{(t,s)}\Am_u\Am_u^\dagger
\\
\mathcal C_{{u}}^{(t,s)}&=\frac{{\langle \Am_u^\dagger\Sigmam^{(s)} \Rm^{(t,s)}\Sigmam^{(t)}\Am_u\rangle}}{\langle \Am_u^\dagger \matr\Sigma^{(t)}\Am_u \rangle\langle \Am_u^\dagger \matr\Sigma^{(s)}\Am_u \rangle}
-\widetilde{\mathcal C}_{u}^{(t,s)}\\
 \widetilde{\mathcal C}_{u}^{(t+1,s+1)}&=\mathbb E\left[\left\langle (f_{u,t}(\sv_u + \matr\phi_u^{(t)})-\sv_u),(f_{u,s}(\sv_u + \matr\phi_u^{(s)})-\sv_u)\right\rangle\right]\label{Cuts}
\end{align}
\end{subequations}
where expectations are taken over $\sv_u$ and the RVs
\[[\phiv_u^{(t)},\phiv_u^{(s)}]=[\zv,\zv']\left[ \begin{array}{cc}
		{\mathcal C}_u^{(t,t)}&{\mathcal C}_u^{(t,s)}\\
		{\mathcal C}_u^{(s,t)}& {\mathcal C}_u^{(s,s)}
	\end{array}\right]^{\frac 1 2}  \]
and $\zv,\zv'\sim\mathcal{CN}(\matr 0,\Id_{N_u})$ are independent and are also independent of $\sv_u$. Moreover, we set $ \widetilde{\mathcal C}_{u}^{(1,1)} =\lambda_u^{(1)}$ and for $t>0$
\[\widetilde{\mathcal C}_{u}^{(1,t+1)} = \mathbb{E}\left[\left\langle (\mathbb{E}[\sv_u] - \sv_u),(f_{u,t}(\sv_u + \matr{\phi}_u^{(t)}) - \sv_u) \right\rangle\right]\;.\]
Note that, by construction we have $\mathcal C_u^{(t,t)}=\tau_{u}^{(t)}$ and $\widetilde{\mathcal C}_u^{(t,t)}=\lambda_u^{(t)}$. For further intuition on how the two-time SE construction arises, we refer to Definition~\ref{def-effective} in Section~\ref{Pth1}.
\end{definition}

In passing, we also note that while the two-time-type general SE formalism is not commonly encountered in the literature, it is particularly useful---and, in fact, necessary---for analyzing contraction properties of the dynamics, such as $\norm{\fv_u^{(t)} - \fv_u^{(s)}}$, as well as related quantities. For further discussion and applications, we refer interested readers to~\cite{Bayati11,Berthier20,loureiro2021learning,gerbelot2022asymptotic,Cakmakisit24}.

\begin{theorem}[Main Result]\label{th1}
Let Assumptions~\ref{as1}--\ref{add_asm} hold. Let the denoisers
$
\eta_{u}(\mathbf{r};\tau):\mathbb{C}^{N_u}\times \mathbb{R}_{+}\to \mathbb{C}^{N_u}$
be differentiable and uniformly Lipschitz (w.r.t. $\mathbf{r}$), with Lipschitz constant $\mathrm{Lip} = O(\tau^{\pm p})$ and $\frac{1}{\sqrt{N_u}}\norm{\eta_u(\matr 0,\tau)}= O(\tau^{\pm p})$ for any constant $p \geq 0$. Then, we prove in Section~\ref{Pth1} that
\begin{align}
\mathbf{r}_{u}^{(t)} = \mathbf{s}_u + \boldsymbol{\phi}_{u}^{(t)} + \Op{1}
\label{fsres}
\quad (u,t)\in[U]\times[T].
\end{align}
Here, each $\{ \boldsymbol{\phi}_{u}^{(t)}\}_{t\leq T}$ is independent of $\{\sv_u\}$. Furthermore, for each $u = 1,\ldots, U$, there exist an independent Gaussian random matrix $\Zm_u\in\CC^{N_u\times T}$ with $(Z_u)_{ij}\sim_{\text{i.i.d.}}\mathcal{CN}(0,1)$ such that
\begin{equation}
[\phiv_u^{(1)},\phiv_u^{(t)},\cdots,\phiv_u^{(T)}]=\Zm_u (\mathcal C_u^{(1:T)})^{\frac{1}{2}}\;.
\end{equation}
Here, $\mathcal{C}_u^{(1:T)}$ denotes the $T \times T$ matrix whose $(t,s)$-indexed entries $\mathcal{C}_u^{(t,s)}$ are given as in \eqref{Cuts}. 
%Hence, conditioned on $\mathcal C_u^{(1:T)}$, the dynamics $\{\matr{\phi}_{u}^{(t)}\}_{t \in [T]}$ is a discrete-time zero-mean Gaussian process with i.i.d. components ${\matr\phi}_{u}^{(t)}\sim_{\text{i.i.d.}}\Phi_{u}^{(t)}$ with $\mathbb E[(\Phi_{u}^{(t)})^*\Phi_{u}^{(s)}]=\mathcal C_{u}^{(t,s)}$. 
\end{theorem}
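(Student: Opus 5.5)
The plan is to follow the route announced in the introduction. First, remove the Haar matrices $\{\Om_u\}$ from the dynamics by a Householder-Dice-type sequential construction, which yields a statistically equivalent recursion driven by fresh Gaussian vectors. Then, show by induction on $t$ that this recursion has the claimed Gaussian decomposition, using the $\Op{\cdot}$ concentration calculus of Appendix~\ref{preliminariesop}.

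\textbf{Reformulation in the $\Om_u$ coordinates.} I would write $\Am_u=\Bm_u\Om_u$ with $\Bm_u\doteq\Hm_u\mathbf U_u\mathbf\Lambda_u$, which is independent of $\Om_u$. I would then track the error vectors $\hv_u^{(t)}\doteq \fv_u^{(t)}-\sv_u$ and $\qv_u^{(t)}\doteq\rv_u^{(t)}-\sv_u$. Substituting $\yv=\sum_u\Am_u\sv_u+\nv$ gives
\[
\qv_u^{(t)}=\hv_u^{(t)}-\frac{\Om_u^\dagger\Bm_u^\dagger\Sigmam^{(t)}}{\langle\Am_u^\dagger\Sigmam^{(t)}\Am_u\rangle}\Big(\sum_{u'}\Bm_{u'}\Om_{u'}\hv_{u'}^{(t)}-\nv\Big),\qquad \hv_u^{(t+1)}=f_{u,t}(\sv_u+\qv_u^{(t)})-\sv_u .
\]
Here $\Om_u$ enters only through the products $\Om_u\hv_u^{(t)}$ and $\Om_u^\dagger(\cdot)$.

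\textbf{Householder Dice step.} Following \cite{lu2021householder}, I would generate $\Om_u$ lazily. At step $t$, conditioned on its action on the span of the previously used vectors $\{\hv_u^{(s)}\}_{s\le t}$ and $\{\Bm_u^\dagger\Sigmam^{(s)}(\cdot)\}_{s<t}$, the action on the new direction is a Haar-random rotation on the orthogonal complement. This rotation can be represented by a fresh Gaussian vector $\gv_u^{(t)}\sim\mathcal{CN}(\matr 0,\Id)$, normalized and projected off the previous span.

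The result is an equivalent dynamics in which $\Om_u\hv_u^{(t)}$ equals a Gram--Schmidt combination of earlier images plus $\|\hv_u^{(t),\perp}\|\,\gv_u^{(t)}/\|\gv_u^{(t)}\|$ projected. The analogous statement holds on the $\Om_u^\dagger$ side with $N_u$-dimensional fresh Gaussians. The projection and normalization corrections are finite-rank objects of size $\Op{1}$ in Euclidean norm, because $\langle \gv,\av\rangle=\Op{N^{-1/2}}\|\av\|/\sqrt N$ and $\|\gv\|^2/N=1+\Op{N^{-1/2}}$.

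\textbf{Inductive characterization.} The induction hypothesis at time $t$ is that $\qv_u^{(s)}=\phiv_u^{(s)}+\Op{1}$ for $s\le t$, where $\phiv_u^{(s)}$ is a linear combination of the fresh $N_u$-dimensional Gaussians independent of $\sv_u$. It also asserts that all normalized inner products $\langle\hv_u^{(s)},\hv_u^{(s')}\rangle$ equal $\widetilde{\mathcal C}_u^{(s,s')}+\Op{N^{-1/2}}$.

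For the inner-product part, I would use the Lipschitz property of $f_{u,s}$, whose Lipschitz constant is $\Op{1}$ thanks to the polynomial bound in $\tau$, $(\tau_u^{(t)})^{-1}=\Op{1}$ and the floor $\epsilon_u$ in the denominator. With this, $f_{u,s}(\sv_u+\qv_u^{(s)})=f_{u,s}(\sv_u+\phiv_u^{(s)})+\Op{1}$, and a Gaussian Poincaré/concentration inequality for Lipschitz functions gives concentration of the normalized inner products around \eqref{Cuts}.

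The divergence-free structure is the key point. The coefficient $\nu_u^{(t)}=\tau_u^{(t)}\mathbb E\langle\eta_u'\rangle$ makes $\mathbb E\langle \zv, f_{u,t}(\sv_u+\phiv_u^{(t)})\rangle=0$ by Stein's lemma. This cancels the Onsager-type projection terms of $\hv_u^{(t+1)}$ onto the past Gaussian directions. Likewise, the normalization $\langle\Am_u^\dagger\Sigmam^{(t)}\Am_u\rangle^{-1}$ makes the LMMSE-side map trace-free, so its projection onto $\hv_u^{(t)}$ vanishes up to $\Op{N^{-1/2}}$.

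\textbf{Covariance and the multiuser sum.} The remaining terms are the Gaussian images of the cross-user and noise terms, $\Om_u^\dagger\Bm_u^\dagger\Sigmam^{(t)}\big(\sum_{u'}\Bm_{u'}\gv_{u'}\|\hv_{u'}\|/\sqrt{N_{u'}}+\nv\big)$. Conditioned on $\{\Am_u\Am_u^\dagger\}$, they are Gaussian with covariance given by trace formulas $\langle\Am_u^\dagger\Sigmam^{(s)}\Rm^{(t,s)}\Sigmam^{(t)}\Am_u\rangle$, after concentration of quadratic forms via Hanson--Wright in $\Op{\cdot}$ form. Subtracting the self-term $\widetilde{\mathcal C}_u^{(t,s)}$ produced by the trace-free cancellation yields $\mathcal C_u^{(t,s)}$, which closes the induction. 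Independence across users follows because distinct $\Om_u$ generate distinct fresh Gaussians. Finally, I would absorb the Gram--Schmidt coefficients into $\Zm_u(\mathcal C_u^{(1:T)})^{1/2}$, using exact Gaussian rotation invariance to replace the random coefficients by their deterministic limits at $\Op{1}$ cost in norm.

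\textbf{Main obstacle.} I expect the hardest step to be controlling the cross-user coupling. In the $\Om_u^\dagger$ image, the vector $\Bm_u^\dagger\Sigmam^{(t)}\Bm_{u'}\Om_{u'}\hv_{u'}$ mixes the Haar matrices of different users. I must show that its component along previously revealed directions of $\Om_u$ is negligible, and that the various $\Op{\cdot}$ errors do not compound beyond constants depending on $T$ and $U$. My first attempt would be to reveal all users' Haar actions jointly within each iteration, in a fixed order, conditioning on the full filtration. Each required projection would then be written as a normalized trace of a product of $\Op{1}$-norm deterministic-given-past matrices against fresh Gaussians.
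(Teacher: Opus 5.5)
Your plan is essentially the paper's proof: a per-user Gram--Schmidt (Householder-Dice) Haar-free representation driven by fresh Gaussians $\gv_u^{(t)},\widetilde\gv_u^{(t)}$ (Lemma~\ref{lemmaseq}), followed by an induction showing that the Gram--Schmidt/Cholesky coefficients concentrate. In that induction, the divergence-free property of $f_{u,t}$ and the trace-free identity $\langle \Id-\Mm_u^{(t)}\Qm_u\rangle=0$ remove the Onsager-type projections, Lemma~\ref{conprod} produces $\mathcal C_u^{(t,s)}$, and the cross-user terms $\Mm_u^{(t)}\Qm_{u'}\qv_{u'}^{(t)}$ are harmless, exactly as you suspect, because they are built from Gaussians independent of $\gv_u^{(s)}$.

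Two hypotheses you never invoke must enter explicitly. First, Gaussian (pseudo-)Lipschitz concentration only concentrates around $\mathbb E[\cdot\mid\sv_u]$, so Assumption~\ref{add_asm} is needed to reach the unconditional quantities in \eqref{Cuts} and to make the Stein/divergence-free cancellation hold conditionally on the non-separable $\sv_u$. Second, replacing the random Gram--Schmidt coefficients by deterministic ones requires the uniform positive definiteness of Assumption~\ref{asm_psd}, through the Cholesky perturbation bound of Lemma~\ref{lemchol1}.
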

The proof of Theorem \ref{th1} is deferred to Section \ref{Pth1}. 
We note that the result \eqref{fsres} is non-asymptotic and, in particular, it has the following asymptotic implication: for any \emph{small} $\epsilon > 0$ (independent of $L$) 
\begin{equation}
L^{-\epsilon}\Vert \rv_{u}^{(t)}-(\sv_u+\matr\phi_{u}^{(t)}) \Vert  \overset{\text{a.s.}}{\rightarrow} 0 \quad \text{for}\quad L\to\infty\;.\label{ascon}
\end{equation}  
Indeed, the definition of $\mathcal{O}(L^{-\epsilon})$ and Markov's inequality yield $\mathbb{P}(|\mathcal{O}(L^{-\epsilon})| \geq \epsilon') \leq \frac{C_p^p}{(\epsilon')^p} L^{-\frac{\epsilon p}{2}}$ for any $\epsilon'> 0$ and $p \in \mathbb{N}$ and choosing $p>2/\epsilon$ leads from Borel-Cantelli's lemma to \eqref{ascon}.

\begin{corollary}[Mean Square Error]\label{cor_mse}
Let the premises of Theorem~\ref{th1} hold. Then, for any $(u,t)\in[U]\times[T]$, we have
\begin{equation}
\frac{1}{N_u}\,\lVert \sv_u-\eta_{u}(\rv_u^{(t)};\tau_u^{(t)})
\rVert^2
=
\frac{1}{N_u}\,\mathbb{E}
[\lVert \sv_u-\eta_{u}(\sv_u+\sqrt{\tau_{u}^{(t)}}\,\zv;\tau_u^{(t)})\rVert^{2}]
+ \Op{L^{-\frac{1}{2}}}.
\end{equation}
where the expectation is taken conditioned on $\tau_u^{(t)}$ and $\zv\sim\mathcal{CN}(\matr 0,\Id_{N_u})$ is  independent of $\sv_u$.
\end{corollary}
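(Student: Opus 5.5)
We derive Corollary~\ref{cor_mse} from Theorem~\ref{th1} in two steps: replace $\rv_u^{(t)}$ by its Gaussian surrogate, then show that the resulting random squared error concentrates around its expectation. Write $\ev\doteq \rv_u^{(t)}-(\sv_u+\phiv_u^{(t)})$. By \eqref{fsres}, $\ev=\Op{1}$ in Euclidean norm. Since $\mathcal C_u^{(t,t)}=\tau_u^{(t)}$, Theorem~\ref{th1} also gives $\phiv_u^{(t)}\overset{d}{=}\sqrt{\tau_u^{(t)}}\,\zv$, independent of $\sv_u$ (conditionally on $\{\Am_u\Am_u^\dagger\}$).

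\textbf{Step 1 (replacement).} Set $\av\doteq\sv_u-\eta_u(\rv_u^{(t)};\tau_u^{(t)})$ and $\bv\doteq \sv_u-\eta_u(\sv_u+\phiv_u^{(t)};\tau_u^{(t)})$. By the Lipschitz property, $\norm{\av-\bv}\le \mathrm{Lip}\,\norm{\ev}$. The identity $\norm{\av}^2-\norm{\bv}^2=\langle \av-\bv,\av+\bv\rangle N_u$ and Cauchy--Schwarz then give
\[
\tfrac{1}{N_u}\bigl|\norm{\av}^2-\norm{\bv}^2\bigr|\le \tfrac{1}{N_u}\mathrm{Lip}\,\norm{\ev}\bigl(2\norm{\bv}+\mathrm{Lip}\norm{\ev}\bigr).
\]
We bound the factors as follows. $\mathrm{Lip}=O((\tau_u^{(t)})^{\pm p})$ is $\Op{1}$, because $(\tau_u^{(t)})^{-1}=\Op{1}$ by Assumption~\ref{as1}(iii) and $\tau_u^{(t)}=\Op{1}$; the latter follows from \eqref{taut} and the same assumption. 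Next, $\norm{\bv}\le \norm{\sv_u}+\norm{\eta_u(\matr 0;\tau)}+\mathrm{Lip}\norm{\sv_u+\phiv_u^{(t)}}=\Op{\sqrt{N_u}}$, using $\norm{\sv_u}/\sqrt{N_u}=\Op{1}$ and the fact that the chi-square norm of a Gaussian vector satisfies $\norm{\zv}/\sqrt{N_u}=\Op{1}$. By the product rule for $\Op{\cdot}$, the difference is $\Op{1/\sqrt{N_u}}$, and hence $\Op{L^{-1/2}}$ because $L/N_u=\Op{1}$.

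\textbf{Step 2 (concentration).} It remains to show
\[
\tfrac{1}{N_u}\norm{\bv}^2-\tfrac{1}{N_u}\mathbb E\norm{\bv}^2=\Op{N_u^{-1/2}},
\]
with the expectation over $(\sv_u,\zv)$ given $\tau_u^{(t)}$. Conditionally on $\sv_u$, the map $\zv\mapsto \norm{\bv}/\sqrt{N_u}$ is $\mathrm{Lip}\sqrt{\tau_u^{(t)}}/\sqrt{N_u}$-Lipschitz. Gaussian concentration (Appendix~\ref{preliminariesop}) therefore controls the fluctuation around the conditional mean, which is $\Op{N_u^{-1/2}}$ after multiplying by the $\Op{1}$ norm. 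Because $\tau_u^{(t)}$ is a function of $\{\Am_u\Am_u^\dagger\}$ alone, conditioning on it does not interfere with the Gaussian concentration argument.

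\textbf{Main obstacle.} The remaining fluctuation comes from $\sv_u$, i.e.\ we need $\mathbb E[\norm{\bv}^2\mid\sv_u]$ to concentrate around its full average. For non-separable priors this is not automatic. We expect it is exactly what the deferred Assumption~\ref{add_asm} supplies: a concentration property of Lipschitz functionals of $\sv_u$, or equivalently of $\sv_u+\sqrt{\tau}\zv$, at rate $N_u^{-1/2}$. We would invoke it in that form. If instead it is only stated for the SE quantities in \eqref{Cuts}, we would apply it to the function $g(\sv)=\mathbb E_{\zv}\norm{\sv-\eta_u(\sv+\sqrt{\tau}\zv;\tau)}^2/N_u$, which is $\Op{N_u^{-1/2}}$-Lipschitz in $\sv$ on the event that $\norm{\sv}\lesssim\sqrt{N_u}$. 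Combining Steps 1 and 2 yields the claim.
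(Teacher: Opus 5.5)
Your route matches what the paper has in mind, but the last step needs fixing. The paper gives no separate proof of the corollary; it follows from \eqref{fsres} with the machinery of the proof of Lemma~\ref{lemcon2}(i). That machinery is Gaussian pseudo-Lipschitz concentration in the noise given $\sv_u$ (Lemma~\ref{lemmakey}), followed by Assumption~\ref{add_asm} to remove the fluctuations of $\sv_u$.

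Your Steps~1 and~2 are fine apart from two small points:
\begin{itemize}
\item For complex vectors the identity is $\norm{\av}^2-\norm{\bv}^2=N_u\,\mathrm{Re}\langle\av-\bv,\av+\bv\rangle$. Your Cauchy--Schwarz bound is unaffected.
\item $\tau_u^{(t)}=\Op{1}$ does not follow from \eqref{taut} alone. It also needs $\lambda_{u'}^{(t)}=\Op{1}$ for all $u'$, which requires the induction over $t$ carried out in Lemma~\ref{lemmaft}.
\end{itemize}

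\textbf{The gap.} Assumption~\ref{add_asm} is not a concentration property of Lipschitz functionals of $\sv_u$. It postulates concentration of four specific quantities, for an arbitrary $2\times 2$ covariance of $(\phiv_1,\phiv_2)$:
\begin{itemize}
\item $\langle\sv_u,\sv_u\rangle$;
\item the conditional expectation given $\sv_u$ of $\langle\eta_{u,t}(\sv_u+\phiv_1),\eta_{u,s}(\sv_u+\phiv_2)\rangle$;
\item the conditional expectation given $\sv_u$ of $\langle\phiv_1,\eta_{u,t}(\sv_u+\phiv_1)\rangle$;
\item the conditional expectation given $\sv_u$ of $\langle\sv_u,\eta_{u,t}(\sv_u+\phiv_1)\rangle$.
\end{itemize}

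Your fallback, which uses that $g$ is $\Op{N_u^{-1/2}}$-Lipschitz in $\sv$, does not work on its own. Nothing is assumed about the law of $\sv_u$ beyond $\norm{\sv_u}/\sqrt{N_u}=\Op{1}$. For example, take $\eta_u\equiv\matr 0$ and $\sv_u=\xi\matr 1_{N_u}$ with $\xi$ Bernoulli. Then $g(\sv_u)=\langle\sv_u,\sv_u\rangle=\xi^2$ has gradient of size $\Op{N_u^{-1/2}}$ on the relevant ball, yet it does not concentrate. For the coded inputs of interest, concentration comes from code symmetry \eqref{equivalence}, not from Lipschitz structure.

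\textbf{How to close it.} Expand the conditional mean, with $\phiv=\sqrt{\tau_u^{(t)}}\,\zv$:
\begin{equation*}
\mathbb E\bigl[\tfrac{1}{N_u}\norm{\bv}^2\,\big|\,\sv_u\bigr]=\langle\sv_u,\sv_u\rangle-2\,\mathrm{Re}\,\mathbb E\bigl[\langle\sv_u,\eta_{u,t}(\sv_u+\phiv)\rangle\,\big|\,\sv_u\bigr]+\mathbb E\bigl[\langle\eta_{u,t}(\sv_u+\phiv),\eta_{u,t}(\sv_u+\phiv)\rangle\,\big|\,\sv_u\bigr].
\end{equation*}
Then apply the first, fourth and second items of Assumption~\ref{add_asm}. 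For the second item take $\phiv_1=\phiv_2=\phiv$, i.e.\ covariance $\tau_u^{(t)}\matr 1_2\matr 1_2^\top$, which is admissible since that covariance is arbitrary. Each term equals its full expectation up to $\Op{L^{-1/2}}$. Combined with your Steps~1 and~2, this gives the corollary.
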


\subsection{Additional Assumptions}
Next, we present the additional assumptions (i.e., Assumption~\ref{asm_psd} and Assumption~\ref{add_asm}) as the premises of Theorem~\ref{th1}. We will also mention how to bypass and relax these assumptions to have the following \emph{asymptotic} implication of Theorem~\ref{th1}: 
\begin{align}
\rv_{u}^{(t)}&\simeq \sv_u+\matr\phi_{u}^{(t)} \;.\label{asymres}
\end{align}
Here, for the sequences in $L$ of RVs $\av,\widehat\av \in \CC^{d}$, we write 
\begin{equation}
 \widehat \av\simeq \av \quad\text{if}\quad \lim_{N\to\infty}\frac{1}{\sqrt d}\Vert  \widehat\av-\av \Vert\overset{a.s.}{=} 0 \label{simeq}\;.
\end{equation}
Furthermore, we will denote a.s. bounded sequence by
\begin{equation}
A_{L}=\mathcal O_\infty(1)\quad\text{if}\quad\limsup_{N\to\infty}\vert A_L\vert\overset{a.s.}{<}\infty \;.\label{O1}
\end{equation}

\subsubsection{Positive definiteness of the two-time SE matrix}
By construction, we have (see Definition~\ref{def-effective})
\begin{equation}
{\mathcal C}_u^{(1:T)},\widetilde{\mathcal C}_u^{(1:T)}\geq \matr 0.\nonumber 
\end{equation}
\begin{assumption}\label{asm_psd}
There exists $\epsilon>0$ (independent of $L$) such that 
 $\widetilde{\mathcal C}_{u}^{(1:T)}>\epsilon\Id_T$.
\end{assumption}
The strict positive definiteness assumption is common in the finite-sample analysis of AMP/OAMP \cite{rush2018,rush24,reeves2025dimension}. In particular, such a condition is also used as a stopping criterion for the corresponding AMP/OAMP algorithms \cite{rush2018finite,rush24}. Next, we emphasize that, by leveraging the SE-perturbation idea \cite{javanmard2013state,Berthier20}, Assumption \ref{asm_psd} is not required for establishing the asymptotic decoupling principle \eqref{asymres}.
\begin{remark}\label{relax1}
If the notion of $\Op{1}$ in Assumption~\ref{as1}-(iii) is replaced by $\mathcal O_\infty(1)$, then Assumption \ref{add_asm} is no longer needed for the asymptotic decoupling principle in \eqref{asymres}. (See Appendix~\ref{relaxition}). 
\end{remark}

\subsubsection{Concentration Assumptions}
\label{sec:add_assumptions}
\begin{assumption}\label{add_asm}
For each triple $(u,t,s)$ we assume the following concentration properties: For  two jointly Gaussian RVs $\phiv_1,\phiv_2\in \CC^{N_u}$ independent of $\sv_u$ with $[\phiv_1^\top, \phiv_2^\top]^\top \sim \mathcal{CN}(\zerov,\mathcal{C}\otimes \Id_{N_u})$ and some arbitrary covariance matrix $\mathcal{C}\in \CC^{2 \times 2}$ we have
\begin{subequations}
\label{ass2}
\begin{align}
\langle \sv_u,\sv_u \rangle&=\mathbb E[\langle \sv_u,\sv_u\rangle]+ \Op{L^{-\frac 1 2}}\\
\mathbb{E}[\langle\eta_{u,t}(\sv_u+\phiv_1), \eta_{u,s}(\sv_u+\phiv_2)\rangle\vert \sv_u]&= \mathbb{E}[\langle\eta_{u,t}(\sv_u+\phiv_1), \eta_{u,s}(\sv_u+\phiv_2)\rangle]+\Op{L^{-\frac 1 2}}\\
\mathbb{E}[\langle \phiv_1, \eta_{u,t}(\sv_u+\phiv_1)\rangle\vert \sv_u]\
&=\mathbb{E}[\langle\phiv_1, \eta_{u,t}(\sv_u+\phiv_1)\rangle]
+ \Op{L^{-\frac 1 2}}.\\
\mathbb{E}[\langle\sv_u, \eta_{u,t}(\sv_u+\phiv_1)\rangle\vert \sv_u]&=\mathbb{E}[\langle\sv_u, \eta_{u,t}(\sv_u+\phiv_1)\rangle]
+ \Op{L^{-\frac 1 2}}.
\end{align}
where for brevity we set $\eta_{u,t}(\rv) \doteq \eta_{u}(\rv; \tau_u^{(t)})$, and where the expectations are taken implicitly conditioned on $\{\Am_u\Am_u^\dagger\}$.
\end{subequations}
\end{assumption}

\begin{remark}\label{relax2}
We should point out that, for the asymptotic decoupling principle in \eqref{asymres} 
to hold, Assumption~\ref{add_asm} can be relaxed.
In particular, in Appendix~\ref{relaxition} we show that if the notion of $\Op{1}$ in Assumption~\ref{as1}-(iii) is replaced by $\mathcal O_\infty(1)$, then 
concentration statements of the form $\widehat a = a + \Op{L^{-\frac{1}{2}}}$
in Assumption~\ref{add_asm} can be relaxed to almost sure limits as $\widehat a \simeq a$. For example, $
\langle \sv_u,\sv_u \rangle \simeq \mathbb E[\langle \sv_u,\sv_u\rangle] $ and similarly for the other relations in~\eqref{ass2}. 
\end{remark}

In the following, we consider several relevant scenarios in which Assumption~\ref{add_asm} is valid. 
For convenience, we fix an arbitrary triple $(u,t,s)\in[U]\times[T]\times [T]$ and omit the indices $(u,t,s)$, i.e., we write $\eta,\sv$ instead of $\eta_{u,t},\sv_u$, etc. 

\textbf{Uncoded Systems:}
Consider first the uncoded setting in which the entries of $\sv$ are drawn i.i.d.\ from a distribution $S$, i.e., $\sv\sim_{\text{i.i.d.}}S$ with $S=\Op{1}$.
In this case, any meaningful choice of the denoiser is separable:
\[
[\eta(\rv)]_i \equiv \eta(r_i), \qquad \forall i .
\]
The concentration results in \eqref{ass2} then follow from Lemma~\ref{conprod} in Appendix~\ref{preliminariesop}.

\textbf{Coded Systems:}
For coded systems, our analysis holds for \emph{geometrically uniform} codes~\cite{forney2002geometrically}. These are signal sets $\mathcal{C}\subset \CC^N$ that are generated as the orbit of any codeword $\ev \in \Cc$ under the action of the code symmetry group $\Gamma(\Cc)$, i.e., for which there exist a group of unitary matrices $\Gamma(\Cc) = \{\Pm_\sv\}$ such that any codeword $\sv \in \Cc$ is obtained as
$\sv = \Pm_{\sv}\,\ev$.
We require also that the denoiser is chosen to be {\em consistent} with the code geometric uniformity, 
in the sense that
\begin{equation}
\Pm_{\sv}\,\eta(\rv)
\equiv
\eta(\Pm_{\sv}\rv).
\label{per_pro}
\end{equation}
Since $\Pm_\sv$ is unitary we have $(\Pm_\sv\phiv_1,\Pm_\sv\phiv_2)\sim (\phiv_1,\phiv_2)$ and \eqref{per_pro} implies
\begin{subequations}
\label{equivalence}
\begin{align}
\langle \sv,\sv \rangle&=\langle \ev,\ev\rangle\\
\mathbb{E}[\langle\eta(\sv+\phiv_1), \eta(\sv+\phiv_2)\rangle\vert \sv]&= \mathbb{E}[\langle\eta(\ev+\phiv_1), \eta(\ev+\phiv_2)\rangle]\\
\mathbb{E}[\langle \phiv_1, \eta(\sv+\phiv_1)\rangle\vert \sv]
&=\mathbb{E}[\langle\phiv_1, \eta(\ev+\phiv_1)\rangle]\;\\
\mathbb{E}[\langle \sv, \eta(\sv+\phiv_1)\rangle\vert \sv]
&=\mathbb{E}[\langle\ev, \eta(\ev+\phiv_1)\rangle].
\end{align}
\end{subequations}
In other words, the analysis can be restricted to a fixed codeword (typically the codeword corresponding ot the all-zero message). Many codes of interest are geometrically uniform, see~\cite{forney2002geometrically}. 
Next, we address two relevant coding strategies that fulfill \eqref{per_pro} with permutation matrices. 

\textbf{(i) Sparse Regression Codes (SPARCs):}
In SPARCs~\cite{Rush17}, the vector $\sv\in\RR^{N=BQ}$ is divided into $B$ sections of dimension $Q$, and in each section $\sv[b]$ exactly one entry is nonzero, with its position being determined by an input bit sequence, which is chosen uniformly at random. Here,  for any vector $\sv\in\CC^{BQ}$ we write $\sv = [\sv[1]^\top,\sv[2]^\top,\ldots,\sv[B]^\top]^\top $ with each section $\sv[b]\in\CC^{Q}$.
Accordingly, $\{\Pm_{\sv}\}$ is a section-wise Cartesian product of all $Q$-shift matrices (yielding all possible $Q^B$ codewords). The posterior-mean denoiser is section-wise separable as 
\begin{align}
\eta(\rv)[b]&\equiv\mathbb{E}\!\left[\sv[b] \,\middle|\, \rv[b] = \sv[b] + \sqrt{\tau}\,\zv \right] \nonumber \\
&= \sqrt{BP_b} \text{softmax}\!\left(\frac{\sqrt{BP_b}}{\tau}\,\rv[b]\right), \label{eta1}
\end{align}
with $P_b$ standing for power allocation per section, and where 
\begin{equation}
\text{softmax}(x_1,\ldots,x_Q)_i
\doteq
\frac{e^{x_i}}{\sum_j e^{x_j}}.\nonumber 
\label{eq:softmax}
\end{equation}
Hence, the permutation property~\eqref{per_pro} and thereby the equivalence~\eqref{equivalence} hold.

\textbf{(ii) Sparse-Regression LDPC (SR-LDPC):}
A recently proposed concatenated scheme combining SPARCs with non-binary LDPC codes, termed sparse-regression LDPC (SR-LDPC)~\cite{Ebe25}, introduces dependencies among the nonzero positions of $\sv$ across sections.
In this case, the considered denoiser is given by
\[
\eta(\rv) \equiv \sqrt{BP_b}\eta_{\rm BP}(\eta_0(\rv))
\;.
\]
where $\eta_0(\rv)$ produces per-section posterior probabilities as 
\begin{equation}
\eta_0(\rv)[b]=\text{softmax}\!\left(\frac{\sqrt{BP_b}}{\tau}\,\rv[b]\right),
\end{equation}
and these probabilities serve as a priori information for a soft-in soft-out BP decoder denoted by $\eta_{\rm BP}$. For details, we refer to \cite{Ebe25}. While $\eta(\rv)$ is non-separable at all, it has been shown in \cite{Ebe25} and detailed in \cite[Appendix D]{fengler2025sparse} that the permutation property in \eqref{per_pro} also holds for this concatenated denoiser.

%%%%%%%%%%%%%%%%%%%%%%%%%%%%%%%%%%%%%%%%%%%%%%%%%%%%%%%%%%%%%%%%%%%%%%%%
\section{Bayesian Optimality via the RS Ansatz}\label{replica_sect}

In this section, we address the computation of two fundamental measures for the input-output system \eqref{eq:system_model}: the normalized mutual information (in nats) and the per-user MMSE defined as
\begin{align}
\mathcal{I} &\doteq\lim_{N_u,L\to\infty} \frac{1}{L}\mathbb E\Bigl[\ln \frac{p(\yv\mid \sv,\Am)}{p(\yv\mid\Am)} \mid \Am\Bigr]   \label{III}
\\
\nu_u &\doteq \lim_{N_u,L\to\infty}\frac{1}{N_u}\mathbb{E}
\Bigl[
\norm{\mathbf{s}_u - \mathbb{E}[\mathbf{s}_u \mid \mathbf{y},\Am]}^2\mid\Am]  \label{MMSSEE}
\end{align}
where we recall the joint (over users) model elements $\sv$ and $\Am$, and $\yv$ in \eqref{joint}. Our goal is to reveal the consistency between the decoupling principle — emerging in the high-dimensional analysis of the proposed algorithm — and the fundamental quantities defined above.

The rigorous computation of these quantities is an open problem. In the single-user case $U=1$,
rigorous results are available for the separable case $\mathbf{s}_1 \sim_{\mathrm{i.i.d.}}S$ and  $\Xim_1$ having i.i.d. zero-mean Gaussian entries \cite{barbier2016mutual,reeves2019replica}, as well as for the case of a generic right-unitarily invariant $\Xim_1$ in the high-SNR regime $\sigma^2 \to 0$ \cite{10272997}. In what follows, we employ the powerful - albeit mathematically non-rigorous - RS ansatz \cite{tanaka2002statistical,guo2003multiuser,Tulino13,vehkapera2016analysis,bereyhi2019statistical} to derive these asymptotic limits.

\textbf{The RS Ansatz:}
\textit{%\begin{remark}[RS prediction] %\label{RS-remark}
Let each $\alpha_u\doteq N_u/L$ be fixed as $L,N_u\to \infty$. Let each $\norm{\Am_u}_2$ be a.s. bounded as $N_u,L\to \infty$. Also, let $\lim_{N_u,L\to \infty}\frac{1}{N_u}\norm{\Am_u}_{\rm F}^2>0$. Let $\frac{1}{N_u}\,\mathbb{E}[\|\mathbf{s}_u\|^2]$ be bounded. For $\xv\doteq[x_1,x_2,\cdots,x_U]\in [0,\infty)^{U}$ and $x\in[0,\infty)$, let the following limiting functions exist:
\begin{subequations}
\label{limits}
\begin{align}
{\rm G}_{u}(\xv)&\doteq\lim_{N_u,L \to\infty}\frac{1}{N_u} \mathrm{tr}
\left(
\mathbf{A}_u^\dagger
\left(\sigma^2 \mathbf{I}_L + \sum_{u'}x_{u'} \mathbf{A}_{u'} \mathbf{A}_{u'}^\dagger\right)^{-1}
\mathbf{A}_u
\right)\\
{\rm M}_{u}(x)&\doteq \lim_{N_u\to\infty}\frac{1}{N_u}
\mathrm{mmse}
(\mathbf{s}_u \mid \mathbf{s}_u + \sqrt{x}\mathbf{z})\;
\end{align}    
\end{subequations}
where $\mathbf{z} \sim \mathcal{CN}(\mathbf{0}, \mathbf{I}_{N_u})$ is independent of $\mathbf{s}_u$. Then, the 
RS ansatz yields the following predictions:
\begin{align}
\mathcal{I}
&\overset{\rm RS}{=} \inf_{\Theta} \left[\lim_{N_u,L \to\infty}\frac{1}{L}
\ln\big\vert
\mathbf{I}_L+
\frac{1}{\sigma^2}
\sum_{u}
\lambda_u \mathbf{A}_u \mathbf{A}_u^\dagger\big\vert\nonumber\right.\\&\left.\qquad \qquad + \sum_u\alpha_u\left[\lim_{N_u\to\infty}\frac{1}{N_u}\mathcal{I}\bigl(\mathbf{s}_u; \mathbf{s}_u + \sqrt{\tau_u}\mathbf{z}\bigr)-\ln(1+\frac{\lambda_u}{\tau_u})\right]\right]
\label{rs_result}
\end{align}
where the infimum is defined over the set
\begin{align}\label{fixed}
\Theta\doteq\left\{ \matr \lambda \geq \matr 0,\matr\tau>\matr 0:~
\tau_u
=\frac{1}{{\rm G}_u(\matr \lambda)}-\lambda_u ,~\lambda_u
= \frac{\tau_u {\rm M}_u(\tau_u)}{\tau_u-{\rm M}_u(\tau_u)}\;,~ \forall u
\right\}\;.
\end{align}
Moreover, let $(\matr\lambda^\star,\matr\tau^\star)$ denote the global minimum solution in \eqref{rs_result}. Then, we have 
\begin{equation}
\nu_u
\overset{\rm RS}{=}{\rm M}_u(\tau_u^\star)\;\label{mmse}
\end{equation}
%Above, $\mathcal I(\xv;\yv)$ stands for the input-output mutual information (in nats). 
%\end{remark}
}

In \eqref{rs_result}, the existence of the limiting expressions follows by the application of the dominated convergence theorem given the existence of the limiting functions ${\rm G}_{u}$ and ${\rm M}_u$ in \eqref{limits}.
%We state the above result as a remark because its statement is {\em mathematically} exact 
%under the validity of the RS ansatz.%However, since the RS ansatz is a non-rigorous calculation method, we do not claim that the RS predictions of Remark \ref{RS-remark} are necessarily correct. 

%In Appendix~\ref{free_energy}, we derive the RS expression of the mutual information in \eqref{rs_result}, which corresponds to the statistical-physics ``free energy'' up to an additive constant. The MMSE predictions in \eqref{mmse} can be deduced from this free-energy calculation. In particular, one can introduce an auxiliary \emph{external field} in the prior$p_{u}(\sv_u)\ \to p_{u}(\sv_u)\exp\!\left(\hv^\dagger \sv_u\right),$ %so that the MMSE can be expressed as the variational derivative of the mutual information with respect to the parameter~$\hv$. By performing this variation in the (modified) free energy of the same form as \eqref{rs_result}, one recovers the input--output \emph{decoupling principle} $ (\sv_u;\ \sv_u + \sqrt{\tau_u}\,\zv)$ %in the MMSE.

Two aspects distinguish the present analysis from previous RS computations: 1) we consider a \emph{generic non-separable system}, for which the extension of the RS ansatz is  straightforward; 2) the disorder average is taken with respect to Haar unitaries $\{\Om_u\}_{u\leq U}$.
For $U = 1$, the disorder average in the replica calculation can be conveniently computed using the asymptotic Itzykson--Zuber integral \cite{Tulino13,vehkapera2016analysis,bereyhi2019statistical}. However, when $U > 1$, an application of the asymptotic Itzykson--Zuber integral is not clear. We resolve this issue by means of a new approach, and for details, we refer to Appendix~\ref{disorder_average}.

%%%%%%%%%%%%%%%%%%%%%%%%%%%%%%%%%%%%%%%%%%%%%%%%%%%%%%%%%%%%%%%%%%
\subsection{Consistency with the Recent Rigorous RS Results}

Recently, in the context of high-dimensional time series, \cite{tieplova2025information}
studied a related model in which \( \mathbf{H}_u \) is a deterministic circulant matrix and
\( \boldsymbol{\Xi}_u \) is an i.i.d.\ Gaussian random matrix, with the i.i.d. inputs
\( \sv_u \sim_{\text{i.i.d.}} S \) for all $u = 1,\ldots, U$. For this model, a  rigorous derivation of the mutual information \( \mathcal{I} \) was provided. We now show that, when specialized to this setting, our general RS result \eqref{rs_result} recovers the same formula. Furthermore, we note that \cite{tieplova2025information} leaves open the problem of developing a Bayesian-optimal algorithm for this model: an issue 
resolved in the present paper. 

First, we specialize the results of the results of RS-ansatz to the case where the precoding matrix $\Xim_u$ has i.i.d. zero-mean Gaussian entries. This is carried out by applying the results on deterministic equivalents for sums of random matrices in \cite{couillet2011deterministic,speicher2012free}.

\begin{remark}\label{remdarya}
Let the premises of the RS ansatz hold. Let the entries of $\Xim_u$ be i.i.d. zero-mean Gaussian with variance $1/L$. For each $u$, let $\norm{\Hm_u}_2$ be a.s. bounded as $L\to\infty$. Then, the RS-prediction of the limiting mutual information in \eqref{rs_result} reads as (see Appendix~\ref{proof_drem})
\begin{align}
\mathcal{I}
&\overset{\rm RS}{=}\inf_{\Theta}\lim_{L \to\infty}\frac{1}{L}
\ln\vert 
\mathbf{I}_L
+
\frac{1}{\sigma^2}
\sum_{u}\alpha_u\nu_u \Gm_u\vert+\sum_u\alpha_u\left[\lim_{N_u\to\infty}\frac{1}{N_u}\mathcal{I}\bigl(\mathbf{s}_u; \mathbf{s}_u + \sqrt{\tau}_u\mathbf{z}\bigr)-\frac{\nu_u}{\tau_u}\right]\label{rs_darya}
\end{align}
where for short we define the Gramians $\Gm_u\doteq \Hm_u\Hm_u^\dagger$ and the infimum is defined over the set
\begin{align}
\nonumber 
\Theta\doteq\left\{ \matr \nu \geq \matr 0,\matr\tau>\matr 0:
\frac{1}{\tau_u}=\lim_{L\to\infty}\frac{1}{L}{\rm tr}(\Gm_u(\sigma^2\Id_L+
\sum_{u'}\alpha_{u'}\nu_{u'}\Gm_{u'})^{-1}),\nu_u= {\rm M}_u(\tau_u)\;,\forall u
\right\}\;\;.
\end{align}

\end{remark}
Next, we further specialize the RS result \eqref{rs_darya} to the setting considered in~\cite{tieplova2025information}. First, we set
$\mathbf{\Gm}_u\equiv\Tm(\rho_u)$ where \( \Tm(\rho) \) denotes the \( L \times L \) family of Toeplitz matrices
\begin{equation}
\Tm(\rho)
=\frac{1}{1-\rho^2}
{\small 
\left[
\begin{array}{ccccc}
1 & \rho & \rho^2 & \cdots & \rho^{L-1} \\
\rho & 1 & \rho & \cdots & \rho^{L-2} \\
\vdots & \vdots & \ddots & \ddots & \vdots \\
\rho^{L-2} & \rho^{L-3} & \rho^{L-4} & \ddots & \rho \\
\rho^{L-1} & \rho^{L-2} & \rho^{L-3} & \cdots & 1
\end{array}
\right].}
\end{equation}
Then, by the Kac--Murdock--Szegö theorem~\cite{grenander1958toeplitz}, we have for  $\delta_u(\theta) \doteq (1-2\rho_u\cos(\theta)+\rho_u^2)^{-1}$
\begin{eqnarray}
\lim_{L \to \infty}\frac{1}{L}
\ln \big\vert
\mathbf{I}_L
+
\frac{1}{\sigma^2}
\sum_{u }\alpha_u\nu_u \mathbf{G}_u\big\vert
&=&
\frac{1}{\pi}\int_{0}^{\pi}
\ln\big(
1+\frac{1}{\sigma^2}\sum_{u}\alpha_u \nu_u \delta_u(\theta)
\big)
\,\mathrm{d}\theta,
\end{eqnarray}
where we define $\delta_u(\theta) \doteq (1-2\rho_u\cos(\theta)+\rho_u^2)^{-1}$. Second, let \( \mathbf{s}_u \sim_{\mathrm{i.i.d.}} S \), \( \forall u \). Thus, we have
\[\frac{1}{N_u}\mathcal{I}\bigl(\mathbf{s}_u; \mathbf{s}_u + \sqrt{\tau}_u\mathbf{z}\bigr)\equiv \mathcal I(S;  S+\sqrt{\tau_u}Z)\;.\] We also get
\begin{align}
\nonumber 
\Theta\equiv\left\{ \matr \nu \geq \matr 0,\matr\tau>\matr 0:~
\nu_u
=
\mathrm{mmse}\!\left(S \mid S + \sqrt{\tau_u}Z\right),
\frac{1}{\tau_u}=\;\frac{1}{\pi}
\int_{0}^{\pi}
\frac{\delta_u(\theta)\,\mathrm{d}\theta}
{\sigma^2+\sum_{u'}\alpha_{u'}\nu_{u'}\delta_{u'}(\theta)}\;, \forall u\right\} 
\end{align}
Thus, we recover exactly the same result as in~\cite[Theorem~II.1]{tieplova2025information}.
\vspace{-0.5cm}

\subsection{Bayesian Optimality of the Proposed Algorithm}
We next show the SE recursion under the ``Bayesian-optimal'' choice given in \eqref{SE_optimal} algorithmically (i.e., for sufficiently large $1\ll t\ll N$) solves the general RS fixed-point equations:
\begin{align}
\tau_u
=\frac{1}{{\rm G}_u(\matr \lambda)}-\lambda_u, \quad \lambda_u
=\frac{\tau_u{\rm M}_u(\tau_u)}{\tau_u-{\rm M}_u(\tau_u)}\label{rs-fix}
\end{align}

To this end, we first underline the following contraction property of the SE recursion in \eqref{SE_optimal}.
\begin{lemma}\label{contraction}
Consider the SE recursion in \eqref{SE_optimal}. We say the input signal $\sv_u$ has an isotropic Gaussian distribution if $\sv_u\sim \mathcal{CN}(\mv_u,\sigma_u^2\Id_{N_u})$ for some mean $\mv_u$ and variance $\sigma_u^2\geq 0$.  
If $\sv_u$ has an isotropic Gaussian distribution, then $\lambda_u^{(t)}= \sigma_u^2$. Otherwise, 
we have the contraction $0 < \lambda_u^{(t+1)} < \lambda_u^{(t)}$. Moreover, we have the contraction $0 < \tau_u^{(t+1)} < \tau_u^{(t)}$
unless $\sv_u$ for all $u = 1,\ldots, U$ have isotropic Gaussian distributions.  Consequently, $\tau_u^{(t)}$ and $\lambda_u^{(t)}$ converge, as $t \to \infty$, to a fixed point of \eqref{SE_optimal}.
\begin{proof}
See Appendix~\ref{proof_rem_contradiction}\;.
\end{proof}
\end{lemma}
To connect the above iterative solution to the RS fixed-point solution in
\eqref{fixed}, in addition to the premises stated in the RS ansatz,
we impose the following regularity condition on the posterior mean denoiser [as in Theorem~\ref{th1}]: Specifically, let
$\eta_{u}(\rv;\tau)=\mathbb E[\sv_u\vert \rv=\sv_u+\sqrt{\tau}\zv]$ be differentiable and uniformly Lipschitz (w.r.t. $\rv$) with a Lipschitz constant ${\rm Lip}=O(\tau^{\pm p})$ for any constant $p\geq 0$. 
Under this additional regularity assumption, we obtain that (see
Appendix~\ref{proof_rem_contradiction})
\begin{align}\label{convergence_replica}
(\matr\tau,\matr\lambda)=
 \lim_{t\to\infty}\lim_{N_u,L\to\infty} (\matr\tau^{(t)},\matr\lambda^{(t)})\;.
\end{align}

Note that for the Bayesian--optimal setting as in
Remark~\ref{Bayesian_optimal} it follows from
Corollary~\ref{cor_mse} that
\begin{equation}
\frac{1}{N_u}\,\lVert \sv_u-\eta_{u}(\rv_u^{(t)};\tau_u^{(t)})\rVert^2
=
\frac{1}{N_u}\,\mathrm{mmse}(\sv_u \vert\sv_u+\sqrt{\tau_u^{(t)}}\zv)+ \Op{L^{-\frac{1}{2}}}.
\end{equation}
Then, from \eqref{convergence_replica} we have (a.s.)
\begin{equation}
\lim_{t\to \infty}\lim_{N_u,L\to \infty}
\frac{1}{N_u}\,\lVert \sv_u-\eta_{u}(\rv_u^{(t)};\tau_u^{(t)})\rVert^2
=
\lim_{N_u\to \infty}
\frac{1}{N_u}\,\mathrm{mmse}\!\left(
\sv_u \,\middle|\, \sv_u+\sqrt{\tau_u}\,\zv
\right),
\end{equation}
where $\tau_u$ is a solution of the RS fixed--point equation in
\eqref{rs-fix}. 

Thus, assuming that the RS fixed-point equations admit a unique solution—referred to as the \emph{algorithmic phase}~\cite{barbier2019optimal}—then for sufficiently large iterations satisfying $1 \ll t \ll N$, the normalized mean-square error (MSE) of the algorithm approaches arbitrarily closely 
the asymptotic MMSE predicted by the RS ansatz in~\eqref{mmse}.

In contrast, when the RS fixed-point equations admit multiple solutions, the system is said to be in the \emph{hard phase}~\cite{barbier2016mutual}. In this regime, constructing a Bayes-optimal algorithm remains an open problem, even in the classical single-user setting.

\subsection{Universality in the Case of Full Statistical Symmetry Across the Users}
We now restrict our attention to the case of full statistical symmetry across users:
\begin{equation}\label{fullsymmerty}
\Am_u \sim \mathbf{A}_{u'} \quad \text{and} \quad \sv_u \sim \sv_{u'} \quad \forall u \neq u'.
\end{equation}
This assumption will reveal a universality phenomenon: while the concatenated matrix $\mathbf{A} \doteq[\mathbf{A}_1, \mathbf{A}_2, \ldots, \mathbf{A}_U]$ is, in general, not right-unitarily invariant, the replica predictions yield expressions that coincide with those obtained under the assumption that $\mathbf{A}$ is right-unitarily invariant.

Specifically, \eqref{fullsymmerty} implies the functional symmetry properties for all $u\ne u'$
\begin{align}\label{sym}
{\rm G}_u(\xv)={\rm G}_{u'}(\xv)\quad \text{and} \quad
{\rm M}_u(x)={\rm M}_{u'}(x)\;. 
\end{align}
As a consequence, we can define the scalar functions
\begin{align}
{\rm G}(x)&\doteq {\rm G}_u(x\matr 1_U)=\lim_{N,L\to \infty}\frac{1}{N}\bigl(
\mathbf{A}^\dagger
(\sigma^2 \mathbf{I}_L +x\mathbf{A}\mathbf{A}^\dagger)^{-1}
\mathbf{A}
\bigr)\\
{\rm M}(x)&\doteq {\rm M}_u(x)=\lim_{N\to \infty}\frac{1}{N}{\rm mmse}(\sv\vert \sv+\sqrt{x}\zv)\;.
\end{align}

Under the symmetry property \eqref{sym}, the RS prediction of the mutual information simplifies to
\begin{equation}\label{RS_simple}
\mathcal{I}\overset{\rm RS}{=} \inf_{\Theta} \left[\lim_{N,L \to\infty}\frac{1}{L}
\ln\big\vert
\mathbf{I}_L+
\frac{\lambda}{\sigma^2} \mathbf{A} \mathbf{A}^\dagger\big\vert +\alpha\lim_{N\to\infty}\frac{1}{N}\mathcal{I}\bigl(\mathbf{s}; \mathbf{s} + \sqrt{\tau}\mathbf{z}\bigr)-\alpha\ln(1+\frac{\lambda}{\tau})\right]
\end{equation}
where $\alpha=N/L$ and the infimum is defined over the set
\begin{align}
\Theta\equiv\left\{\lambda \geq 0,\tau>0:~~
\tau
=\frac{1}{{\rm G}(\tau)}-\lambda ,~~\lambda
=\frac{\tau {\rm M}(\tau)}{\tau-{\rm  M}(\tau)}
\right\}\;.
\end{align}

Finally, by invoking the following representation of the R-transform for a random variable $X>0$, \cite[Lemma~1]{opper2016theory}
\begin{align}
\mathbb E[\ln X]=-(1+\ln\mathbb E[X^{-1}])+\int_{0}^{\mathbb E[X^{-1}]}{\rm R}_{X}(-s){\rm ds}
\end{align}
where ${\rm R}_X$ denotes the R-transform of the distribution of $X$, we note that for separable systems $\mathbf{s} \sim_{\mathrm{i.i.d.}} S$, the expression in \eqref{RS_simple} exactly coincides with the RS formula in \cite[Claim~1]{Tulino13} which has been rigorously proved in the regime $\sigma^2 \to 0$ in \cite{10272997}.

%%%%%%%%%%%%%%%%%%%%%%%%%%%%%%%%%%%%%%%%%%%%%%%%%%%%%%%%%
\section{Numerical Results}\label{sec_sim}

%%%%%%%%%%%%%%%%%%%%%%%%%%%%%%%%%
\subsection{Channel Model}

The linear Gaussian multiuser model considered in this paper is very general and encompasses several 
settings of practical relevant interest. In this section we discuss the case  of 
channel matrices resulting from a general linear time-varying dispersive multipath fading channel
coupled with precoding and post-processing corresponding to waveforms used in wireless 
communications. We develop the model for the single-antenna case, 
while the extension to multiple antennas at both the users' and the receiver sides is straightforward. 

A typical time-varying underspread multipath fading channel, including the transmitter square-root 
Nyquist pulse shaping filter 
$g(t)$, has complex baseband impulse response given by \cite{gaudio2021otfs}
\begin{equation}
    h_u(t,\tau) = \sum_{p=1}^{P_u} \alpha_{u,p} \, e^{j 2\pi \nu_{u,p} t} \, 
    g(\tau - \tau_{u,p}),  \label{phy-channel}
\end{equation}
where $(\alpha_{u,p}, \nu_{u,p}, \tau_{u,p})$ denote the complex gain, Doppler shift, 
and delay of the $p$-th path of user $u$. 
Each user $u$ sends a time-domain sequence of discrete-time ``chips'' obtained from the precoded information symbols $\xv_u$ via a waveform-dependent precoder, at a rate of $W$ chips/s, where $W$ is the (unilateral) 
channel bandwidth around the carrier frequency $f_0$. At the receiver, after carrier demodulation, the signal is passed through a chip matched filter $g(-t)^*$ and it is sampled at rate $W$. 
The resulting discrete-time baseband channel can be represented as a 
tall banded {\em discrete-time convolution} matrix $\Hm_{0,u}$ where the bandwidth is related to the delay spread (in chip intervals) and is generally much smaller than transmission blocklength $L$. 
The $L \times L$ effective channel matrix is then given by  $\mathbf{H}_u = \mathbf{U}_{\rm rx} \mathbf{H}_{0,u} \mathbf{U}_{\rm tx}^\dagger$, where $\mathbf{U}_{\rm rx}$ and $\mathbf{U}_{\rm tx}$ are modulation-dependent postprocessing and precoding matrices. 
A comprehensive treatment of typical waveforms with their corresponding 
postprocessing and precoding matrices can be found in \cite{zhang2026unified}.

For example, in the case of OFDM, letting $L = K M$ and applying cyclic prefix insertion and removal to each block and IDFT/DFT per block, $\Hm_u$ takes on a block-diagonal form. If the channel is time-invariant, each block is also diagonal. Otherwise, in the presence of significant Doppler spread (generating {\em inter-carrier interference}), the blocks are non-diagonal. Other modulation schemes, such as OTFS (see \cite{zhang2026unified} above), differ in the nature of the time-frequency precoding/postprocessing, and in block partitioning, but conceptually can be handled in the same say. In general, by partitioning the $L$-length information sequence into smaller $M$-blocks and introduce some redundancy (e.g., cyclic-prefix, zero-padding) to avoid inter-block interference, $\Hm_u$ 
takes on a block diagonal form. This approach has a significant advantage in terms of complexity, since the matrix inverses (see \eqref{filter_lmmse}) in the MU-OAMP can be performed on a per-block basis, but incurs in some redundancy. 
In contrast, schemes that do not operate such partitioning are more spectrally efficient but generally incur a much larger complexity. 

In this work, we consider an OFDM system with $M = 256$ subcarriers and total bandwidth 
$W = 20~\mathrm{MHz}$, using $K = 16$ OFDM blocks with the multi-block cyclic-prefix
configuration, so that $\mathbf{H}_u$ is in block diagonal form with $K$ blocks of dimension $M \times M$. 
A maximum delay spread of $\max_p \tau_p = 1~\mu\mathrm{s}$ (corresponding to a 
path-length difference of approximately $300~\mathrm{m}$) gives a discrete channel 
length of $\ell = W\max_p\tau_p = 20 \ll M$. 
With carrier frequency $f_c = 4~\mathrm{GHz}$ 
and user velocity $v = 80~\mathrm{km/h}$, the maximum Doppler shift is 
$\max_p \nu_p = f_c v/c \approx 300~\mathrm{Hz} \ll W$, with 
comfortably satisfies the underspread channel assumption\cite{tse2005fundamentals}.

To introduce inter-user asymmetry, the channel matrices are rescaled as 
\[\mathbf{H}_u = \sqrt{p_u}\,\mathbf{H}\;,\]where $\mathbf{H}$ is generated 
independently for each user and $p_u$ is a user-specific power gain.
The $P_u = 23$ channel path parameters $(\alpha_p, \nu_p, \tau_p)$ are drawn from 
the 5G-3GPP channel model \cite{5G3GPP}, which can be generated via MATLAB's 
5G Toolbox.

%%%%%%%%%%%%%%%%%%%%%%%%%%%%%%%%%%%%%%%%%%%
\subsection{Signal Model}

We restrict the simulations to the case of two users, i.e., \(U = 2\). For convenience, statistical symmetry is assumed across the transmitted signals as $\sv_1\sim\sv_{2}$.
We address two signal models:
\begin{itemize}
    \item Uncoded (i.i.d.) inputs: We cosider i.i.d. QPSK constellations as   
\begin{align}
    \sv_u \sim_{\text{i.i.d.}} \frac{1}{\sqrt{2}}(R+\mathrm{j}I) \quad\text {with}\quad (R,I) \sim \mathrm{Unif}\!\big(\{\pm 1\}\big)\label{qpsk}
\end{align}
\item SR-LDPC Inputs: 
The constellations $\sv_u \in \CC^{BQ}$ are created by representing a codeword from a non-binary LDPC code with alphabet size $Q$ and length $B$ as a sequence of $B$ orthogonal basis vectors through the canonical mapping $T:\FF_Q \to \{0,1\}^Q$, $T(q) = \ev_q$. The vector $\sv_u$ is then multiplied by a coding matrix $\Xim_u \in \CC^{L \times BQ}$. 
{ To reduce the computational complexity of constructing the random semi-unitary matrices
\( \mathbf{\Xi}_u \), motivated by universality results for OAMP~\cite{CakmakOpper19,wang2024universality,dudeja2023universality},
we choose \( \mathbf{\Xi}_u \) to be a randomly signed sub-sampled Fourier matrix. This also reduces the complexity of computing matrix--vector products of the form
\( \mathbf{\Xi}_u \mathbf{x} \) and \( \mathbf{\Xi}_u^\dagger \mathbf{y} \) in the algorithm, and hence, allows for efficient encoding and decoding with $\mathcal{O}(BQ\log BQ+KM^3)$ complexity.}
The sign vectors and sub-sampling patterns are different for each user and known at the receiver. For decoding, we use the proposed algorithm with a soft-in soft-out BP denoiser, as described in Sec. \ref{sec:add_assumptions}. For the simulations, we use a rate $3/4$ outer LDPC code with alphabet size $Q = 256$, and length $B=600$. This give a payload length of $3600$ bits per user. Following \cite{Ebe25}, the non-binary LDPC code is constructed by fixing the variable node degree to $d_v = 3$, and constructing the factor graph by progressive edge growth.     
\end{itemize}
In passing, we note that our MU-OAMP framework has also been applied, as a follow-up study, to SR codes without an outer LDPC code, together with optimized power allocation across sections; see~\cite{Hao26} for details.
Our motivation for considering SR--LDPC codes here is that the outer LDPC code induces full dependence across sections, providing a compelling testbed for examining theory--simulation discrepancies in non-separable systems.

\subsection{Toy Static Channels: Random Precoding versus OFDMA}\label{motivation}

To illustrate the impact of random precoding, we first consider the static scenario with \(v=0~\mathrm{km/h}\); so that \(\nu_p=0\) for all paths.
While OFDMA is known to be highly effective in this regime, we demonstrate that the MU-OAMP approach can significantly outperform OFDMA.

In OFDMA, different users occupy disjoint subcarrier bands to eliminate inter-user interference.
For two users, we define the projection matrices
$
\Pm_1 \doteq {\small
\begin{bmatrix}
     \Id_{L/2}  \\
     \mathbf{0}  
\end{bmatrix}}
\text{~and~}
\Pm_2 \doteq 
{\small \begin{bmatrix}
     \mathbf{0}  \\
     \Id_{L/2} 
\end{bmatrix}}$
leading to the received signal
\begin{equation}\label{equ:OFDMA}
\yv = \Hm_1 \Pm_1 \sv_1 + \Hm_2 \Pm_2 \sv_2 + \nv .
\end{equation}
For static channels, with the OFDM waveform $\Hm_1$ and $\Hm_2$ are diagonal, rendering the observation model fully decoupled and allowing exact MMSE detection.

For comparison, we consider random precoding:
\begin{equation}
\yv = \Hm_1 \Xim_1 \sv_1 + \Hm_2 \Xim_2 \sv_2 + \nv,
\end{equation}
where $\Xim_1$ and $\Xim_2$ are independent semi-unitary of dimensions $L \times L/2$. In static channels, the MU-OAMP algorithm with the optimal Bayesian filter~\eqref{filter_lmmse} does not require matrix inversion, resulting in per-iteration complexity \(\mathcal{O}(L^2)\), comparable to OFDMA detection.

Restricting attention to uncoded i.i.d.\ QPSK inputs~\eqref{qpsk}, Figure.~\ref{fig:RM vs OFDMA} shows that MU-OAMP significantly outperforms OFDMA with MMSE detection.
\begin{figure}[!htbp] 
\centering \includegraphics[width=0.45\linewidth]{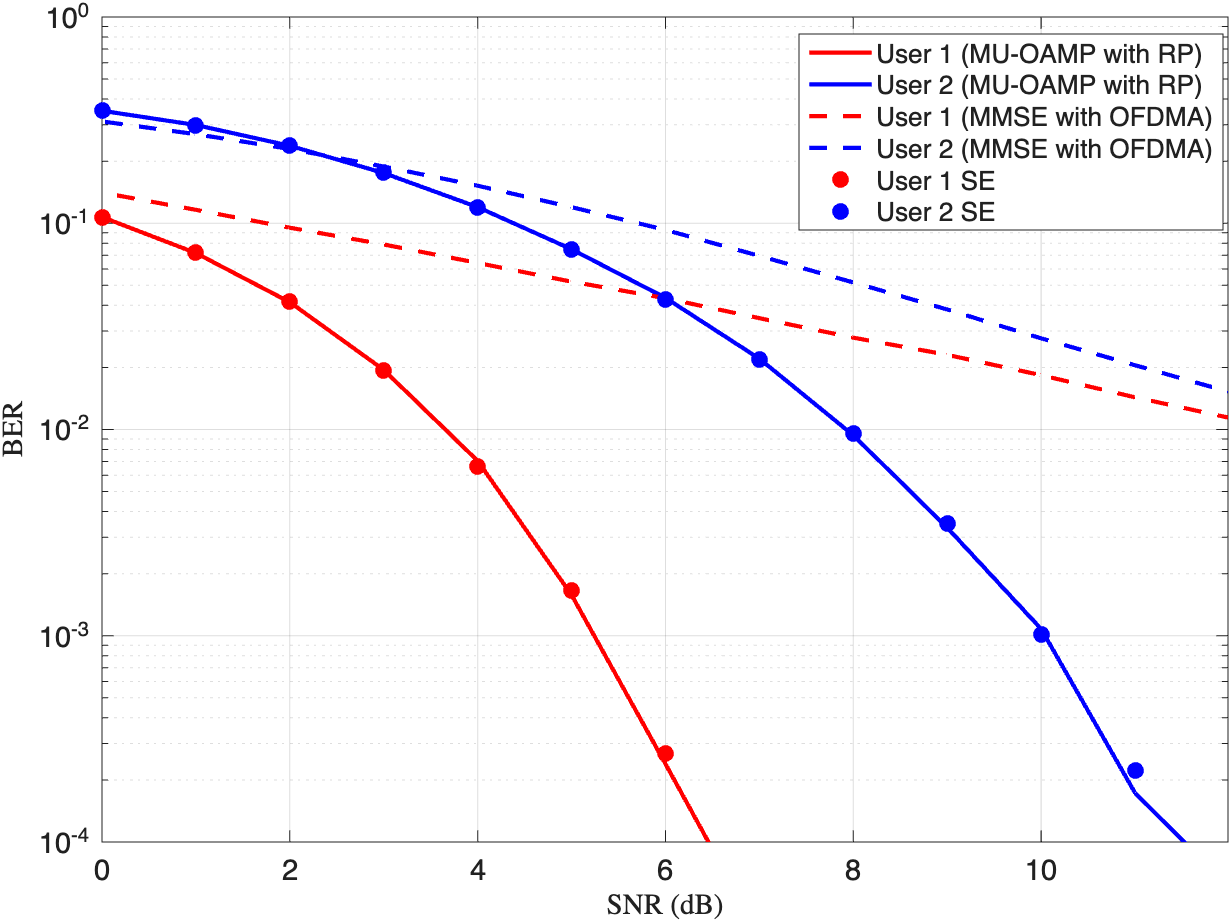} \caption{Time-invariant channels: MU-OAMP vs. OFDMA (QPSK signals, $(p_1,~p_2)=(6~\mathrm{dB},~0~\mathrm{dB})$).} \label{fig:RM vs OFDMA} 
\end{figure} 
\vspace{-.5cm}
\begin{figure}[!htbp] \centering \includegraphics[width=0.6\linewidth]{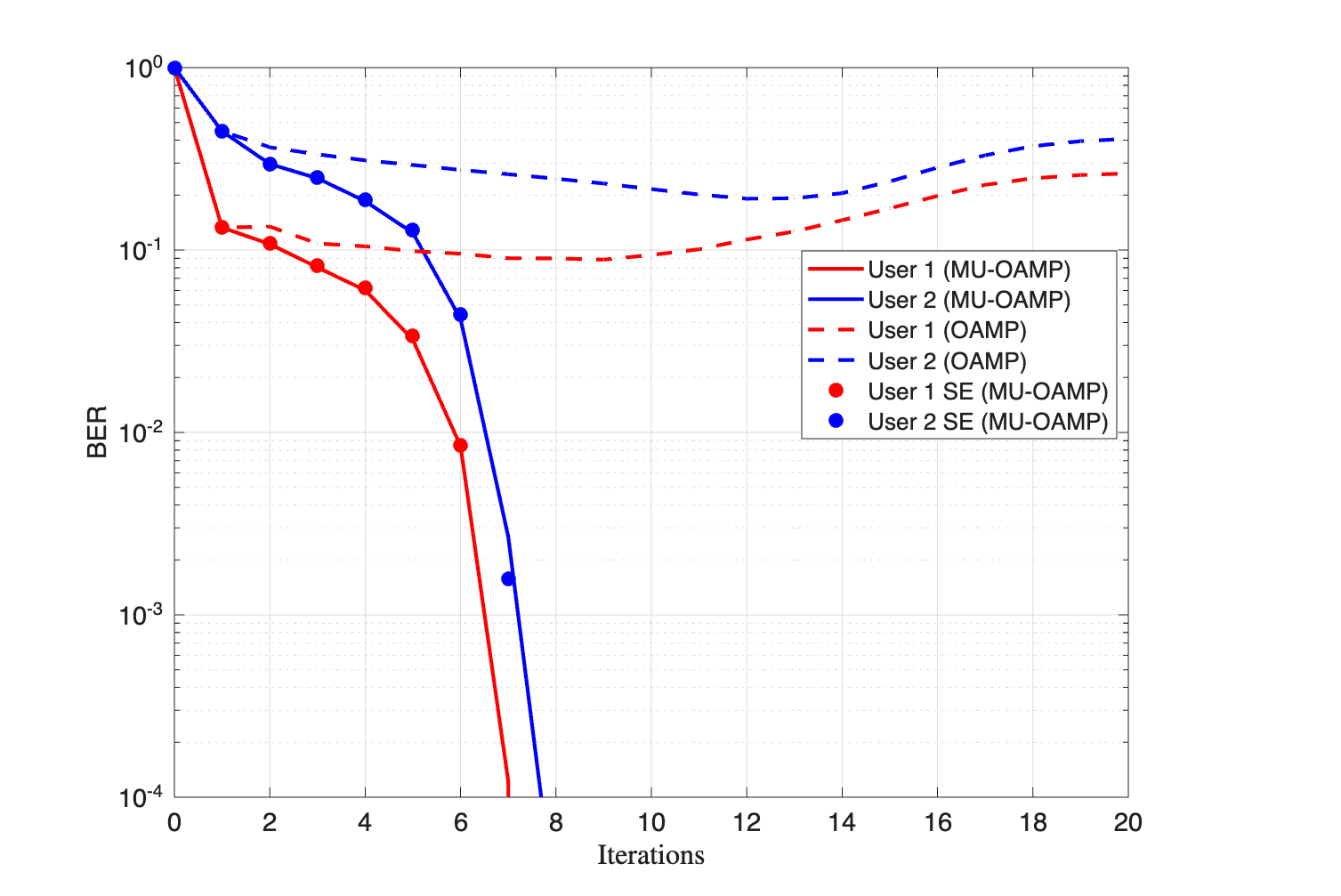} \caption{Time-varying Channes: MU-OAMP vs. OAMP  (QPSK signals, $M=256,~ \mathrm{snr}=13~\mathrm{dB},~(p_1,~p_2)=(6~\mathrm{dB},~0~\mathrm{dB})$).} \label{fig:Exact vs approximate vs matched} 
\end{figure}
\subsection{Time-Varying Channels}
\begin{figure}[!htbp] \centering \includegraphics[width=0.48\linewidth]{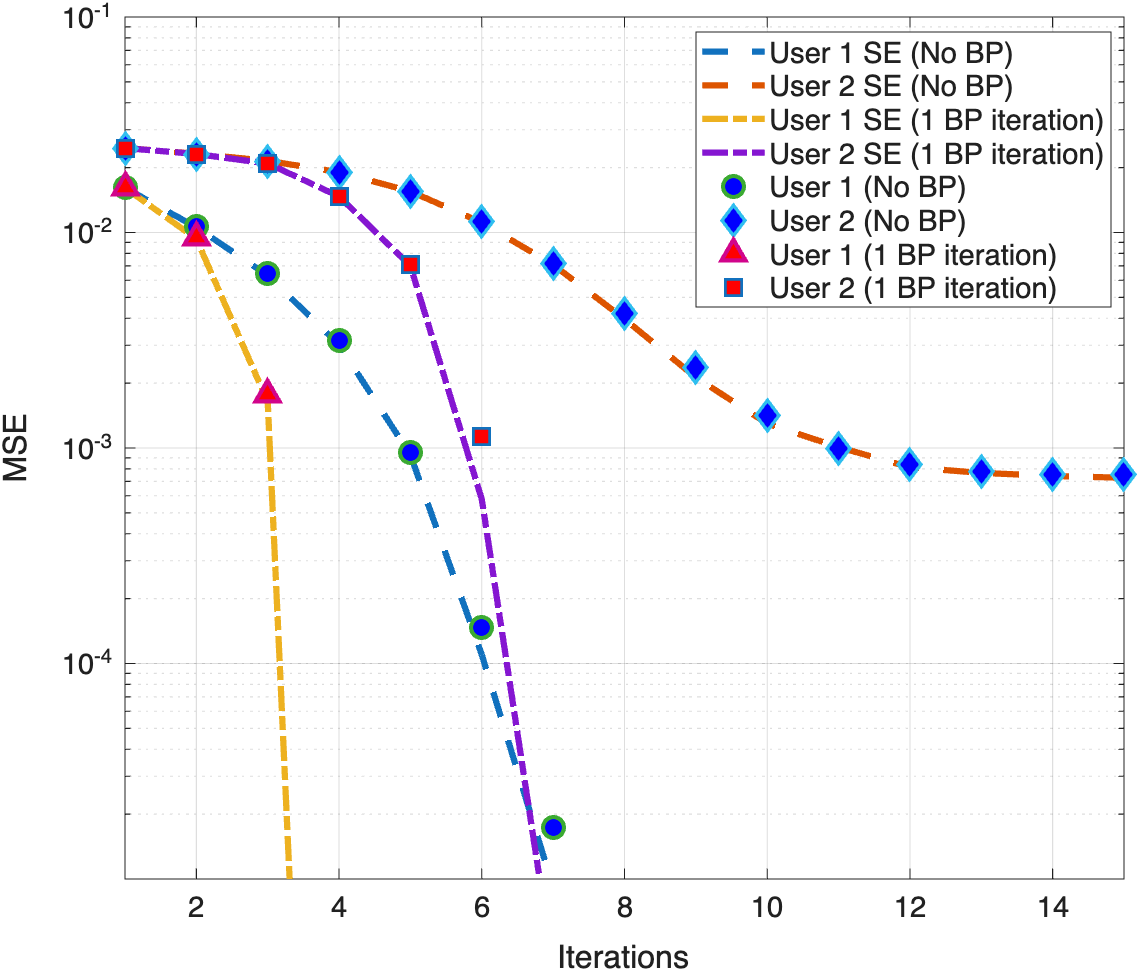} \caption{Time-varying Channels: MSE (SR-LDPC signals, $Q=256,~ E_b/N_0=10.5~\mathrm{dB},~(p_1,~p_2)=(6~\mathrm{dB},~0~\mathrm{dB})$).} \label{fig:sparc_sim1} \end{figure} \begin{figure} \centering \includegraphics[width=0.48\linewidth]{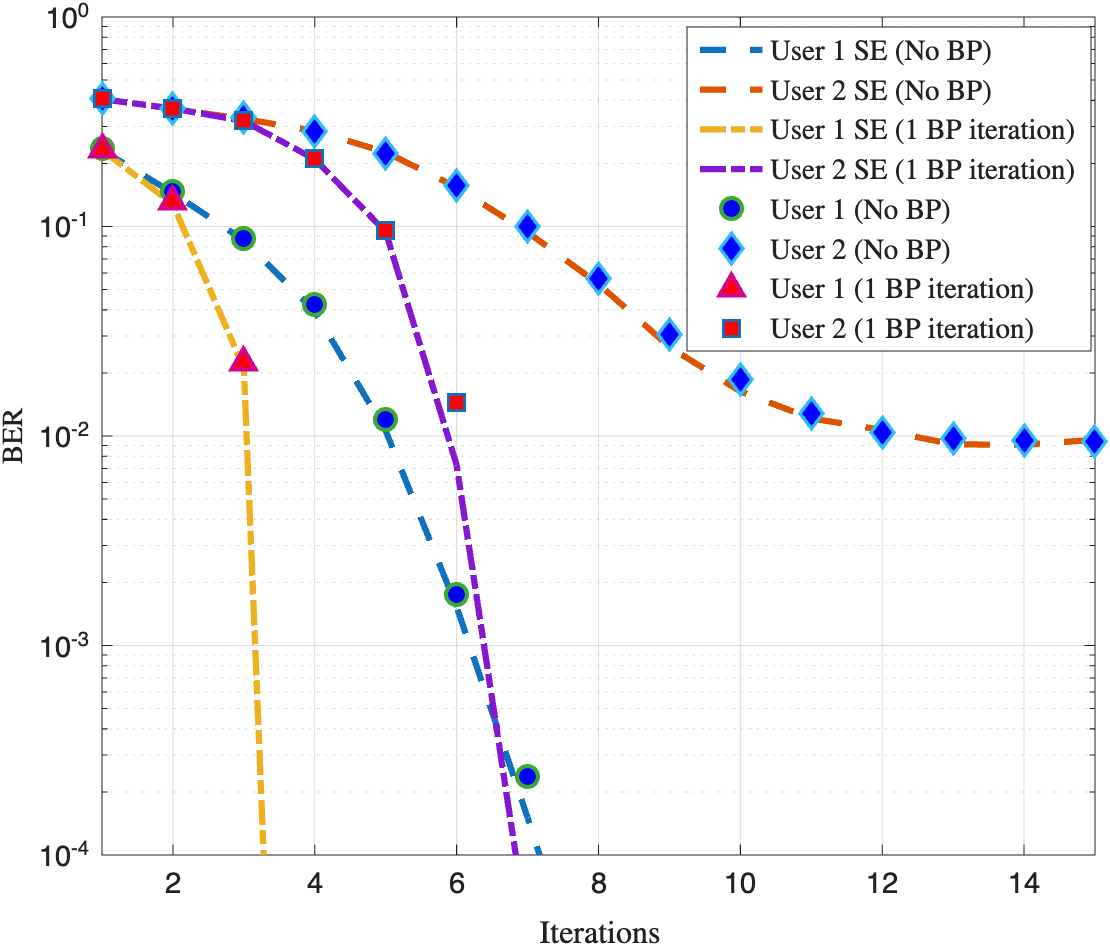} \caption{ Time-varying Channels: BER (SR-LDPC signals, $Q=256,~ E_b/N_0=10.5~\mathrm{dB},~(p_1,~p_2)=(6~\mathrm{dB},~0~\mathrm{dB})$).} \label{fig:sparc_sim2} 
\vspace{-.6cm}
\end{figure} 
We now consider time-varying channels by setting the user velocity to \(v=80~\mathrm{km/h}\). %and using a no-waveform model $\Hm \equiv \Hm_0 $. 

%{\RED [I DON'T UNDERSTAND ... what does it mean ``no waveform''? you should say that for the case $\Um_{\rm tx} = \Um_{\rm rx} = \Id_L$ the ``waveform'' correspond to single-carrier transmission, and in this case $\Hm_u = \Um_{0,u}$ is a convolution matrix ... it may be tall or square depending on truncation of the ``tail'' of the block ... with truncation, we can make it square. QUESTION: would it be possible to have both results for single carrier, which is $\Um_{\rm tx} = \Um_{\rm rx} = \Id_L$, and for the same channel results for OFDM, which of course can be applied also to the time-varying case yielding block diagonal but not diagonal blocks, and verify that effectively the performances are essentially the same ??? are they (I think so) ...]}

Figure~\ref{fig:Exact vs approximate vs matched} compares MU-OAMP using the optimal Bayesian construction~Remark~\ref{Bayesian_optimal} with conventional OAMP under user power asymmetry \((p_1,p_2)=(6~\mathrm{dB},0~\mathrm{dB})\).
Under full statistical symmetry, i.e., \(\Hm_1\Xim_1\sv_1 \sim \Hm_2\Xim_2\sv_2\), both methods perform similarly as expected from the general SE formalism in~Definition~\ref{SE_new}.

Figures~\ref{fig:sparc_sim1} and~\ref{fig:sparc_sim2} consider the same channel model but employ SR--LDPC coded signals with \(L=KM\) and \(K=16\).
We compare a non-separable denoiser using one BP iteration per OAMP iteration with a separable softmax denoiser.
The choice of a single BP iteration (per denoising step) follows~\cite{fengler2025sparse}.
The MSE and BER required for SE are precomputed empirically as functions of the effective noise variance, since they are independent of the channel and coding matrices.

The results clearly demonstrate that the non-separable denoiser yields substantial performance gains, and that the proposed SE framework accurately predicts empirical performance even in challenging settings involving non-separable denoising and time-varying channels.

%%%%%%%%%%%%%%%%%%%%%%%%%%%%%%%%%%%%%%%%%%%%%%%%%%%%%%%%%
\section{Proof of Theorem~\ref{th1}}\label{Pth1}

In this section, we prove Theorem~\ref{th1}; yet to maintain the flow of exposition, the proofs of the lemmas invoked in the argument are collected in Appendix~\ref{proof_auxilary}.

We recall the singular value decomposition of the precoding matrix as $\matr\Xi_u = \Um_u \matr{\Lambda}_u \Om_u$ in \eqref{svd} and introduce for convenience the following matrix elements:
\begin{align}
\Qm_u&\doteq\Hm_{u}\Um_u\matr\Lambda_u\label{widehatQ}\\
\Mm_u^{(t)}&\doteq \frac{\Qm_u^\dagger \matr \Sigma^{(t)}}{\langle \Am_u^\dagger\matr \Sigma^{(t)}\Am_u\rangle }\label{Mut}\;.
\end{align}
We will analyze the residual dynamics for each $(u,t)\in[U]\times [T]$
\begin{align}
\widehat\phiv_u^{(t)}\doteq \rv_u^{(t)}-\sv_u\text{~~and~~}\widetilde\phiv_u^{(t)}\doteq \fv_{u}^{(t)}-\sv_u
\end{align}
In particular, from the original dynamics \eqref{oamp0} we have the recursion 
\begin{subequations}\label{resoamp}
\begin{align}
\widehat{\qv}_u^{(t)}&=\Om_u \widetilde\phiv_u^{(t)}\label{alg1}
\\  \widehat\psiv_u^{(t)}&=\Mm_u^{(t)}\nv+(\Id-\Mm_u^{(t)}{\Qm}_u)\widehat\qv_u^{(t)} -\Mm_u^{(t)}\sum_{u'\neq u} \Qm_{u'}\widehat{\qv}_{u'}^{(t)}\label{alg2}\\
\widehat\phiv_{u}^{(t)}&=\Om^\dagger_u\widehat\psiv_u^{(t)}\label{alg3}\\
\widetilde\phiv_{u}^{(t+1)}&=f_{u,t}(\sv_u+\widehat\phiv_u^{(t)})-\sv_u\;. \label{alg4}
\end{align}
\end{subequations}

The dynamics \eqref{resoamp} are coupled by the Haar matrices $\Om_u$ through the products $ \{\Om_u \widetilde\phiv_u^{(t)},\Om_u^\dagger\widehat\psiv_u^{(t)}\}\;.$  As a first step,  we will leverage the rotational invariance of the Haar distribution to construct a statistically equivalent representation of the AMP-type dynamics that avoids explicit dependencies on the Haar matrices. This approach resembles the ``Householder Dice'' approach of ~\cite[Section~III-C]{lu2021householder} but we use a Gram-Schmidt representation instead of explicit Householder reflections. While the latter is computationally more stable and efficient in simulating AMP-type dynamics, the Gram-Schmidt representation is convenient for the high-dimensional analysis employed here. 
\subsection{The Haar-Free Equivalent}\label{step1}
We introduce the classical Gram-Schmidt notation: Let $\vv^{(1:t-1)}= \{\vv^{(1)},\vv^{(2)},\ldots, \vv^{(t-1)}\}$ 
be a collection of  vectors in $\CC^{N}$ with $\langle \vv^{(i)},\vv^{(j)} \rangle = \delta_{ij}$ for all $i,j$. Also, let us denote the projection matrix to the orthogonal complement of $\text{span}(\vv^{(1:t-1)})$ by
\begin{equation}
	\Pm^\perp_{\vv^{(1:t-1)}}\doteq \Id_N- \frac 1 N\sum_{1\leq s<t}\vv^{(s)}(\vv^{(s)})^\dagger \;.
\end{equation}
Then, for any $\bv\in \CC^{N}$, we construct a new orthogonal vector $\vv^{(t)}=\GS{\bv}{\vv^{(1:t-1)}}$ where
\begin{equation}
	\GS{\bv}{\vv^{(1:t-1)}} \doteq 
    \sqrt{N}\frac{\Pm^\perp_{\vv^{(1:t-1)}}\bv}{
\Vert \Pm^\perp_{\vv^{(1:t-1)}}\bv \Vert }
\end{equation}
unless $\bv\in{\rm span}(\vv^{(1:t-1)})$. In case $\bv\in{\rm span}(\vv^{(1:t-1)})$ we generate an arbitrary vector $\vv^{(t)}$ such that $\langle \vv^{(t)},\vv^{(i)}\rangle=\delta_{ti}$. %\footnote{E.g., let  $ \vv^{(t)}=\GS{\zv}{\vv^{(1:t-1)}}$ with arbitrary $\zv\sim\mathcal{CN}(\matr 0,\Id_N)$ such that $\norm{\Pm^\perp_{\vv^{(1:t')}}\zv}=\norm{\zv'}\overset{a.s.}{>}0$  for $N>t'$ where $\zv'\sim\mathcal{CN}(\matr 0,\Id_{N-t'})$\;.}. 
Furthermore, by convention, we set $\mathcal {GS}(\bv)\equiv\mathcal {GS}(\bv\vert \emptyset)$. Note also that
	\begin{equation}
		\bv=\sum_{1\leq s\leq t}\vv^{(s)}\langle\vv^{(s)},\bv \rangle\;. \label{gsd}
	\end{equation}
\begin{lemma}\label{lemmaseq} Fix $T$ and let $N_u>T$ for all $u = 1,\ldots, U$. Let $\Om_u$ be Haar unitary matrix where $\Om_u$ and $\Om_{u'}$ are mutually independent for any $u\neq u'$.  Then, as we will verify in Appendix~\ref{proof_lemmaseq}, the joint probability distribution of the sequence of vectors $\{{\widehat\phiv}_u^{(1)},{\widehat\phiv}_u^{(2)},\cdots,{\widehat\phiv}_u^{(T)}\}_{u\in [U]}$ generated by dynamics \eqref{resoamp} is equal to that of the same sequence generated by the following dynamics: We begin with the same initialization ${
\fv}_u^{(1)}$ of the dynamics \eqref{resoamp} (for each $u\in[U]$) and for the iteration steps $t=1,2,\cdots, T$ we construct
\begin{subequations}
\label{hauseholderep}
\begin{align}
\vv_{u}^{(2t-1)}&=\mathcal{GS}(\gv_{u}^{(t)}\vert \vv_{u}^{(1:2(t-1))})\\
\widetilde\vv_{u}^{(2t-1)}&=\mathcal{GS}({\widetilde\phiv}_{u}^{(t)}\vert \widetilde\vv_{u}^{(1:2(t-1))})\\
\widehat\qv_{u}^{(t)}&=\sum_{1\leq s< 2t}\vv_{u}^{(s)}\langle\widetilde\vv_{u}^{(s)},{\widetilde\phiv}_{u}^{(t)} \rangle \\
\widehat\psiv_u^{(t)}&=\Mm_u^{(t)}\nv+(\Id-\Mm_u^{(t)}{\Qm}_u)\widehat\qv_u^{(t)}-\Mm_u^{(t)}\sum_{u'\neq u}{\Qm}_{u'}\widehat{\qv}_{u'}^{(t)}\\
\vv_{u}^{(2t)}&=\mathcal{GS}(\widehat{\psiv}_{u}^{(t)}\vert \vv_{u}^{(1:2t-1)})\\
\widetilde\vv_{u}^{(2t)}&=\mathcal{GS}(\widetilde{\gv}_{u}^{(t)}\vert \widetilde\vv_{u}^{(1:2t-1)})\\
\widehat\phiv_u^{(t)}&=\sum_{1\leq s\leq 2t}\widetilde\vv_{u}^{(s)}\langle\vv_{u}^{(s)},{\widehat\psiv}_{u}^{(t)} \rangle\\
{\widetilde\phiv}_{u}^{(t+1)}&=f_{u,t}(\sv_u+\widehat\phiv_u^{(t)})-\sv_u\;.
\end{align}
\end{subequations}
Here, for each pair $(u,t)\in[U]\times[T]$, the random vectors 
$\mathbf{g}_u^{(t)},\widetilde{\mathbf{g}}_u^{(t)}\sim\mathcal{CN}(\mathbf{0},\mathbf{I}_{N_u})$ are mutually independent and independent across all indices $(u,t)$.
\end{lemma}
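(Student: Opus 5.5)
We will prove the lemma by induction on the iteration index, conditioning on everything except the Haar matrices and using their invariance under left/right multiplication by unitaries that fix what has already been revealed. The idea is the standard sequential-revelation argument underlying the ``Householder Dice'' construction of \cite{lu2021householder}. At step $t$, the only information about $\Om_u$ that has entered the dynamics \eqref{resoamp} is its action on the past inputs and outputs:
\[
\Om_u\widetilde\phiv_u^{(s)}=\widehat\qv_u^{(s)}\quad (s\le t-1),\qquad \Om_u^\dagger\widehat\psiv_u^{(s)}=\widehat\phiv_u^{(s)}\quad (s\le t-1).
\]
Denote by $\widetilde\vv_u^{(1:2(t-1))}$ a Gram--Schmidt orthonormal basis of the span of the input vectors $\{\widetilde\phiv_u^{(s)}\}$ and of the vectors $\{\Om_u^\dagger\widehat\psiv_u^{(s)}\}$, taken in the interleaved order of \eqref{hauseholderep}. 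Likewise, let $\vv_u^{(1:2(t-1))}$ be the corresponding basis of the output side, so that $\Om_u\widetilde\vv_u^{(s)}=\vv_u^{(s)}$ for all $s\le 2(t-1)$. The key claim is that, conditioned on the history $\mathcal F_{t-1}$, the matrix $\Om_u$ is distributed as
\[
\Om_u=\sum_{s\le 2(t-1)}\tfrac1{N_u}\vv_u^{(s)}(\widetilde\vv_u^{(s)})^\dagger+\Wm_u\widetilde\Om_u\widetilde\Wm_u^\dagger .
\]
Here $\Wm_u,\widetilde\Wm_u$ are isometries onto the orthogonal complements, and $\widetilde\Om_u$ is a fresh Haar unitary of size $N_u-2(t-1)$, independent of $\mathcal F_{t-1}$ and across users. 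This is the usual conditional law of a Haar matrix given finitely many linear constraints. It follows from invariance: the map $\Om_u\mapsto \Vm\Om_u\widetilde\Vm$, with $\Vm,\widetilde\Vm$ unitaries fixing the revealed spans, preserves both the Haar measure and the constraint set. Independence across $u$ holds because the $\Om_u$ are independent and the other inputs $\nv,\sv_u,\Qm_u,\Mm_u^{(t)}$ are independent of them.

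\textbf{Step 1 (forward product).} Decompose $\widetilde\phiv_u^{(t)}$ into its component in the revealed span plus the orthogonal remainder $\Pm^\perp\widetilde\phiv_u^{(t)}$. Using \eqref{gsd} and the conditional representation, the component in the span is mapped deterministically to $\sum_{s}\vv_u^{(s)}\langle\widetilde\vv_u^{(s)},\widetilde\phiv_u^{(t)}\rangle$. The remainder is normalized to $\widetilde\vv_u^{(2t-1)}$, and $\Wm_u\widetilde\Om_u\widetilde\Wm_u^\dagger$ sends it to a uniformly random unit direction in the orthogonal complement of $\vv_u^{(1:2(t-1))}$, scaled by $\|\Pm^\perp\widetilde\phiv_u^{(t)}\|$. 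That random direction has the same law as $\mathcal{GS}(\gv_u^{(t)}\vert\vv_u^{(1:2(t-1))})$ for an independent Gaussian $\gv_u^{(t)}$, since the projection of an isotropic Gaussian onto a fixed subspace, once normalized, is uniform on its sphere. This gives the formula for $\widehat\qv_u^{(t)}$.

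\textbf{Step 2 (backward product).} Update the conditioning to include $\Om_u\widetilde\vv_u^{(2t-1)}=\vv_u^{(2t-1)}$. Then $\widehat\psiv_u^{(t)}$ is computed from \eqref{alg2}, which is identical in the two systems. Repeat the argument for $\Om_u^\dagger$: the output-side component of $\widehat\psiv_u^{(t)}$ orthogonal to $\vv_u^{(1:2t-1)}$ is normalized to $\vv_u^{(2t)}$. Its image under $\Om_u^\dagger$ is uniform on the complement of $\widetilde\vv_u^{(1:2t-1)}$, which is realized as $\mathcal{GS}(\widetilde\gv_u^{(t)}\vert\widetilde\vv_u^{(1:2t-1)})$. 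This yields the formula for $\widehat\phiv_u^{(t)}$. The denoising step \eqref{alg4} is the same deterministic map in both systems, so the induction closes, and equality of joint laws follows by chaining the conditional laws.

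\textbf{Degenerate cases and main obstacle.} When a vector lies in the current span, no new direction is revealed. In that case the arbitrary completion convention in the definition of $\mathcal{GS}$ is harmless, because the corresponding coefficient $\langle\cdot,\cdot\rangle$ vanishes; this is also where $N_u>T$ (more precisely $N_u\ge 2T$) is needed so that completions exist. The main obstacle is making the conditional-Haar statement rigorous when the revealed directions are themselves random and depend on the earlier $\gv$'s. I would handle it by conditioning on the full sigma-algebra generated by the previous steps, including $\gv_u^{(1:t-1)}$, $\widetilde\gv_u^{(1:t-1)}$, $\sv$ and $\nv$. On that sigma-algebra I would apply the invariance with measurably chosen unitaries $\Vm,\widetilde\Vm$ that fix the revealed frames. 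Finally, I would note that a uniform unit vector on a sphere of the complement is exactly the law produced by Gram--Schmidt of a fresh Gaussian.
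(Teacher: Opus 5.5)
Your argument is correct in substance, at the same level of rigor as the paper's, but it takes the conditioning (Bolthausen-type) route, whereas the paper's proof is constructive.

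You condition on the history and invoke the conditional law of a Haar matrix given its action on the revealed frames, justified by invariance. You then identify the uniformly random new direction with $\mathcal{GS}(\gv\mid\cdot)$ of a fresh Gaussian.

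The paper never conditions. It uses Lemma~\ref{conditioning}: the matrix $\frac1N\vv^{(1)}(\widetilde\vv^{(1)})^\dagger+\matr\Pi^\perp_{\vv^{(1)}}\Om^{(N-1)}(\matr\Pi^\perp_{\widetilde\vv^{(1)}})^\dagger$ is Haar \emph{and independent of} $\widetilde\vv^{(1)}$. Here $\vv^{(1)}=\mathcal{GS}(\gv^{(1)})$, $\widetilde\vv^{(1)}$ is an arbitrary independent unit vector, and $\Om^{(N-1)}$ is an independent smaller Haar block. The paper applies this recursively to the nested blocks $\Om^{(N-1)},\Om^{(N-2)},\ldots$. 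For the $\Om^\dagger$ half-steps the two sides swap roles: the data vector $\vv^{(2t)}$ sits on the output side and $\widetilde\gv^{(t)}$ on the input side. This builds $\Om=\frac1N\sum_{s\le 2t}\vv^{(s)}(\widetilde\vv^{(s)})^\dagger+\matr\Pi^\perp_{\vv^{(1:2t)}}\Om^{(N-2t)}(\matr\Pi^\perp_{\widetilde\vv^{(1:2t)}})^\dagger$, and the products then collapse to the stated finite sums.

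What each route buys:
\begin{itemize}
\item The paper's route avoids what you call your main obstacle. The independence clause of that lemma is what licenses inserting data-dependent vectors, so no conditional Haar law on an $\Om$-dependent $\sigma$-algebra with random frames is needed.
\item Your route makes explicit the degenerate Gram--Schmidt cases, which the paper glosses over.
\item It also shows that $2T$ orthonormal vectors are built, so the real requirement is $N_u\ge 2T$, not $N_u>T$. The paper implicitly needs this too, through $\Om^{(N-2t)}$.
\end{itemize}

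One justification should be tightened. Conditional independence of the residual blocks across users does not follow from the unconditional independence of the $\Om_u$ alone, because the history couples the users through $\widehat\psiv_u^{(t)}$. Apply your invariance argument jointly to $(\Om_1,\ldots,\Om_U)$, with simultaneous per-user rotations that fix each user's revealed frames. The conditional law of the residual blocks is then invariant under the product group, hence equal to the product Haar measure.
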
 
 \subsection{The High-Dimensional Equivalent}\label{step2}
We next use the preliminary results on high-dimensional probability given in Appendix~\ref{preliminariesop} to derive a high-dimensional representation of the Haar-free equivalent dynamics \eqref{hauseholderep}, and thereby a high-dimensional representation of the original residual dynamics \eqref{resoamp}. 

We will begin by introducing the effective high-dimensional representations of the dynamics in~\eqref{hauseholderep} (will be proved to be exact up
to $\Op{1}$ deviations). To this end, it is useful to introduce the following notation for \emph{Cholesky decompositions}: Let $\mathcal{A}^{(1:t')}$ denote a $t' \times t'$ matrix whose $(t,s)$-indexed entries are denoted by $\mathcal{A}^{(t,s)}$. For any positive semidefinite matrix $\mathcal{A}^{(1:t')} \geq \mathbf{0}$, let $\mathcal{B}^{(1:t')}$ be the corresponding $t' \times t'$ upper-triangular matrix such that $\mathcal{A}^{(1:t')} 
= \big(\mathcal{B}^{(1:t')}\big)^{\dagger} \mathcal{B}^{(1:t')}$. We denote this factorization by $\mathcal B^{(1:t')}={\rm chol}(\mathcal A^{(1:t')})$\;.
\begin{definition}\label{def-effective}
We define the high-dimensional effective fields of $\widehat\qv_u^{(t)}$, $\widehat\psiv_u^{(t)}$ and $\widehat\phiv_u^{(t)}$ in \eqref{hauseholderep} as
 \begin{subequations}
\label{qvs}
\begin{align}
\qv_{u}^{(t)}&=\sum_{1\leq s\leq t}{\widetilde{{\mathcal B}}}_u^{(s,t)}
\gv_u^{(s)}\label{qve}\\
\psiv_{u}^{(t)}&=\Mm_u^{(t)}\nv+(\Id-\Mm_u^{(t)}{\Qm}_u)\qv_{u}^{(t)}-\Mm_u^{(t)}\sum_{u'\neq u}{\Qm}_{u'}{\qv}_{u'}^{(t)}\;.\\
\phiv_{u}^{(t)}&= \sum_{1\leq s\leq t}{{{\mathcal B}}}_u^{(s,t)}
\widetilde\gv_u^{(s)}
\end{align}
\end{subequations}
where we introduce the coefficients ${\mathcal B}_u^{(s,t)}$ and $\widetilde{\mathcal B}_u^{(s,t)}$ through the Cholesky decompositions of the two time SE cross-correlation matrices (see Definition~\ref{SE_new}) as
 \begin{align}\label{Bdef}
\widetilde{\mathcal B}_u^{(1:T)}={\rm chol}(\widetilde{\mathcal C}_u^{(1:T)})~~\text{and}~~
\mathcal B_u^{(1:T)}={\rm chol}(\mathcal C_u^{(1:T)})\;.
\end{align}
Here, we note that while by construction it is clear that $\widetilde{\mathcal C}_u^{(1:T)} \geq \mathbf{0}$ it is somewhat implicit that $\mathcal C^{(1:T)} \geq \mathbf{0}$. On the other hand, by given that $\widetilde{\mathcal C}_u^{(t,s)} \equiv \mathbb{E}\big[\langle \mathbf{q}_u^{(t)}, \mathbf{q}_u^{(s)} \rangle\big]$ it follows that 
\[
\mathbb{E}\big[\langle \boldsymbol{\psi}_u^{(t)}, \boldsymbol{\psi}_u^{(s)} \rangle\big] \equiv\mathcal C_u^{(t,s)}.
\]
This clarifies the positive \emph{semi}-definiteness of $\widetilde{\mathcal C}_u^{(1:T)}$. 
Furthermore, by the positive-definiteness assumptions on $\widetilde{\mathcal C}_u^{(1:T)}$ [i.e. Assumption~\ref{asm_psd}], it follows that each ${\mathcal C}_u^{(1:T)}$ is positive-definite. Thereby, the Cholesky decompositions in \eqref{Bdef} are all unique. 
\end{definition}

We begin with the following auxiliary results.
\begin{lemma}\label{laux1}
For any $(u,t)\in[U]\times [T]$ we have (see Appendix~\ref{paux1})
\begin{align}
	\widetilde\phiv_u^{(t)}=\Op{\sqrt{N}}\quad \text{and}\quad
\widehat{\phiv}_u^{(t)}=\Op{\sqrt{N}}\;.
	\end{align}
\end{lemma}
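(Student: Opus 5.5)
We argue by induction on $t$, working directly with the original residual dynamics \eqref{resoamp}; by Lemma~\ref{lemmaseq} one could equally use the Haar-free representation, but the bound only involves norms, and these are invariant under the unitaries $\Om_u$. The basic tools are the $\Op{\cdot}$ calculus of Appendix~\ref{preliminariesop}, the operator-norm bounds of Assumption~\ref{as1}-(iii), and the assumption that $L/N_u=\Op{1}$. The last of these lets us treat $\sqrt{N}$, $\sqrt{N_u}$ and $\sqrt{L}$ as interchangeable scales. Concretely, we use that $XY=\Op{\kappa_1\kappa_2}$ whenever $X=\Op{\kappa_1}$ and $Y=\Op{\kappa_2}$, which follows from Hölder's inequality for $\mathcal L^p$ norms, together with the triangle inequality in $\mathcal L^p$.

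\textbf{Base case.} We have $\widetilde\phiv_u^{(1)}=\mathbb E[\sv_u]-\sv_u$. Its norm is at most $\norm{\sv_u}+\norm{\mathbb E[\sv_u]}$. The first term is $\Op{\sqrt{N_u}}$ by Assumption~\ref{as1}-(iii). The second is deterministic and bounded by $\mathbb E\norm{\sv_u}$ via Jensen's inequality, hence also $O(\sqrt{N_u})$.

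\textbf{From $\widetilde\phiv^{(t)}$ to $\widehat\phiv^{(t)}$.} Suppose $\widetilde\phiv_{u'}^{(t)}=\Op{\sqrt N}$ for all $u'$. Since $\Om_u$ is unitary, $\norm{\widehat\qv_{u'}^{(t)}}=\norm{\widetilde\phiv_{u'}^{(t)}}$ and $\norm{\widehat\phiv_u^{(t)}}=\norm{\widehat\psiv_u^{(t)}}$. From \eqref{alg2},
\[
\norm{\widehat\psiv_u^{(t)}}\le \norm{\Mm_u^{(t)}}_2\norm{\nv}+\big(1+\norm{\Mm_u^{(t)}}_2\norm{\Qm_u}_2\big)\norm{\widehat\qv_u^{(t)}}+\norm{\Mm_u^{(t)}}_2\sum_{u'\ne u}\norm{\Qm_{u'}}_2\norm{\widehat\qv_{u'}^{(t)}}.
\]
We bound each factor. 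First, $\norm{\Qm_u}_2=\norm{\Am_u\Om_u^\dagger}_2=\norm{\Am_u}_2=\Op{1}$. Second, $\norm{\Mm_u^{(t)}}_2\le \norm{\Qm_u}_2\norm{\Sigmam^{(t)}}_2\,(\langle\Am_u^\dagger\Sigmam^{(t)}\Am_u\rangle)^{-1}=\Op{1}$ by Assumption~\ref{as1}-(iii) and the product rule. Third, $\norm{\nv}=\Op{\sqrt L}$, because all moments of $\norm{\nv}^2/L$ are bounded; this is a chi-square fact. Multiplying and summing finitely many $\Op{1}$ factors gives $\widehat\phiv_u^{(t)}=\Op{\sqrt N}$.

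\textbf{From $\widehat\phiv^{(t)}$ to $\widetilde\phiv^{(t+1)}$.} This is the step I expect to need the most care. Write $\rv=\sv_u+\widehat\phiv_u^{(t)}$, and note that $\tau_u^{(t)}$ is random. Using the Lipschitz property of $\eta_u$,
\[
\norm{f_{u,t}(\rv)}\le \frac{\tau_u^{(t)}\big(\mathrm{Lip}\,\norm{\rv}+\norm{\eta_u(\matr 0;\tau_u^{(t)})}\big)+|\nu_u^{(t)}|\norm{\rv}}{\epsilon_u}.
\]
We then check each factor is polynomially controlled:
\begin{itemize}
\item[(a)] $\mathrm{Lip}$ and $\norm{\eta_u(\matr 0;\tau)}/\sqrt{N_u}$ are $O((\tau_u^{(t)})^{\pm p})$. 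Both $\tau_u^{(t)}$ and $(\tau_u^{(t)})^{-1}$ are $\Op{1}$: the inverse by Assumption~\ref{as1}-(iii), and $\tau_u^{(t)}$ itself from \eqref{taut} and the operator-norm bounds. Hence both quantities are $\Op{1}$.
\item[(b)] $|\nu_u^{(t)}|\le \tau_u^{(t)}\,\mathrm{Lip}=\Op{1}$, because the Jacobian of a Lipschitz map has spectral norm at most $\mathrm{Lip}$.
\item[(c)] $1/\epsilon_u=\Op{1}$. Here $\lambda_u^{(1)}$ is deterministic and bounded, and $\langle\Am_u^\dagger\Am_u\rangle\le\norm{\Am_u}_2^2=\Op{1}$. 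We also need $1/\langle\Am_u^\dagger\Am_u\rangle=\Op{1}$, which I would derive from $\langle\Am_u^\dagger\Sigmam\Am_u\rangle\le\norm{\Sigmam}_2\langle\Am_u^\dagger\Am_u\rangle$ combined with Assumption~\ref{as1}-(iii).
\end{itemize}
Combining (a)–(c) with $\norm{\rv}=\Op{\sqrt N}$ gives $f_{u,t}(\rv)=\Op{\sqrt N}$. Subtracting $\sv_u$ then yields $\widetilde\phiv_u^{(t+1)}=\Op{\sqrt N}$, which closes the induction. Since $T$ and $U$ are fixed, the constants stay finite.
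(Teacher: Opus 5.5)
Your route is the paper's route: an induction over $t$ that pushes norm bounds through one iteration, using $\norm{\Am_u}_2,\norm{\Sigmam^{(t)}}_2,(\langle\Am_u^\dagger\Sigmam^{(t)}\Am_u\rangle)^{-1}=\Op{1}$, $\norm{\nv}=\Op{\sqrt L}$, a Lipschitz bound on $f_{u,t}$, and the sum and product rules for $\Op{\cdot}$. The paper runs this on the $\sqrt L$-normalized iterates of \eqref{oamp0} rather than on the residual recursion \eqref{resoamp}; by unitary invariance of the norm this makes no difference. Your handling of the denoiser step is actually more explicit than the paper's: you keep the $f_{u,t}(\matr 0)$ term, you control the denominator through $\max(\epsilon_u,\cdot)\ge\epsilon_u$, and you bound $|\nu_u^{(t)}|\le\tau_u^{(t)}\mathrm{Lip}$.

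One step does not go through as you justify it. In (a) you say $\tau_u^{(t)}=\Op{1}$ follows from \eqref{taut} and the operator-norm bounds. But \eqref{taut} contains $\Rm^{(t)}=\sigma^2\Id_L+\sum_{u'}\lambda_{u'}^{(t)}\Am_{u'}\Am_{u'}^\dagger$, so what those bounds give you is
\[
\tau_u^{(t)}\le\frac{\norm{\Am_u}_2^2\,\norm{\Sigmam^{(t)}}_2^2\,\bigl(\sigma^2+\sum_{u'}\lambda_{u'}^{(t)}\norm{\Am_{u'}}_2^2\bigr)}{\langle\Am_u^\dagger\Sigmam^{(t)}\Am_u\rangle^{2}}.
\]
This is $\Op{1}$ only once you also know $\lambda_{u'}^{(t)}=\Op{1}$ for every $u'$.

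For $t=1$ this is immediate, since $\lambda_u^{(1)}\le\frac{1}{N_u}\mathbb E\norm{\sv_u}^2$. For $t\ge 2$ the bound has to be carried along the state-evolution recursion:
\begin{itemize}
\item add $\lambda_{u}^{(t)}=\Op{1}$ (for all $u$) to your induction hypothesis;
\item close it by applying your own bound $\norm{f_{u,t}(\rv)}\le\mathrm{Lip}(f_{u,t})\norm{\rv}+\norm{f_{u,t}(\matr 0)}$ inside $\lambda_u^{(t+1)}=\frac{1}{N_u}\mathbb E\norm{\sv_u-f_{u,t}(\sv_u+\sqrt{\tau_u^{(t)}}\zv)}^2$.
\end{itemize}
This works because $\mathrm{Lip}(f_{u,t})$ and $\norm{f_{u,t}(\matr 0)}$ are measurable with respect to $\{\Am_u\Am_u^\dagger\}$. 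The resulting chain $\tau,\nu\Rightarrow\mathrm{Lip}(f_{u,t})\Rightarrow\lambda^{(t+1)}\Rightarrow\tau^{(t+1)}$ is exactly what the paper's Lemma~\ref{lemmaft} supplies.

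A minor point on (c): you do not need $1/\langle\Am_u^\dagger\Am_u\rangle=\Op{1}$. Since $1/\epsilon_u=\langle\Am_u^\dagger\Am_u\rangle\bigl(\sigma^2+\lambda_u^{(1)}\langle\Am_u^\dagger\Am_u\rangle\bigr)/\sigma^4$ is increasing in $\langle\Am_u^\dagger\Am_u\rangle$, the upper bound alone suffices. Your derivation of the lower bound is correct, just unnecessary.
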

In particular, these results imply that for any $(s,t,u)$
\begin{subequations}
 \label{aux1}
\begin{align}
\vert\langle \widetilde\vv_u^{(s)},\widetilde\phiv_u^{(t)}\rangle\vert&{\leq}  \frac{\norm{\widetilde\vv_u^{(s)}}}{\sqrt {N_u}}\frac{\norm{\widetilde\phiv_u^{(t)}}}{\sqrt {N_u}}=\frac{\norm{\widetilde\phiv^{(t)}}}{\sqrt{N_u}}=\Op{1}\\
\vert\langle \vv_u^{(s)},\widehat\psiv_u^{(t)}\rangle\vert&{\leq}  \frac{\norm{\vv_u^{(s)}}}{\sqrt {N_u}}\frac{\norm{\widehat\psiv_u^{(t)}}}{\sqrt {N_u}}=\frac{\norm{\widehat\phiv_u^{(t)}}}{\sqrt{N_u}}=\Op{1}
\end{align}    
\end{subequations}
where the inequalities follow from the Cauchy-Schwarz inequality.

Second, we have the following auxiliary result.
\begin{lemma}\label{laux2}
For any $(u,t)\in[U]\times [T]$ we have (see Appendix~\ref{paux2})
		\label{aux2}
		\begin{align}
			\vv_u^{(2t-1)}= \gv_u^{(t)}+\Op{1}\quad\text{and}\quad
			\widetilde\vv_u^{(2t)}= \widetilde\gv_u^{(t)}+\Op{1}\;.
		\end{align}   
\end{lemma}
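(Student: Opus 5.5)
We argue by induction on $t$ within the Gram--Schmidt construction of Lemma~\ref{lemmaseq}. For $\vv_u^{(2t-1)}=\mathcal{GS}(\gv_u^{(t)}\vert \vv_u^{(1:2(t-1))})$, write $\Pm^\perp\doteq \Pm^\perp_{\vv_u^{(1:2(t-1))}}$. Then
\[
\vv_u^{(2t-1)}-\gv_u^{(t)}=\Big(\tfrac{\sqrt{N_u}}{\|\Pm^\perp\gv_u^{(t)}\|}-1\Big)\Pm^\perp\gv_u^{(t)}-\tfrac{1}{N_u}\sum_{s<2t-1}\vv_u^{(s)}(\vv_u^{(s)})^\dagger\gv_u^{(t)} .
\]
The proof reduces to two statements. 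First, each inner product $\langle \vv_u^{(s)},\gv_u^{(t)}\rangle$ with $s<2t-1$ is $\Op{N_u^{-1/2}}$. Second, $\|\Pm^\perp\gv_u^{(t)}\|/\sqrt{N_u}=1+\Op{N_u^{-1/2}}$. Since $\|\vv_u^{(s)}\|=\sqrt{N_u}$, the projection term then has norm $\sqrt{N_u}\,\Op{N_u^{-1/2}}$ per summand, i.e.\ $\Op{1}$ in total because there are at most $2T$ summands. The normalization term has norm
\[
\Big|\sqrt{N_u}-\|\Pm^\perp\gv_u^{(t)}\|\Big|=\sqrt{N_u}\,\Op{N_u^{-1/2}}=\Op{1}.
\]

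The key structural fact is \emph{independence}: $\gv_u^{(t)}$ is drawn fresh and is independent of $\vv_u^{(1:2(t-1))}$. Those vectors are measurable with respect to $\{\gv_u^{(s)},\widetilde\gv_u^{(s)}\}_{s<t}$, the noise, the signals, and the other users' variables up to step $t-1$. Moreover, $\gv_u^{(t)}$ enters only at stage $2t-1$.

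Conditioned on $\vv_u^{(1:2(t-1))}$, each $\frac{1}{\sqrt{N_u}}(\vv_u^{(s)})^\dagger\gv_u^{(t)}\sim\mathcal{CN}(0,1)$, because $\|\vv_u^{(s)}\|^2/N_u=1$. Hence $\langle\vv_u^{(s)},\gv_u^{(t)}\rangle=N_u^{-1/2}\cdot\mathcal{CN}(0,1)$, and Gaussian moments give $\Op{N_u^{-1/2}}$ with constants uniform in the conditioning. For the norm, write
\[
\|\Pm^\perp\gv_u^{(t)}\|^2=\|\gv_u^{(t)}\|^2-N_u\sum_{s<2t-1}|\langle\vv_u^{(s)},\gv_u^{(t)}\rangle|^2 .
\]
The standard chi-square concentration, Lemma~\ref{conprod}-type from Appendix~\ref{preliminariesop}, gives $\|\gv_u^{(t)}\|^2/N_u=1+\Op{N_u^{-1/2}}$. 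The subtracted sum divided by $N_u$ is $\Op{N_u^{-1}}$. Taking square roots via $|\sqrt{x}-1|\le|x-1|$ yields the normalization claim.

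The degenerate branch of $\mathcal{GS}$, where $\gv_u^{(t)}\in{\rm span}(\vv_u^{(1:2(t-1))})$, has probability zero once $N_u>T$, so it does not affect the $\mathcal L^p$ norms. The argument for $\widetilde\vv_u^{(2t)}=\mathcal{GS}(\widetilde\gv_u^{(t)}\vert\widetilde\vv_u^{(1:2t-1)})$ is identical. Here $\widetilde\gv_u^{(t)}$ is independent of $\widetilde\vv_u^{(1:2t-1)}$, which depend only on $\widetilde\phiv_u^{(1:t)}$ and on the earlier $\widetilde\gv_u^{(s)}$, $s<t$.

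The main obstacle is a subtlety about the $\Op{\cdot}$ calculus rather than any algebra. One must check that conditional moment bounds with deterministic constants transfer to unconditional $\mathcal L^p$ bounds; they do, by the tower property. One must also check that the product rule $\Op{1}\cdot\Op{1}=\Op{1}$ and finite sums remain valid when $U,T$ are fixed, which they are. Finally, since $L/N_u=\Op{1}$, rates in $N_u$ and in $L$ are interchangeable.
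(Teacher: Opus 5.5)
Your proof is correct and takes essentially the same route as the paper's (Appendix~\ref{paux2}). There, the projection of the fresh Gaussian onto the finite-dimensional span of the earlier Gram--Schmidt vectors is $\Op{1}$ because it is distributed as a $\mathcal{CN}(\mathbf{0},\Id_t)$ vector, and the normalization factor is $1+\Op{N_u^{-1/2}}$ by chi-square concentration together with $|\sqrt{x}-1|\le|x-1|$. Your explicit check that $\gv_u^{(t)}$ and $\widetilde\gv_u^{(t)}$ are independent of the respective earlier vectors $\vv_u^{(1:2(t-1))}$ and $\widetilde\vv_u^{(1:2t-1)}$ supplies the independence hypothesis that the paper simply posits for its generic projection $\Pm$.
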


Then, using these results together with the arithmetic properties of $\Op{\kappa}$ in Lemma~\ref{lemma:op_properties}, we obtain from \eqref{hauseholderep} that :
\begin{align}
\widehat{\qv}_u^{(t)}&=
\sum_{1\leq s\leq t}\widehat{{\widetilde{\mathcal B}}}_u^{(s,t)}
\gv_u^{(s)} +\matr\delta_{q_u}^{(t)}+\Op{1}\label{qv}\\
\widehat\phiv_u^{(t)}&= \sum_{1\leq s\leq t}\widehat{{\mathcal B}}^{(s,t)}_u\widetilde\gv_u^{(s)}+\matr \delta_{\phi_u}^{(t)}+\Op{1}\label{phiv}
\end{align}
where we have defined, for all $s\in[t]$,
\begin{align}
		\widehat{\widetilde{{\mathcal B}}}_u^{(s,t)}&\doteq \langle \widetilde\vv_u^{(2s-1)},\widetilde\phiv_u^{(t)} \rangle\\
		\widehat{{{\mathcal B}}}_u^{(s,t)}&\doteq \langle \vv_u^{(2s)},\widehat\psiv_u^{(t)}\rangle\\
		\matr \delta_{q_u}^{(t)}&\doteq\sum_{1\leq s\leq t}\vv_u^{(2s)}\langle \widetilde\gv_u^{(s)},{\widetilde\phiv}_u^{(t)} \rangle \\
		\matr \delta_{\phi_u}^{(t)}&\doteq \sum_{1\leq s\leq t}\widetilde\vv_u^{(2s-1)}\langle\gv_u^{(s)},\widehat\psiv_u^{(t)} \rangle.
	\end{align}	

In the proof, we essentially need to show the concentrations of the empirical dynamical order parameters $\widehat{\widetilde{{\mathcal B}}}_u^{(s,t)}$ and $\widehat{{{\mathcal B}}}_u^{(s,t)}$. To this end, it is convenient to express them by means of the Cholesky decomposition operation. Specifically, let us introduce the projected dynamics as
\begin{align}
\widetilde\phiv_{u,\perp}^{(t)}&\doteq (\Id_{N_u}-\frac{1}{N_u}\sum_{1\leq s<t}\widetilde\vv_u^{(2s)}(\widetilde\vv_u^{(2s)})^\top)\widetilde\phiv_u^{(t)}\\
\widehat\psiv_{u,\perp}^{(t)}&\doteq  (\Id_{N_u}-\frac{1}{N_u}\sum_{1\leq s\leq t}\vv_u^{(2s-1)}(\vv_u^{(2s-1)})^\top)\widehat\psiv_u^{(t)}\;.
\end{align}
Then, we define the empirical cross-correlations for $t,s\in[T]$:
\begin{align}
\widehat{\widetilde{\mathcal C}}_u^{(t,s)}\doteq 
 \langle {\widetilde {\phiv}}_{u,\perp}^{(t)}, {\widetilde {\phiv}}_{u,\perp}^{(s)} \rangle
\text{~~and~~} \widehat{\mathcal C}_u^{(t,s)}\doteq\langle\widehat{\psiv}_{u,\perp}^{(t)},\widehat{\psiv}_{u,\perp}^{(s)} \rangle \;.
\end{align}
Notice from \eqref{gsd} that 
 \begin{align}
\widehat{\widetilde{\mathcal B}}_u^{(1:T)}={\rm chol}(\widehat{\widetilde{\mathcal C}}_u^{(1:T)})~~\text{and}~~
\widehat{\mathcal B}_u^{(1:T)}={\rm chol}(\widehat{\mathcal C}^{(1:T)}_u)\;.\label{ebg2}
\end{align}

Let us introduce the hypothesis  ${\mathcal H}_{t'}$ for any $1\leq t\leq t'$
\begin{subequations}
 \begin{align}
 \widehat {{\mathcal B}}^{(1:t')}_u&= {\mathcal B}^{(1:t')}_u+\Op{L^{-\frac 1 2}}, ~~ \matr \delta_{\phi_u}^{(t)}=\Op{1}  \label{H22}\\
\widehat{\widetilde{\mathcal B}}_u^{(1:t')}&=
\widetilde{\mathcal B}^{(1:t')}_u+\Op{L^{-\frac 1 2}},~~
\matr \delta_{q_u}^{(t)}=\Op{1}\;\label{H21}.
	\end{align}   
\end{subequations}

We complete the proof of Theorem~\ref{th1} by verifying the induction step, i.e., ${\mathcal H}_{t'-1}\implies {\mathcal H}_{t'}$, together with the base case ${\mathcal H}_1$. These follow easily by using the following induction-step results:
\begin{lemma}\label{lemcon2}
		Let Assumption~\ref{as1} hold.  
		\begin{itemize}
			\item [(i)] If \eqref{H22} holds for $t\in[t'-1]$, then for $1\leq s\leq t\leq t'$ 
			\begin{subequations}
				\label{concen2}
				\begin{align}
					\langle \widetilde{\gv}_u^{(s)}, \widetilde{\phiv}_u^{(t)}\rangle&=
					\Op{L^{-\frac 1 2 }}\label{res21}\\
						\widehat{\widetilde{\mathcal C}}_u^{(t,s)}&= \widetilde{\mathcal C}_u^{(t,s)}+\Op{L^{-\frac 1 2 }}\label{res22}\;.
				\end{align}	
			\end{subequations}
			\item [(ii)] If \eqref{H21} holds for $t\in[t']$, then for $1\leq s\leq t\leq t'$ 
			\begin{subequations}
				\begin{align} 
					\langle{\gv}_u^{(s)},\widehat\psiv_u^{(t)} \rangle&=\Op{L^{-\frac 1 2}}\label{res24}\\
                \widehat{{\mathcal C}}_u^{(t,s)}&=\mathcal C^{(t,s)}_u+\Op{L^{-\frac 1 2}}\label{res23}\;.
				\end{align}
			\end{subequations}			
		\end{itemize}
	\end{lemma}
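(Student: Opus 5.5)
We prove both parts by conditioning on the ``past'' of the Haar-free dynamics \eqref{hauseholderep} and exploiting that the fresh Gaussian vector entering at a given step is independent of everything generated before it. For part (i), the key observation is the independence structure of $\widetilde\phiv_u^{(t)}$. By construction $\widetilde\phiv_u^{(t)}=f_{u,t-1}(\sv_u+\widehat\phiv_u^{(t-1)})-\sv_u$, and $\widehat\phiv_u^{(t-1)}$ is built from $\widetilde\vv_u^{(1:2(t-1))}$. Under \eqref{H22} for $t-1\le t'-1$, together with Lemma~\ref{laux2} and the bound \eqref{aux1}, we can replace $\widehat\phiv_u^{(t-1)}$ by $\phiv_u^{(t-1)}=\sum_{s\le t-1}\mathcal B_u^{(s,t-1)}\widetilde\gv_u^{(s)}$ up to an $\Op{1}$ error in Euclidean norm. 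Since $f_{u,t-1}$ is Lipschitz with constant $\Op{1}$ (here we use $(\tau_u^{(t)})^{-1}=\Op{1}$, the $O(\tau^{\pm p})$ Lipschitz bound, and $\epsilon_u^{-1}=\Op{1}$ from Assumption~\ref{as1}), the error propagates to $\widetilde\phiv_u^{(t)}=\widetilde\phiv_{u,\rm eff}^{(t)}+\Op{1}$, where $\widetilde\phiv_{u,\rm eff}^{(t)}\doteq f_{u,t-1}(\sv_u+\phiv_u^{(t-1)})-\sv_u$. Normalized inner products then change by only $\Op{L^{-1/2}}$ by Cauchy--Schwarz and Lemma~\ref{laux1}.

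For \eqref{res21}, we consider two cases. If $s\ge t$, then $\widetilde\gv_u^{(s)}$ is independent of $\widetilde\phiv_{u,\rm eff}^{(t)}$, so conditional Gaussianity gives $\langle\widetilde\gv_u^{(s)},\widetilde\phiv_{u,\rm eff}^{(t)}\rangle=\Op{L^{-1/2}}$ via the moment bounds in Appendix~\ref{preliminariesop}. If $s<t$, we apply Stein's lemma (Gaussian integration by parts) to obtain
\[
\mathbb E\langle \widetilde\gv_u^{(s)},\widetilde\phiv_{u,\rm eff}^{(t)}\rangle=\mathcal B_u^{(s,t-1)}\,\mathbb E\langle f'_{u,t-1}\rangle .
\]
The normalization defining $f_{u,t}$ (the divergence-free correction $\tau\eta-\nu\rv$) makes this expectation vanish exactly when $\nu$ is the SE value. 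The fluctuation around the mean is then $\Op{L^{-1/2}}$ by the conditional concentration in Assumption~\ref{add_asm}, applied through $\mathbb E[\langle\phiv_1,\eta(\sv+\phiv_1)\rangle\mid\sv_u]$.

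For \eqref{res22}, we first note that $\widetilde\phiv_{u,\perp}^{(t)}$ differs from $\widetilde\phiv_u^{(t)}$ only by projections onto $\widetilde\vv_u^{(2s)}\approx\widetilde\gv_u^{(s)}$. By \eqref{res21} these projections contribute $\Op{1}$ in norm. Hence $\widehat{\widetilde{\mathcal C}}_u^{(t,s)}=\langle\widetilde\phiv_{u,\rm eff}^{(t)},\widetilde\phiv_{u,\rm eff}^{(s)}\rangle+\Op{L^{-1/2}}$. Expanding the right-hand side gives terms of the form $\langle\eta,\eta\rangle$, $\langle\sv,\eta\rangle$, $\langle\phiv,\eta\rangle$ and $\langle\sv,\sv\rangle$. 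Each concentrates on its expectation by Assumption~\ref{add_asm}, after first concentrating over $\widetilde\gv$ given $\sv_u$, which is a Lipschitz function of a Gaussian. Since $(\phiv_u^{(t-1)},\phiv_u^{(s-1)})$ has covariance $\mathcal C_u$ by the Cholesky definition \eqref{Bdef}, the limiting expectation is exactly $\widetilde{\mathcal C}_u^{(t,s)}$ in \eqref{Cuts}.

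Part (ii) follows the same pattern. Using \eqref{H21}, we replace $\widehat\qv_{u'}^{(t)}$ by $\qv_{u'}^{(t)}$ for every $u'$, and the linear map in \eqref{alg2} has spectral norm $\Op{1}$ by Assumption~\ref{as1}(iii). This yields $\widehat\psiv_u^{(t)}=\psiv_u^{(t)}+\Op{1}$, with $\psiv_u^{(t)}$ Gaussian given $\{\Am_u\}$ and $\nv$ independent. For \eqref{res24}, we compute $\mathbb E\langle\gv_u^{(s)},\psiv_u^{(t)}\rangle=\widetilde{\mathcal B}_u^{(s,t)}\langle\Id-\Mm_u^{(t)}\Qm_u\rangle$. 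This vanishes because $\langle\Mm_u^{(t)}\Qm_u\rangle=1$ by the normalization in \eqref{Mut}, i.e. $\langle\Qm_u^\dagger\Sigmam\Qm_u\rangle=\langle\Am_u^\dagger\Sigmam\Am_u\rangle$ since the $\Om_u$ are unitary. The cross-user terms have zero mean because $\gv_u$ and $\gv_{u'}$ are independent. The remaining fluctuation is a Gaussian quadratic form with bounded matrix, hence $\Op{L^{-1/2}}$ by Hanson--Wright-type bounds. The same quadratic-form concentration gives \eqref{res23}: $\langle\psiv_u^{(t)},\psiv_u^{(s)}\rangle$ concentrates on its trace expectation, which equals $\mathcal C_u^{(t,s)}$ by construction. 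The projection onto $\vv^{(2s-1)}\approx\gv^{(s)}$ is negligible by \eqref{res24}.

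The main obstacle is \eqref{res21} for $s<t$. It requires that the correction in $f_{u,t}$ produces exactly zero divergence at the SE values of $(\tau,\nu)$, and that the empirical-vs-SE discrepancy in the denominator $\max(\epsilon_u,\tau-\nu)$ is controlled uniformly. To handle this, we would make the Stein computation explicit with complex Wirtinger derivatives and treat the two-time correlated Gaussian arguments carefully.
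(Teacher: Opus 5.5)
Your proposal is correct and follows essentially the same route as the paper. You replace $\widetilde\phiv_u^{(t)}$ and $\widehat\psiv_u^{(t)}$ by their effective fields using the induction hypothesis. You then kill the inner products with the fresh Gaussians via Stein's lemma with $\mathbb E\langle f'_{u,t}\rangle=0$, respectively via $\langle \Mm_u^{(t)}\Qm_u\rangle=1$. Finally you concentrate via Lemma~\ref{lemmakey} and Assumption~\ref{add_asm}, respectively Lemma~\ref{conprod}.

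The ``main obstacle'' you flag is not one. In $f_{u,t}$, the quantities $\tau_u^{(t)}$ and $\nu_u^{(t)}$ are the SE values, which are deterministic given $\{\Am_u\Am_u^\dagger\}$, and $\phiv_u^{(t)}$ has per-entry variance $\mathcal C_u^{(t,t)}=\tau_u^{(t)}$. Hence the expected divergence vanishes exactly, and there is no empirical-versus-SE discrepancy in the denominator $\max(\epsilon_u,\tau_u^{(t)}-\nu_u^{(t)})$.
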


We begin by verifying the induction step, that is, ${\mathcal H}_{t'-1}\implies {\mathcal H}_{t'}$. From \eqref{res21}, for all $t\in[t']$,
\begin{equation}
    \matr\delta_{q_u}^{(t)}= \Op{1}\;.
\end{equation}
Moreover, applying Lemma~\ref{lemchol1} in Appendix~\ref{preliminariesop} to \eqref{res22} we obtain that
\begin{equation}
   \widehat{\widetilde{\mathcal B}}_u^{(1:t')}=
	\widetilde{\mathcal B}_u^{(1:t')}\; +\Op{L^{-\frac 1 2}}.
\end{equation}
In other words, ${\mathcal H}_{t'-1}$ implies that \eqref{H21} also holds for $t'$. 

Similarly, from \eqref{res24}, we have for all $t\in[t']$
\begin{equation}
    \matr\delta_{\phi_u}^{(t)}=\Op{1}\;.
\end{equation}
Moreover, applying Lemma~\ref{lemchol1} in Appendix~\ref{preliminariesop} to \eqref{res23}, we obtain that
\begin{equation}
   \widehat{{\mathcal B}}_u^{(1:t')}=
	{\mathcal B}_u^{(1:t')}\; +\Op{L^{-\frac 1 2}}.
\end{equation}
This completes the proof of the induction step. 

As for the proof of the base case $\mathcal{H}_1$, we note that $\matr{\delta}^{(t)}_{q_u} = \matr{0}$. Moreover,  we have
\begin{align}
 \widehat{\widetilde{\mathcal{B}}}_u^{(1,1)} &= \frac{\|\sv_u-\mathbb E[\sv_u]\|}{\sqrt{N_u}} \\ 
    &=\sqrt{\widetilde{\mathcal{C}}_u^{(1,1)}+\Op{N^{-\frac 1 2}}}=
    \widetilde{\mathcal B}_u^{(1,1)}+\Op{L^{-\frac 1 2}}.
\end{align}
 Hence, \eqref{H21} holds for $t=1$. Then, from Lemma~\ref{lemcon2}-(ii), we have 
\begin{align}
    \langle \gv_u^{(1)}, \widehat \psiv_u^{(1)} \rangle=\Op{L^{-\frac 1 2}}  &\implies \matr \delta_{q_u}^{(1)}= \Op{1}, \\  
    \widehat{\mathcal{C}}_u^{(1,1)}= \mathcal{C}_u^{(1,1)}+\Op{L^{-\frac 1 2}} &\implies \widehat{\mathcal{B}}_u^{(1,1)}= \mathcal{B}_u^{(1,1)}+\Op{L^{-\frac 1 2}}.
\end{align}
This completes the proof of Theorem~\ref{th1}.

\section{Conclusion and Outlook}\label{sec_conc}

In this paper we have introduced a signal-recovery framework for a multiuser linear-Gaussian 
communication system, dubbed MU-OAMP, in which $U$ users simultaneously transmit to a common receiver (e.g., a base station) using random precoding at the transmitters. The proposed framework accommodates generic channel matrices, right-unitarily invariant precoding matrices, and non-separable input signals.

We provide an explicit high-dimensional analysis of the proposed MU-OAMP algorithm in the case of generic non-separable signal models.
Furthermore, we analyze the associated inference problem using the replica symmetry (RS) ansatz and derive asymptotic expressions for both the mutual information and the per-user MMSE. The derivation under the RS ansatz is based on a novel disorder-averaging technique.
The finite-sample analysis is shown to coincide with the RS predictions, implying that---under the validity of the RS ansatz---the proposed algorithm is asymptotically Bayes-optimal.

A recent related work in the context of high-dimensional time series~\cite{tieplova2025information}, where the authors rigorously derived replica predictions for the mutual information under similar models involving structured channel matrices and random precoding (e.g., circulant and i.i.d.\ Gaussian matrices). However, no algorithm was proposed in that work to attain the Bayes-optimal performance predicted by theory. The present paper resolves this issue.
From a technical perspective, an important future direction is to establish the universality of the SE analysis beyond Haar-distributed unitary matrices, for instance by considering pseudo-random unitary constructions such as randomly signed Fourier matrices\cite{anderson2014asymptotically,hu2022universality,wang2022universality}.  

From a practical standpoint, an important limitation of the current work is the assumption of 
perfect channel knowledge at the receiver. Investigating the impact of imperfect channel knowledge, 
particularly under realistic pilot-based channel estimation schemes, remains an important direction for future research.

%%%%%%%%%%%%%%%%%%%%%%%%%%%%%%%%%%%

\appendices

%%%%%%%%%%%%%%%
\section{Preliminaries from High Dimensional Probability}\label{preliminariesop}
In this section, we present the useful properties of concentration inequalities in terms of the notion of $\Op{\kappa}$ introduced in Section \ref{o1_notation}. 
\begin{lemma}\cite{cakmakit25}\label{lemma:op_properties}
Consider the random variables ${A} = \Op{\kappa}$ and ${B} = \Op{\widetilde\kappa}$. Then, 
		\begin{align}
			{A + B} &= \Op{\max(\kappa, \widetilde \kappa)}\label{eq:op_sum}\\
			{AB}&= \Op{\kappa \widetilde \kappa}\label{eq:op_prod}		%\sqrt{A}&=\Op{\sqrt{\kappa}}.\label{eq:op_sqrt_pos}
		\end{align}
%\begin{proof}
%{\brk Note that for any small $\epsilon>0$ we have $\Vert A\Vert_{\mathcal L^p}\leq N^{\frac{\epsilon}{2}} C_p\kappa$ and $\Vert A\Vert_{\mathcal L^p}\leq N^{\frac \epsilon 2}C_p\widetilde\kappa$ for some constants $C_p$. Then, the results \eqref{eq:op_sum} and \eqref{eq:op_prod} follow from the Minkowski inequality  (i.e., $\Vert A+B \Vert_{\mathcal L^p}\leq\Vert A\Vert_{\mathcal L^p}+\Vert B\Vert_{\mathcal L^p}$) and H\"older’s inequality (i.e.,  $\Vert A B \Vert_{\mathcal L^p}\leq\Vert A\Vert_{\mathcal L^{2p}}\Vert B \Vert_{\mathcal L^{2p}}$), respectively. The result \eqref{eq:op_sqrt_pos} follows from the inequality $\Vert {A}^{1/2} \Vert_{\mathcal L^{p}}\leq \Vert {\rm A}^{1/2} \Vert_{\mathcal L^{2p}}=\Vert {\rm A} \Vert_{\mathcal L^{p}}^{\frac 1 2}$\;.}
%\end{proof}
	\end{lemma}

\begin{lemma}\cite{cakmakit25}
\label{lemchol1}
Consider $T\times T$ matrices $\widehat{\mathcal C}\geq \matr 0$ and $\mathcal C>\epsilon\mathcal I$ where $\norm{\widehat{\mathcal C}}_{\rm F}=\Op{1}$ and $\epsilon>0$ is independent of $N$.  Let $\norm{\widehat{\mathcal C}-\mathcal C}_{\rm F}=\Op{N^{-c}}$ for some constant $c>0$. Then, 
\begin{equation}
\norm{{\rm chol}(\widehat{\mathcal C})-{\rm chol}({\mathcal C})}_{\rm F}=\Op{N^{-c}}
\end{equation}
Here, for $\mathcal C\geq \matr 0$, let ${\mathcal B}={\rm chol}(\mathcal C)$ for $\mathcal C\geq \matr 0$ being the upper-triangular matrix such that $\mathcal C={\mathcal B}^\dagger{\mathcal B}$. 
\begin{proof}
Note that the entries of $\mathcal B = \operatorname{chol}(\mathcal C)$ satisfy
\begin{equation}
\mathcal B^{(s,t)} \mathcal B^{(s,s)}
=
\mathcal C^{(t,s)}
-
\sum_{s'=1}^{s-1}
\mathcal B^{(s',t)} \mathcal B^{(s',s)}.
\end{equation}
Note also that given $\mathcal C>\epsilon\mathcal I$ we have $\mathcal B^{(t,t)}>\sqrt{\epsilon}$ for each $t$. 
Then, the proof easily follows from the arithmetic properties of $\Op{\kappa}$ together with the fact $\sqrt{1+\Op{N^{-c}}}=1+\Op{N^{-c}}$ which follows from the bound $\vert 1- \sqrt{1+x}\vert=\frac{\vert x\vert}{1+\sqrt{1+x}}\leq \vert x\vert$\;. 
\end{proof}
 \end{lemma}
\begin{lemma}\label{lemmakey}
Consider a mapping \( h(\av;\theta):\mathbb{C}^{3N}\times \RR\to \mathbb{C} \) such that for every
\( \av,\bv \in \mathbb{C}^{3N} \) there exists a random variable \({\rm Lip}(\theta)>0\) such that
\[
\bigl| h(\av;\theta)-h(\bv;\theta) \bigr|
\leq
\frac{{{\rm Lip}}(\theta)}{\sqrt N}\left(1+\frac{\|\av\|}{\sqrt{N}}+\frac{\|\bv\|}{\sqrt{N}}\right)\|\av-\bv\|.
\]
Let the RVs \( \sv\in\mathbb{C}^{N} \) and \( \zv\in\mathbb{C}^{2N} \) and $\theta$ be mutually indepdent independent, with $\frac{1}{\sqrt N}\|\sv\| = \Op{1}$, ${\rm Lip}(\theta)=\Op{1}$ and $\zv \sim \mathcal{CN}(\mathbf{0},\Id_{2N})$.
Then,
\[
h(\sv,\zv;\theta)
=
\mathbb{E}\!\left[h(\sv,\zv;\theta)\mid \sv\right]
+
\Op{N^{-\frac 1 2}}\;.
\]
Here, the expectation is taken conditionally on $\mathbf{s}$ and implicitly on $\theta$. 
\end{lemma}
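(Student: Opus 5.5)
We prove Lemma~\ref{lemmakey} through a Gaussian concentration inequality for functions that are only \emph{locally} Lipschitz, applied conditionally on $(\sv,\theta)$. Because $\zv$ is independent of $(\sv,\theta)$, we freeze $\sv$ and $\theta$ and study the map $g(\zv)\doteq h(\sv,\zv;\theta)$ on $\CC^{2N}\cong\RR^{4N}$. For fixed $\sv$, the hypothesis with $\av=(\sv,\zv)$ and $\bv=(\sv,\zv')$ gives
\[
|g(\zv)-g(\zv')|\leq \frac{{\rm Lip}(\theta)}{\sqrt N}\Bigl(1+\tfrac{2\|\sv\|}{\sqrt N}+\tfrac{\|\zv\|+\|\zv'\|}{\sqrt N}\Bigr)\|\zv-\zv'\| .
\]
Thus $g$ is Lipschitz with constant of order ${\rm Lip}(\theta)(1+\|\sv\|/\sqrt N)/\sqrt N$ on the ball $\{\|\zv\|\leq C\sqrt N\}$, and the gradient $\nabla g$ (defined a.e.) satisfies $\|\nabla g(\zv)\|\leq \frac{{\rm Lip}(\theta)}{\sqrt N}(1+2\|\sv\|/\sqrt N+2\|\zv\|/\sqrt N)$.

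To avoid truncation, we use the Gaussian Poincar\'e-type moment inequality: for a standard Gaussian vector and every $p\geq 2$,
\[
\|g(\zv)-\mathbb E[g(\zv)]\|_{\mathcal L^p}\leq C\sqrt{p}\,\bigl\|\,\|\nabla g(\zv)\|\,\bigr\|_{\mathcal L^p}.
\]
This is the $\mathcal L^p$ Poincar\'e inequality, obtainable from the Maurey--Pisier interpolation argument, and the complex normalization only changes constants. Applying it conditionally on $(\sv,\theta)$ gives
\[
\mathbb E\bigl[|g-\mathbb E[g\mid\sv,\theta]|^p\,\big|\,\sv,\theta\bigr]^{1/p}\leq \frac{C_p\,{\rm Lip}(\theta)}{\sqrt N}\Bigl(1+\frac{2\|\sv\|}{\sqrt N}+2\Bigl\|\frac{\|\zv\|}{\sqrt N}\Bigr\|_{\mathcal L^p}\Bigr).
\]
The term $\|\zv\|/\sqrt N$ has all moments bounded uniformly in $N$ (chi-square moments), so it is $\Op{1}$.

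Taking the outer expectation over $(\sv,\theta)$ and using H\"older's inequality, together with ${\rm Lip}(\theta)=\Op{1}$ and $\|\sv\|/\sqrt N=\Op{1}$, shows that the product is $\Op{1}$ by Lemma~\ref{lemma:op_properties}. Hence $\|h-\mathbb E[h\mid\sv]\|_{\mathcal L^p}\leq C'_p N^{-1/2}$ for every $p$, which is exactly $h=\mathbb E[h\mid\sv]+\Op{N^{-1/2}}$, with the expectation also implicitly conditioned on $\theta$.

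The main obstacle is the non-global Lipschitz property: $g$ is only locally Lipschitz, with a constant that grows linearly in $\|\zv\|$. The moment version of the Poincar\'e inequality handles this directly, since it needs only integrability of $\|\nabla g\|$, not a uniform bound. If that inequality is not taken as available, we would instead use a smooth cutoff: replace $g$ by $g\circ\pi$, where $\pi$ projects onto the ball of radius $3\sqrt N$. This is globally $O({\rm Lip}(\theta)(1+\|\sv\|/\sqrt N)/\sqrt N)$-Lipschitz, so Tsirelson--Ibragimov--Sudakov concentration applies to it. The discrepancy is supported on the event $\{\|\zv\|>3\sqrt N\}$, whose probability is $e^{-cN}$, and Cauchy--Schwarz shows its contribution to every $\mathcal L^p$ norm, including the shift it causes in the conditional mean, is exponentially small.
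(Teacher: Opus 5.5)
Your argument is correct and is essentially the paper's. Both freeze $(\sv,\theta)$, use that the local Lipschitz constant of $\zv\mapsto h(\sv,\zv;\theta)$ grows only linearly in $\|\zv\|/\sqrt N$, and apply the Maurey--Pisier Gaussian interpolation bound together with chi-square integrability of $\|\zv\|^2/N$; the rate $\Op{N^{-\frac 1 2}}$ then comes from ${\rm Lip}(\theta)(1+\|\sv\|/\sqrt N)/\sqrt N=\Op{N^{-\frac 1 2}}$. The only difference is cosmetic: the paper uses the exponential form $\mathbb E[{\rm e}^{t\delta}]\le \mathbb E[{\rm e}^{\pi^2t^2\|\nabla\delta\|^2/8}]$ and evaluates the chi-square MGF to get a conditional sub-exponential bound, whereas your $\mathcal L^p$ form of the same inequality avoids the MGF computation and the restriction on $t$ (your truncation fallback is also valid but unnecessary).
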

\begin{lemma}\label{conprod}
Consider the RVs $\av,\bv\in \CC^{N}$ where  $ \av\sim_\text{\rm i.i.d.} A$ and  $ \bv\sim_\text{\rm i.i.d.} B$  with ${A}=\Op{1}$, $B=\Op{1}$.
Let the random matrix $\Xm\in \CC^{N\times N}$ be independent of $\av$ and $\bv$ with $\Vert\Xm\Vert_{2}=\Op{1}$. 
Then, 
\begin{equation}      \langle\av,\Xm\bv\rangle=\langle\Xm\rangle \mathbb E[A^\star B]+ (\langle \matr 1_N, \Xm\matr 1_N \rangle-\langle \Xm\rangle) \mathbb E[A^\star]\mathbb E [B]
     +\Op{N^{-\frac 1 2}}\;. \label{eq_conprod}
\end{equation}
\end{lemma}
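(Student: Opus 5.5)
We prove the identity by centering the two vectors, expanding into a deterministic part plus linear and quadratic fluctuation terms, and bounding every fluctuation in $\mathcal L^p$ conditionally on $\Xm$. Write $\mu_A\doteq\mathbb E[A]$, $\mu_B\doteq\mathbb E[B]$, $\av=\mu_A\matr 1_N+\widetilde\av$ and $\bv=\mu_B\matr 1_N+\widetilde\bv$. Here $\widetilde\av,\widetilde\bv$ have i.i.d. zero-mean entries with $\widetilde A,\widetilde B=\Op{1}$. We allow the pairs $(a_i,b_i)$ to be i.i.d. across $i$ with arbitrary dependence between $a_i$ and $b_i$, which covers independent $\av,\bv$ as a special case. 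Then
\[
\langle\av,\Xm\bv\rangle=\mu_A^*\mu_B\langle\matr 1_N,\Xm\matr 1_N\rangle+\mu_A^*\langle\matr 1_N,\Xm\widetilde\bv\rangle+\mu_B\langle\widetilde\av,\Xm\matr 1_N\rangle+\langle\widetilde\av,\Xm\widetilde\bv\rangle .
\]
The claimed right-hand side of \eqref{eq_conprod} is
\[
\mu_A^*\mu_B\langle\matr 1_N,\Xm\matr 1_N\rangle+\langle\Xm\rangle\bigl(\mathbb E[A^*B]-\mu_A^*\mu_B\bigr),
\]
and $\mathbb E[A^*B]-\mu_A^*\mu_B=\mathbb E[\widetilde A^*\widetilde B]$. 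So it suffices to prove three bounds:
\[
\langle\matr 1_N,\Xm\widetilde\bv\rangle=\Op{N^{-\frac12}},\qquad \langle\widetilde\av,\Xm\matr 1_N\rangle=\Op{N^{-\frac12}},\qquad \langle\widetilde\av,\Xm\widetilde\bv\rangle=\langle\Xm\rangle\mathbb E[\widetilde A^*\widetilde B]+\Op{N^{-\frac12}} .
\]

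All bounds are first derived conditionally on $\Xm$, using its independence from $(\av,\bv)$, and then de-conditioned. If the conditional bound has the form $\mathbb E[|Y|^p\mid\Xm]\le C_p N^{-p/2}\|\Xm\|_2^{p}$, then taking expectations gives $\|Y\|_{\mathcal L^p}\le C_p^{1/p}N^{-1/2}\|\|\Xm\|_2\|_{\mathcal L^p}=\Op{N^{-1/2}}$, since $\|\Xm\|_2=\Op{1}$. The same holds with any fixed power of $\|\Xm\|_2$, via H\"older and Lemma~\ref{lemma:op_properties}.

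\emph{Linear terms.} Set $\cv=\Xm^\dagger\matr 1_N$, so $\|\cv\|\le\|\Xm\|_2\sqrt N$ and $\langle\matr 1_N,\Xm\widetilde\bv\rangle=\frac1N\sum_j c_j^*\widetilde b_j$. This is a weighted sum of independent centered variables. The Marcinkiewicz--Zygmund (or Rosenthal) inequality gives
\[
\Bigl\|\sum_j c_j^*\widetilde b_j\Bigr\|_{\mathcal L^p\mid\Xm}\le C_p\|\cv\|\,\|\widetilde B\|_{\mathcal L^p},
\]
hence a bound of order $\|\Xm\|_2 N^{-1/2}$. The term $\langle\widetilde\av,\Xm\matr 1_N\rangle$ is handled identically with $\Xm\matr 1_N$ in place of $\cv$.

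\emph{Quadratic term.} Split it into diagonal and off-diagonal parts. The diagonal part is $\frac1N\sum_i X_{ii}\widetilde a_i^*\widetilde b_i$. Its conditional mean is exactly $\langle\Xm\rangle\mathbb E[\widetilde A^*\widetilde B]$. Its fluctuation is again a sum of independent centered terms with weights $|X_{ii}|\le\|\Xm\|_2$, so Marcinkiewicz--Zygmund gives $\Op{N^{-1/2}}$; the summands have finite moments of all orders by H\"older, since $\widetilde A,\widetilde B=\Op{1}$. The off-diagonal part $\frac1N\sum_{i\neq j}\widetilde a_i^*X_{ij}\widetilde b_j$ has conditional mean zero, because $i\neq j$ and the pairs are independent across indices.

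\emph{Main obstacle.} The expected hard step is an all-$p$ moment bound for this off-diagonal chaos without sub-Gaussianity. The target is
\[
\Bigl\|\sum_{i\neq j}\widetilde a_i^*X_{ij}\widetilde b_j\Bigr\|_{\mathcal L^p\mid\Xm}\le C_p\|\Xm\|_{\rm F}\|\widetilde A\|_{\mathcal L^{p}}\|\widetilde B\|_{\mathcal L^{p}}\le C_p\sqrt N\|\Xm\|_2(\cdots),
\]
which after division by $N$ gives $\Op{N^{-1/2}}$. The first attempt is the de la Pe\~na--Montgomery-Smith decoupling inequality, replacing $\widetilde\bv$ by an independent copy $\widetilde\bv'$ of the sequence, at the cost of a constant depending only on $p$. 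Then apply Marcinkiewicz--Zygmund twice. First, conditionally on $\widetilde\av$, the sum is linear in $\widetilde\bv'$ with weight vector $\Xm_{\rm off}^\top\widetilde\av^*$, where $\Xm_{\rm off}$ is $\Xm$ with its diagonal set to zero. Second, the squared norm of that weight vector is a quadratic form in $\widetilde\av$ whose mean is at most $\|\Xm_{\rm off}\|_{\rm F}^2\,\mathbb E|\widetilde A|^2\le 2N\|\Xm\|_2^2\,\mathbb E|\widetilde A|^2$. Its higher moments are controlled by its own diagonal-plus-chaos split, which yields bounds of the same order. If decoupling is awkward in the complex, dependent-pair setting, the fallback is a direct combinatorial expansion of $\mathbb E|\cdot|^{2p}$ over index paths. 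In that expansion only paths in which every index appears at least twice survive, and their contribution is bounded by $C_p(\mathrm{tr}\,\Xm\Xm^\dagger)^p\le C_p N^p\|\Xm\|_2^{2p}$. Combining the pieces with Lemma~\ref{lemma:op_properties} yields \eqref{eq_conprod}.
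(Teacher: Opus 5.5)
Your proof is correct and is essentially the paper's argument made self-contained. Like you, the paper centers $\av,\bv$, handles the linear terms and the diagonal part $S_1$ with the Marcinkiewicz--Zygmund-type linear large-deviation bound, and controls the off-diagonal chaos $S_2$ via the decoupling-plus-Marcinkiewicz--Zygmund argument of the Erd\H{o}s--Yau large-deviation lemma it cites, arriving at the same $\|\Xm_{\rm off}\|_{\rm F}/N$ scaling you target.

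The one step you leave vague needs no further chaos split. For $p\ge 2$, apply Minkowski's inequality in $\mathcal L^{p/2}$ and then Marcinkiewicz--Zygmund to each coordinate; this gives $\bigl\|\|\Xm_{\rm off}^\top\widetilde\av^*\|^2\bigr\|_{\mathcal L^{p/2}}\le C_p\|\widetilde A\|_{\mathcal L^p}^2\|\Xm_{\rm off}\|_{\rm F}^2$ directly.
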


We defer the proofs of  Lemma~\ref{lemmakey} and Lemma~\ref{conprod}  to the following subsections. 

\subsection{Proof of Lemma~\ref{lemmakey}}
For convenience, we introduce
\[
L_{\sv,\theta} \doteq \frac{\sqrt{2}{\rm Lip}(\theta)}{\sqrt N}\!\left(1+2\frac{\|\sv\|}{\sqrt{N}}\right)=\Op{N^{-\frac 1 2}}.
\]
where the latter follows by the product rule of $\Op{\kappa}$ in Lemma~\ref{lemma:op_properties}.
For all \( \zv,\zv'\in\mathbb{C}^{2N} \), we have
\[
\bigl| h(\sv,\zv;\theta)-h(\sv,\zv';\theta) \bigr|
\leq
L_{\sv,\theta}
\left(1+\frac{\|\zv\|}{\sqrt{2N}}+\frac{\|\zv'\|}{\sqrt{2N}}\right)
\|\zv-\zv'\|.
\]

Conditioned on \( \sv,\theta \), we will verify that 
\begin{equation}
\delta(\zv)
\doteq 
\frac{1}{L_{\sv,\theta}}(h(\sv,\zv;\theta)
-
\mathbb{E}\!\left[h(\sv,\zv;\theta)\mid \sv\right]) \label{deltav}
\end{equation}
is a sub-exponential random variable \cite{vershynin2018high} such that 
\begin{equation}
\norm{\delta(\zv)}_{\mathcal L^p}\leq pC \;. \label{bound}
\end{equation}
with $C$ is a constant independent of \(\sv,\theta\). Then, the thesis evidently follows from $L_{\sv,\theta}=\Op{N^{-\frac 1 2}}$.

To this end, we first note that $\zv \equiv \frac{1}{\sqrt{2}}(\zv_1 + \mathrm{i}\zv_2)$ where the RVs \( \zv_1, \zv_2 \sim \mathcal{N}(\mathbf{0}, \Id_{2N}) \) are independent. Consequently, any function
\( \delta:\mathbb{C}^{2N} \to \mathbb{C} \) admits the decomposition
\[
\delta(\zv)  \equiv \delta_{\mathrm{R}}(\zv_1,\zv_2) + \mathrm{i} \delta_{\mathrm{I}}(\zv_1,\zv_2),
\]
where \( \delta_{\mathrm{R}}, \delta_{\mathrm{I}} : \mathbb{R}^{4N} \to \mathbb{R} \). Since \( \delta \) is pseudo-Lipschitz with constant \( 1\), both
\( \delta_{\mathrm{R}} \) and \( \delta_{\mathrm{I}} \) are pseudo-Lipschitz functions
on \( \mathbb{R}^{4N} \) with constant \( 1/\sqrt{2} \). 
Thus, without loss of generality, we can assume \( \zv \) is a real-valued standard Gaussian vector.

The rest is a straightforward adaptation of the concentration bound for Lipschitz functions; here we extend the argument to pseudo-Lipschitz functions. We begin from the final step in the proof of \cite[Theorem~2.1.12]{tao2023topics}\footnote{We note an typo in the corresponding step of \cite{tao2023topics}: the exponent $\exp\left(\frac{2}{\pi}(\cdot)\right)$ should instead read $\exp\left(\frac{\pi}{2}(\cdot)\right)$.} which gives that for every \(t \in \mathbb{R}\)
\begin{equation} \mathbb E[{\rm e}^{t\delta(\zv)}]\leq\mathbb E[{\rm e}^{\frac{\pi^2 t^2}{8}\norm{\nabla \delta(\zv)}^2}]\; \end{equation} where $\nabla$ is the Gradient operator. We have 
\[
\|\nabla \delta(\zv)\|^2
\le
\!\left(1+\sqrt{\frac 2 N}\|\zv\|\right)^2
\le
2\!\left(1+\frac{2}{N}{\|\zv\|^2}\right).
\]
Then, we get
\begin{align}
\mathbb{E}\!\left[{\rm e}^{t\delta(\zv)}\right]
&\le
{\rm e}^{\frac{\pi^2t^2}{4}}\mathbb{E}\!\left[
{\rm e}^{\frac{\pi^2t^2}{2N}{\|\zv\|^2}}\right] \\
&=
{\rm e}^{\frac{\pi^2t^2}{4}}
\left(1-\frac{\pi^2t^2}{N}\right)^{-2N},
\quad  \vert t\vert<\frac{\sqrt{N}}{\pi}.
\end{align}
Furthermore, using \( (1-x)^{-1/2}\le {\rm e}^{x} \) for \(0\le x\le 1/2\), we then get \[\mathbb{E}\!\left[e^{t\delta(\zv)}\right]
\le
{\rm e}^{\frac{17\pi^2}{4}t^2},\quad 
\vert t\vert\le \frac{\sqrt{N}}{\sqrt{2}\pi}\;.\] 
We then get the desired bound 
\begin{equation}
\mathbb{E}\!\left[e^{t\delta(\zv)}\right]
\le
{\rm e}^{\frac{17\pi^2}{4}t^2},
\qquad
\vert t\vert \le \frac{2}{\pi\sqrt{17}}. \label{bbb}
\end{equation}
 Equivalently, from  
 \cite[Proposition 2.8.1]{vershynin2018high} we have~\eqref{bound}.

\subsection{Proof of Lemma~\ref{conprod}}
The proof is based on the large-deviation bounds given in \cite[Lemma 7.10]{erdHos2017dynamical}. We first underline the following result: Let the  RVs $\xv,\av\in\CC^{N}$ be independent. Let $\xv=\Op{\sqrt N}$ and  $\av\sim_{\text{i.i.d}}A$ with $A=\Op{1}$. Then,  conditioned on $\xv$, we have from \cite[Eq.7.57]{erdHos2017dynamical} that
\begin{equation}
\langle\av,\xv\rangle=\mathbb E[A^*]\langle\matr 1_N,\xv \rangle+\Op{\norm{\xv}N^{-1}}\\
\;
\end{equation}
Since $\xv=\Op{\sqrt{N}}$, this implies that e.g. 
\begin{equation}
\langle\av,\xv\rangle=\mathbb E[A^*]\langle\matr 1_N,\xv \rangle+\Op{N^{-\frac 1 2}}\;. \label{meancleaner}
\end{equation}

From $\Vert\Xm\Vert_2=\Op{1}$, we have $\Vert \Xm\matr 1_N\Vert=\Op{\sqrt N}$. This then implies from \eqref{meancleaner} that 
\begin{align}
\quad  \langle\av,\matr X\matr 1_N \rangle&=\langle\matr 1_N,\matr X\matr 1_N\rangle\mathbb E[A^*]+\Op{N^{-\frac 1 2}}\;. 
\end{align}
Hence, to verify \eqref{eq_conprod}, without loss of generality we can assume $\mathbb E[A]=\mathbb E[B]=0$ and verify 
\begin{equation}
\langle\av,\Xm\bv\rangle=\langle\Xm\rangle \mathbb E[A^\star B]+\Op{N^{-\frac 1 2}}. 
\end{equation}
We then write 
    \begin{align}
     \langle\av,\Xm\bv\rangle-\langle\Xm\rangle \mathbb E[A^* B]
     &=\underbrace{\frac{1}{N}\sum_{i}X_{ii}(a_{i}^\star b_{i}-\mathbb E[A^*B])}_{\doteq S_1}+\underbrace{\frac{1}{N}\sum_{i\neq j}X_{ij}a_{i}^* b_{j}}_{\doteq S_2}\;. 
    \end{align}
For short let $\xv\doteq{\rm diag}(\Xm)$. From $\Vert\Xm\Vert_{\rm F}=\Op{\sqrt N}$, we have $\Vert \xv\Vert=\Op{\sqrt N}$ and $\Vert \Xm-{\rm diag}(\xv)\Vert_{\rm F}=\Op{\sqrt{N}}$. 
Note also that from \eqref{eq:op_prod} we have $A^\star B=\Op{1}$. 

Firstly, from the result~\eqref{meancleaner} we have
$   S_1=\Op{N^{-\frac 1 2 }}\label{S1}$. Furthermore, following the steps in the proof \cite[Lemma 7.10]{erdHos2017dynamical}, we obtain (conditioned on $\Xm$) 
\begin{equation}
   S_2=\Op{\Vert \Xm-{\rm diag}(\xv)\Vert_{\rm F}N^{-1}}\;.\label{S2}
 \end{equation}
This then implies from $\Vert \Xm-{\rm diag}(\xv)\Vert_{\rm F}=\Op{\sqrt{N}}$ that $S_2=\Op{N^{-\frac 1 2}}$.
\section{The Proofs of Auxiliary Lemmas in the proof of Theorem~\ref{th1}}\label{proof_auxilary}
\subsection{Proof of Lemma~\ref{lemmaseq}}\label{proof_lemmaseq}
For convenience, we fix \( u\in [U] \) and suppress the subscript \( u \), e.g., \( \Om \equiv \Om_u \), \( \widehat{\psiv}^{(t)} \equiv \widehat{\psiv}_u^{(t)} \), etc. Also, within this proof, we set \( N = N_u \) (rather than $N=\sum_u N_u$).

In the proof, we adaptively use the following representation of the Haar unitary matrices. 
	\begin{lemma}\cite{meckes2019random}\cite[Lemma~1]{lu2021householder}\label{conditioning}
Let $\vv^{(1)}\doteq \mathcal{GS}(\gv^{(1)})$ with  $\gv^{(1)}\sim\mathcal {CN}(\matr 0,\Id_N)$ and a RV $\widetilde\vv^{(1)}\in \CC^{N}$ with $\langle \widetilde\vv^{(1)},\widetilde\vv^{(1)}\rangle=1$ 
and a Haar unitary matrix ${\Om^{(N-1)}}\in \CC^{(N-1)\times (N-1)}$ be mutually independent. Then,  
		\begin{align}
			\Om&\doteq\frac{1}{N}\vv^{(1)}(\widetilde\vv^{(1)})^\dagger+\matr\Pi_{\vv^{(1)}}^\perp\Om^{(N-1)}(\matr\Pi_{\widetilde\vv^{(1)}}^\perp)^\dagger\label{haarrep}
		\end{align}
		is a Haar unitary and independent of $\widetilde\vv^{(1)}$. Here, we introduce the semi-unitary matrices $\matr\Pi_{\widetilde\vv^{(1)}}^\perp, \matr\Pi_{\vv^{(1)}}^\perp\in \CC^{N\times(N-1)}$ whose columns span the orthogonal complements of the column spans of $\widetilde\vv^{(1)}$ and $\vv^{(1)}$, respectively. E.g., we have
		\[\matr\Pi_{\vv^{(1)}}^\perp(\matr\Pi_{\vv^{(1)}}^\perp)^\dagger=\Id-\frac{1}{N}\vv^{(1)} (\vv^{(1)})^\dagger=\Pm_{\vv^{(1)}}^\perp.\]
	\end{lemma}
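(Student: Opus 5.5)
The plan is to condition on $\widetilde\vv^{(1)}$, treat it as a fixed unit vector (in the normalization $\langle\widetilde\vv^{(1)},\widetilde\vv^{(1)}\rangle=1$), and show two things: that $\Om$ in \eqref{haarrep} is unitary, and that its conditional law is invariant under left multiplication by an arbitrary fixed unitary $\Wm$. Uniqueness of the left-invariant probability measure on the unitary group then identifies the conditional law as Haar. Since this conditional law does not depend on the value of $\widetilde\vv^{(1)}$, both claims follow: $\Om$ is Haar, and it is independent of $\widetilde\vv^{(1)}$. Throughout, $\vv^{(1)}=\mathcal{GS}(\gv^{(1)})$ is well defined almost surely, because $\gv^{(1)}\neq\matr 0$ a.s. We take $\matr\Pi^\perp_{\vv}$ to be any fixed measurable choice of orthonormal basis of the complement of $\vv$.

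\textbf{Step 1 (unitarity).} Write $\Om=\frac1N\vv^{(1)}(\widetilde\vv^{(1)})^\dagger+\matr\Pi^\perp_{\vv^{(1)}}\Om^{(N-1)}(\matr\Pi^\perp_{\widetilde\vv^{(1)}})^\dagger$. Expanding $\Om^\dagger\Om$, the cross terms vanish because $(\vv^{(1)})^\dagger\matr\Pi^\perp_{\vv^{(1)}}=\matr 0$. The diagonal terms give
\[
\tfrac1N\widetilde\vv^{(1)}(\widetilde\vv^{(1)})^\dagger+\matr\Pi^\perp_{\widetilde\vv^{(1)}}(\matr\Pi^\perp_{\widetilde\vv^{(1)}})^\dagger=\Id,
\]
where we used $\|\vv^{(1)}\|^2=N$ and $(\matr\Pi^\perp)^\dagger\matr\Pi^\perp=\Id_{N-1}$. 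Hence $\Om$ is unitary for every realization.

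\textbf{Step 2 (left invariance given $\widetilde\vv^{(1)}$).} Fix a unitary $\Wm$. Then $\Wm\vv^{(1)}=\mathcal{GS}(\Wm\gv^{(1)})$, and $\Wm\gv^{(1)}\sim\gv^{(1)}$. The columns of $\Wm\matr\Pi^\perp_{\vv^{(1)}}$ form an orthonormal basis of the complement of $\Wm\vv^{(1)}$. Therefore $\Wm\matr\Pi^\perp_{\vv^{(1)}}=\matr\Pi^\perp_{\Wm\vv^{(1)}}\Rm$ for an $(N-1)\times(N-1)$ unitary $\Rm=\Rm(\Wm,\vv^{(1)})$, and this gives
\[
\Wm\Om=\tfrac1N(\Wm\vv^{(1)})(\widetilde\vv^{(1)})^\dagger+\matr\Pi^\perp_{\Wm\vv^{(1)}}\,\Rm\,\Om^{(N-1)}(\matr\Pi^\perp_{\widetilde\vv^{(1)}})^\dagger .
\]
Condition on $\gv^{(1)}$. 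Because $\Om^{(N-1)}$ is Haar and independent of $\gv^{(1)}$, the matrix $\Rm\,\Om^{(N-1)}$ is again Haar and independent of $\gv^{(1)}$. Thus the pair $(\Wm\gv^{(1)},\Rm\Om^{(N-1)})$ has the same joint law as $(\gv^{(1)},\Om^{(N-1)})$. Since $\Om$ is a fixed measurable function of this pair (with $\widetilde\vv^{(1)}$ held fixed), we obtain $\Wm\Om\sim\Om$ conditionally on $\widetilde\vv^{(1)}$.

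\textbf{Step 3 (conclusion).} By uniqueness of Haar measure, the conditional law of $\Om$ given $\widetilde\vv^{(1)}$ is Haar for every value of $\widetilde\vv^{(1)}$. This yields both Haar-distributedness and independence from $\widetilde\vv^{(1)}$.

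The main delicate point is Step 2. There we must justify that absorbing the basis change $\Rm$ into the Haar factor is legitimate even though $\Rm$ depends on the random $\vv^{(1)}$. This is handled by the conditioning on $\gv^{(1)}$ together with the mutual independence assumption. The argument is unaffected by the non-uniqueness of the choice of $\matr\Pi^\perp$, since any two choices differ by a unitary factor that is absorbed in the same way.
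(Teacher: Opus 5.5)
Your proof is correct, and it differs from what the paper relies on. The paper does not prove this lemma at all; it imports it from Meckes' monograph and from Lu's Householder Dice paper, where it is phrased with Householder reflections in place of the bases $\matr\Pi^\perp$.

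The route behind those citations is the conditional, column-by-column description of Haar measure, run in the analysis direction. For a Haar $\Om$ and a fixed $\widetilde\vv^{(1)}$, the image $\Om\widetilde\vv^{(1)}$ is uniform on the sphere of radius $\sqrt N$. Conditionally on that image, the compression $(\matr\Pi^\perp_{\Om\widetilde\vv^{(1)}})^\dagger\Om\,\matr\Pi^\perp_{\widetilde\vv^{(1)}}$ is Haar on $U(N-1)$; right multiplication by unitaries that fix $\widetilde\vv^{(1)}$ shows this. Reading the decomposition backwards with $\vv^{(1)}=\Om\widetilde\vv^{(1)}$ then yields \eqref{haarrep}.

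You instead work in the synthesis direction. You check unitarity by direct expansion. You then prove that the conditional law is invariant under left multiplication by a fixed unitary $\mathbf{W}$, absorbing the induced change of basis $\Rm$ into the Haar factor after conditioning on $\gv^{(1)}$. You finish with uniqueness of Haar measure.

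Your route has three advantages. It is self-contained, using only uniqueness of Haar measure and unitary invariance of $\mathcal{CN}(\matr 0,\Id_N)$. It makes independence from $\widetilde\vv^{(1)}$ immediate, because the conditional law is Haar for every value of $\widetilde\vv^{(1)}$. It is also insensitive to the measurable choice of $\matr\Pi^\perp$. The column-by-column viewpoint, for its part, is what makes the recursive peeling of $\Om^{(N-1)},\Om^{(N-2)},\dots$ in the proof of Lemma~\ref{lemmaseq} look natural. Your version of the lemma can be iterated there in exactly the same way.
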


We begin with the iteration step $t=1$. We recall Lemma~\ref{conditioning} and set
\[\widetilde\vv^{(1)}\equiv \mathcal{GS}({\widetilde\phiv^{(1)}})\;.\] 
Hence, we have 
\begin{equation}
\Om \widetilde\phiv^{(1)}=\langle\widetilde\vv^{(1)},\widetilde\phiv^{(1)} \rangle\vv^{(1)}. 
\end{equation}
For the multiplications from left as $\Om^\dagger \widehat\psiv^{(1)}$, we introduce a new Gaussian RV $\widetilde\gv^{(1)}\sim\mathcal {CN}(\matr 0,\Id_N)$.  Notice that 
	\[(\matr\Pi_{\widetilde\vv^{(1)}}^\perp)^\dagger\widetilde\gv^{(1)} \sim \mathcal {CN}(\matr 0,\Id_{N-1})\] 
which is independent of $\widetilde\vv^{(1)}$. We then set $\vv^{(2)}\equiv \GS{\widehat \psiv^{(1)}}{\vv^{(1)}}$ and use Lemma~\ref{conditioning} to represent $\Om^{(N-1)}$ in \eqref{haarrep} as 
	\begin{align}
		\Om^{(N-1)}&=\frac{1}{N}\mathcal{GS}((\matr\Pi_{\vv^{(1)}}^\perp)^\dagger \vv^{(2)})\mathcal{GS}((\matr\Pi_{\widetilde\vv^{(1)}}^\perp)^\dagger\widetilde\gv^{(1)})^\dagger +\matr\Pi_{(\matr\Pi_{\vv^{(1)}}^\perp)^\dagger\vv^{(2)}}^\perp 
		\Om^{(N-2)}\left(\matr\Pi_{(\matr\Pi_{\widetilde\vv^{(1)}}^\perp)^\dagger\widetilde\gv^{(1)}}^\perp \right)^\dagger\;
	\end{align}
	where $\Om^{(N-2)}\in \CC^{(N-2)\times (N-2)}$ is Haar unitary.
	Notice also that
	\begin{align}
		\matr\Pi_{\vv^{(1)}}^\perp\mathcal{GS}((\matr\Pi_{\vv^{(1)}}^\perp)^\dagger\vv^{(2)})&\equiv \vv^{(2)}\\
		\matr\Pi_{\widetilde\vv^{(1)}}^\perp\mathcal{GS}\label{gset}((\matr\Pi_{\widetilde\vv^{(1)}}^\perp)^\dagger\widetilde\gv^{(1)})
		&=\underbrace{\GS{\widetilde\gv^{(1)}}{\widetilde \vv^{(1)}}}_{\doteq \widetilde \vv^{(2)} }\;
	\end{align}
	Hence, we get the representation
	\begin{align}
		&\Om=\frac{1}{N} \vv^{(1)}(\widetilde\vv^{(1)})^\dagger  +\frac{1}{N} \vv^{(2)}(\widetilde \vv^{(2)})^{\dagger}+\underbrace{\left(\matr\Pi_{\vv^{(1)}}^\perp\matr\Pi_{(\matr\Pi_{\vv^{(1)}}^\perp)^\dagger\vv^{(2)}}^\perp \right)}_{\doteq \matr\Pi^\perp_{\vv^{(1:2)}}}
		\Om^{(N-2)}{\underbrace{\left(\matr\Pi_{\widetilde\vv^{(1)}}^\perp\matr\Pi_{(\matr\Pi_{\widetilde\vv^{(1)}}^\perp)^\dagger\widetilde\gv^{(1)}}^\perp \right)}_{\doteq \matr\Pi^\perp_{\widetilde\vv^{(1:2)}}}}^\dagger\;.\label{usethis2}
	\end{align}
	Here, e.g. one can show from \eqref{gset} that $\matr\Pi^\perp_{\vv^{(1:2)}}\in \CC^{N\times (N-2)}$ semi unitary matrix  such that 
	\begin{equation}
		\matr\Pi^\perp_{\vv^{(1:2)}}(\matr\Pi^\perp_{\vv^{(1:2)}})^\dagger=\Pm^\perp_{\vv^{(1:2)}}.
	\end{equation}
	So that we  get from \eqref{usethis2} 
	\begin{equation}
	\Om^\dagger\widehat\psiv^{(1)}=\langle\vv^{(1)},\widehat\psiv^{(1)}\rangle\widetilde\vv^{(1)} +\langle\vv^{(2)},\widehat\psiv^{(1)}\rangle\widetilde\vv^{(2)}\;.
	\end{equation}
	Hence, we complete the $\Om$-free representation for the first iteration step $t=1$. 
	
	Moving on to the second iteration step, we fix $t=2$ but mimic the notations in a manner that the arguments can be recalled for any $t>1$. Firstly, we generate arbitrary Gaussian elements, 
	\[\widetilde\gv^{(t)}\sim\gv^{(t)}~\sim\mathcal {CN}(\matr 0,\Id_N) \;.\]
	Second, we construct the semi unitary matrices
	\begin{align}
    \vv^{(2t-1)}&=\mathcal{GS}(\gv^{(t)}\vert \vv^{(1:2(t-1))})\;.\\
		\widetilde\vv^{(2t-1)}&=\mathcal{GS}(\widetilde\phiv^{(t)}\vert \widetilde\vv^{(1:2(t-1))}) \label{constphiv}
	\end{align}
	Notice that 
	\begin{equation}
		(\matr\Pi_{\widetilde\vv^{(1:2(t-1)}}^\perp)^\dagger\widetilde\gv^{(t)} \sim_{\text{i.i.d.}}\mathcal {CN}(\matr 0,\Id_{N-t})
	\end{equation}
	and the product is independent of $\widetilde\vv^{(1:2(t-1)}$. Moreover, 
	\begin{align}
		\matr\Pi_{\vv^{(1:2(t-1))}}^\perp\mathcal{GS}((\matr\Pi_{\vv^{(1:2(t-1))}}^\perp)^\dagger\vv^{(2t-1)})&\equiv \vv^{(2t-1)}\\
		\matr\Pi_{\widetilde\vv^{(2(t-1))}}^\perp\mathcal{GS}((\matr\Pi_{\widetilde\vv^{(2(t-1))}}^\perp)^\dagger\widetilde\gv^{(t)})
		&\equiv \widetilde\vv^{(2t-1)}\;.
	\end{align}
	Hence, similar to \eqref{gset} we have the representation   
	\begin{align}
		\Om&=\frac{1}{N} \sum_{1\leq s\leq 2t-1}\vv^{(s)}(\widetilde\vv^{(s)})^\dagger+{\matr\Pi^\perp_{\vv^{(1:2t-1)}}}
		\Om^{(N-2t+1)}({\matr\Pi^\perp_{\widetilde\vv^{(1:2t-1)}}})^\dagger\;.\label{usethisend1}
	\end{align}
	Note that by the construction \eqref{constphiv} , we have 
	\begin{align}
		\widetilde\phiv^{(t)}&=\sum_{1\leq s\leq 2t-1}\langle \widetilde\vv^{(s)},\widetilde\phiv^{(t)} \rangle\widetilde\vv^{(s)}
	\end{align}
	Hence, we get from \eqref{usethisend1} that
	\begin{align}
		\Om\widetilde\phiv^{(t)}&=\sum_{1\leq s\leq 2t-1}\langle \widetilde\vv^{(s)},\widetilde\phiv^{(t)} \rangle\vv^{(t)}\;. 
	\end{align}
	
	Finally, to have the $\Om$-free representation of $\Om^\dagger\widehat\psiv^{(t)}$, we construct 
	\begin{align}
		\vv^{(2t)}&=\mathcal{GS}(\widehat\psiv^{(t)}\vert \vv^{(1:2t-1)})\label{constpsi}\\
		\widetilde\vv^{(2t)}&=\mathcal{GS}(\widetilde\gv^{(t)}\vert \widetilde \vv^{(1:2t-1)})\;.
	\end{align}
	Similar to \eqref{usethisend1} we obtain the representation 
	\begin{align}
		\Om&=\frac{1}{N} \sum_{1\leq s\leq 2t}\vv^{(s)}(\widetilde\vv^{(s)})^\dagger+{\matr\Pi^\perp_{\vv^{(1:2t)}}}
		\Om^{(N-2t)}({\matr\Pi^\perp_{\widetilde\vv^{(1:2t)}}})^\dagger\;.\label{usethisend2}
	\end{align}Note that by the construction \eqref{constpsi}, we have 
	\begin{align}
		\widehat\psiv^{(t)}&=\sum_{1\leq s\leq 2t}\langle \vv^{(s)},\psiv^{(t)} \rangle\vv^{(t)}\;.
	\end{align}
	Hence, we get from \eqref{usethisend1} that
	\begin{align}
		\Om^\dagger\widehat\psiv^{(t)}&=\sum_{1\leq s\leq 2t}\langle\vv^{(s)},\psiv^{(t)} \rangle \widetilde\vv^{(t)}
	\end{align}

Clearly, for the iteration steps $t=3,4,\cdots$ we can repeat the same arguments as for $t=2$ which completes the proof of Lemma~\ref{lemmaseq}.

\subsection{Proof of Lemma~\ref{laux1}}\label{paux1}

For convenience, we work on the normalized dynamics: for all $(u,t)\in[U]\times [T]$ let
\begin{align}
	\widetilde \rv_u^{(t)}\doteq \frac{\rv_u^{(t)}}{\sqrt L} ~~\text{and }~~\widetilde \fv_u^{(t)}\doteq \frac{\fv_u^{(t)}}{\sqrt L}\;.
\end{align}
Similarly, we define $\widetilde \yv\doteq\yv/\sqrt{L}$, etc. We note  from the original iterations in \eqref{oamp0} that
\begin{subequations}\label{tildeAMP}
	\begin{align}
		\Vert \widetilde {\zv}^{(t)}\Vert&\leq\Vert \matr\Sigma^{(t)}\Vert_2(\Vert \widetilde\yv\Vert +\sum_u\Vert\Am_u\Vert_{2}\Vert \widetilde\fv_u^{(t)}\Vert)\\
		\Vert \widetilde {\rv}_u^{(t)}\Vert&\leq \Vert\Am_u\Vert_{2}\Vert \widetilde\zv^{(t)}\Vert+\Vert \widetilde\fv_u^{(t)} \Vert\\
		\Vert \widetilde \fv_u^{(t+1)}\Vert&\leq {\rm Lip}(\tau_u^{(t)})\Vert \widetilde {\rv}_u^{(t)}\Vert
	\end{align} 
\end{subequations}
Here, the latter inequality follows Lemma~\ref{lemmaft} below and  ${\rm Lip}(\tau_u^{(t)})=\Op{1}$.

Moreover, we recall the assumptions that $\Vert\Xm_u\Vert_{2}=\Op{1}$ for $\Xm\in\{\Am_u,\matr\Sigma^{(t)}\}$.
For example, this implies from the arithmetic properties of the notion of $\Op{\kappa}$ Lemma~\ref{lemma:op_properties} in Appendix~\ref{preliminariesop} that 
\begin{equation}
\Vert \widetilde\yv\Vert\leq \sum_{u\in U}\Vert \Am_u\Vert_2   \Vert \widetilde \sv_u\Vert+\Vert \widetilde\nv\Vert=\Op{1}\;.
\end{equation}
Here, we use a simple application of Lemma~\ref{conprod} in Appendix~\ref{preliminariesop} that $\Vert\nv\Vert=\Op{\sqrt{L}}$; so that $\Vert\widetilde\nv\Vert=\Op{1}$. Then, from \eqref{tildeAMP} it follows inductively (over iteration steps) that all the dynamics in \eqref{tildeAMP} belong to the family of $\Op{1}$, i.e., $\widetilde {\matr\rv}_u^{(t)}=\Op{1}$, $\widetilde\fv_u^{(t)}=\Op{1}$, etc. Note the by the additive property of $\Op{\kappa}$ in Lemma~\ref{lemma:op_properties}  this implies that e.g. $\widehat\phiv_u^{(t)}=\rv_u^{(t)}-\sv_u=\Op{\sqrt L}$. Similarly, we have $\widetilde\phiv_u^{(t)}=\Op{\sqrt L}$.

\begin{lemma}\label{lemmaft}
Under the premises of Theorem~\ref{th1}, for any $(u,t)\in[U]\times[T]$, we have
\begin{equation}
\|f_{u,t}(\mathbf{r}) - f_{u,t}(\mathbf{r}')\|
\leq \mathrm{Lip}\!\left(\tau_u^{(t)}\right)\|\mathbf{r} - \mathbf{r}'\|,
\quad \text{with} \quad
\mathrm{Lip}\!\left(\tau_u^{(t)}\right) = \Op{1},
\label{lipf}
\end{equation}
for all $\mathbf{r}, \mathbf{r}' \in \mathbb{C}^{N_u}$. Furthermore, it holds that $
\frac{\|f_{u,t}(\mathbf{0})\|}{\sqrt{N_u}} = \Op{1}$. 

\begin{proof}
The proof proceeds by induction over the state evolution recursion~\eqref{SE_new}: Suppose that $\tau_u^{(t)} = \Op{1}$ and $\nu_u^{(t)} = \Op{1}$. Since
$\eta_u(\mathbf{r};\tau)$ is uniformly Lipschitz with respect to $\mathbf{r}$,
with Lipschitz constant $\mathrm{Lip} = O(\tau^{\pm p})$, it follows that
\eqref{lipf} holds at iteration $t$. This then implies $\lambda_u^{(t+1)} = \Op{1}$. Given
$\lambda_u^{(t+1)} = \Op{1}$ and Assumption~\ref{as1}-(iii), we obtain
$\tau_u^{(t+1)} = \Op{1}$ and consequently $\nu_u^{(t+1)} = \Op{1}$. Thus, by induction, we have for all $(u,t)\in[U]\times[T]$
\[
\lambda_u^{(t)} = \Op{1}, \quad
\tau_u^{(t)} = \Op{1}, \quad
\nu_u^{(t)} = \Op{1}
\]
which gives \eqref{lipf}. Finally, since $\frac{1}{\sqrt{N_u}}\|\eta_u(\mathbf{0}, \tau)\| = O(\tau^{\pm p})$, we have that $\frac{\|f_{u,t}(\mathbf{0})\|}{\sqrt{N_u}} = \Op{1}$.
\end{proof}
\end{lemma}
\subsection{Proof of Lemma \ref{laux2}}\label{paux2}
Let $\Pm\in \CC^{N\times N}$ be a projection matrix to some fixed $t$ (w.r.t. $N$)  dimensional subspace in $\CC^{N}$. Let $\Pm$ and $\gv\sim\mathcal {CN}(\matr 0,\Id_N)$ be mutually independent. Then, the thesis follows once we verify
\begin{align}
\vv\doteq \frac{(\Id_N-\Pm)\gv}{\frac{1}{\sqrt N}\norm{(\Id_N-\Pm)\gv}} \overset{?}{=} \gv+\Op{1}\label{vv}
\end{align}
Consider the eigenvalue decomposition $\Pm=\Um^\dagger {\rm diag}(\matr 1_{t},\matr 0_{N-t})\Um$. By the invariance $\Um\gv\sim \gv$ we get
\begin{align}
\norm{\Pm\gv} \equiv \norm{\gv'}~~\text{and}~~
\norm{(\Id-\Pm)\gv}\equiv\norm{\gv''}
\end{align}
where  $\gv'\sim\mathcal{CN}(\matr 0,\Id_{t})$ and
$\gv''\sim\mathcal{CN}(\matr 0,\Id_{N-t})$.  Now using definition \eqref{vv} and together with $\Pm\gv=\Op{1}$ we get 
\begin{align}
\frac{1}{\sqrt N}\norm{(\Id_N-\Pm)\gv}\vv=(\Id_N-\Pm)\gv= \gv+\Op{1}\;.\label{vv0}
\end{align}
Furthermore, we have from Lemma~\ref{lemma:op_properties} that 
	\begin{align}	
    \frac{1}{\sqrt N}\norm{(\Id_N-\Pm)\gv}
		&=\sqrt{1+\Op{N^{-\frac 1 2}}}\\&
		=1+\Op{N^{-\frac 1 2}}\;, \label{qm}
	\end{align}
where the later step uses the inequality $\vert 1- \sqrt{1+x}\vert\leq \vert x\vert$. Using \eqref{qm} in \eqref{vv0} gives the thesis.

\subsection{Proof of Lemma~\ref{lemcon2}}\label{paux3}
\textbf{Proof of (i).}

We use the concentration inequality for pseudo-Lipschitz (random) mappings stated in Lemma~\ref{lemmakey}. 
To this end, we suppress the index $u$ for notational convenience and define a mapping $f:\CC^{N}\times \RR_{+}\to \CC^N$ such that $f(\mathbf{r};\tau^{(t)})\equiv f_t(\mathbf{r})$ for all $t$.

We first establish a concentration result for a class of generic mappings. For $\mathbf{z}\in\mathbb{C}^{2N}$, define linear mappings $\Phi_i:\mathbb{C}^{2N}\to\mathbb{C}^N$ via
\[
\begin{bmatrix}
\Phi_1(\mathbf{z}) \\
\Phi_2(\mathbf{z})
\end{bmatrix}
\doteq (\mathcal{C}^{\frac{1}{2}} \otimes \mathbf{I}_N)\mathbf{z},
\]
where $\mathcal{C}\geq \matr 0$ is a $2\times 2$ random matrix with entries satisfying $\mathcal{C}_{ij} = \Op{1}$. Then, we define 
\begin{align}
h_1(\mathbf{s},\mathbf{z})
&\doteq \langle f(\mathbf{s}+\Phi_1(\mathbf{z});\mathcal{C}_{11}), 
f(\mathbf{s}+\Phi_2(\mathbf{z});\mathcal{C}_{22}) \rangle, \\
h_2(\mathbf{s},\mathbf{z})
&\doteq \langle \Phi_1(\mathbf{z}), 
f(\mathbf{s}+\Phi_1(\mathbf{z});\mathcal{C}_{11}) \rangle, \\
h_3(\mathbf{s},\mathbf{z})
&\doteq \langle \mathbf{s}, f(\mathbf{s}+\Phi_1(\mathbf{z});\mathcal{C}_{11}) \rangle.
\label{st}
\end{align}
The function $f:\mathbb{C}^N\times\mathbb{R}_+\to\mathbb{C}^N$ satisfies, for all $\mathbf{r},\mathbf{r}'\in\mathbb{C}^N$ (see Lemma~\ref{lemmaft})
\begin{equation}
\|f(\mathbf{r};\tau)-f(\mathbf{r}';\tau)\|
\leq \mathrm{Lip}(\tau)\|\mathbf{r}-\mathbf{r}'\|,
\quad \text{with} \quad
\mathrm{Lip}(\tau)=\Op{1},
\end{equation}
and $\frac{1}{\sqrt{N}}\|f(\mathbf{0};\tau)\|=\Op{1}$. In particular, for any $\mathbf{z},\mathbf{z}'\in\mathbb{C}^{2N}$, one can verify the bound
\begin{align}
\big|h_i(\mathbf{s},\mathbf{z}) - h_i(\mathbf{s},\mathbf{z}')\big|
&\leq
\frac{\mathrm{Lip}(\mathcal{C})}{\sqrt{N}}
\left(1 + \frac{\|\mathbf{s}\|}{\sqrt{N}}
+ \frac{\|\mathbf{z}\|}{\sqrt{N}}
+ \frac{\|\mathbf{z}'\|}{\sqrt{N}}\right)
\|\mathbf{z}-\mathbf{z}'\|,
\label{usefulbound}
\end{align}
where $\mathrm{Lip}(\mathcal{C})=\Op{1}$. In particular, for $\mathbf{z}\sim\mathcal{CN}(\mathbf{0},\mathbf{I}_{2N})$ we have the concentrations from Lemma~\ref{lemmakey} as
\begin{align}
h_i(\mathbf{s},\mathbf{z})
&=
\mathbb{E}\!\left[h_i(\mathbf{s},\zv) \mid \mathbf{s}\right]
+\mathcal{O}(N^{-\frac 1 2})\\
&=\mathbb{E}\!\left[h_i(\mathbf{s},\zv)\right]
+\mathcal{O}(N^{-\frac 1 2})
\label{usethisbound}
\end{align}
where the latter step follows from Assumption~\ref{add_asm}. 

We first verify \eqref{res21}: By the premise of (i) and the Lipschitz continuity of $f_{u,t}$, we have
\[
\widetilde{\boldsymbol{\phi}}_u^{(t)}
=
\widetilde{\boldsymbol{\phi}}_{e,u}^{(t)}
+\mathcal{O}(1),
\qquad t\le t',
\]
where the effective random field is defined as
\begin{equation}
\widetilde{\boldsymbol{\phi}}_{e,u}^{(t)}
=
f_{u,t-1}(
\mathbf{s}_u
+
\underbrace{\sum_{s<t}\mathcal{B}_{u}^{(s,t-1)}\widetilde\gv_u^{(s)}}_{= \boldsymbol{\phi}_u^{(t-1)}})
-
\mathbf{s}_u.
\end{equation}
Consequently,
\begin{align}
\langle \widetilde\gv_u^{(s)}, \widetilde{\boldsymbol{\phi}}_u^{(t)} \rangle
&=
\langle \widetilde\gv_u^{(s)}, \widetilde{\boldsymbol{\phi}}_{e,u}^{(t)} \rangle
+\mathcal{O}(L^{-1/2}) \\
&=
\langle \widetilde{\gv}_u^{(s)},
f_{u,t-1}(\mathbf{s}_u+\boldsymbol{\phi}_u^{(t-1)}) \rangle
+\mathcal{O}(L^{-1/2}) \\
&\overset{(a)}{=}
\mathbb{E}[
\langle \widetilde{\gv}_u^{(s)},
f_{u,t-1}(\mathbf{s}_u+\boldsymbol{\phi}_u^{(t-1)}) \rangle
\mid \mathbf{s}_u
]
+\mathcal{O}(L^{-1/2}) \\
&=
\mathcal{B}_u^{(s,t-1)}
\mathbb{E}[
\langle f'_{u,t-1}(\mathbf{s}_u+\boldsymbol{\phi}_u^{(t-1)}) \rangle
\mid \mathbf{s}_u
]
+\mathcal{O}(L^{-1/2}) \\
&\overset{(b)}{=}
\mathcal{B}_u^{(s,t-1)}
\mathbb{E}[
\langle f'_{u,t-1}(\mathbf{s}_u+\boldsymbol{\phi}_u^{(t-1)}) \rangle
]
+\mathcal{O}(L^{-1/2}) \\
&\overset{(c)}{=}
\mathcal{O}(L^{-1/2}),\label{mc1}
\end{align}
where (a) follows from \eqref{usethisbound}, (b) follows from Assumption~\ref{add_asm}, and (c) follows from 
\[
\mathbb{E}[
\langle f'_{u,t}(\mathbf{s}_u+\boldsymbol{\phi}_u^{(t)}) \rangle
]=0,\quad \forall (u,t).
\]

Now, from \eqref{mc1} and together with Lemma~\ref{aux2} we have $\langle \widetilde\vv_u^{(2s)}, \widetilde{\boldsymbol{\phi}}_u^{(t)} \rangle=\Op{N^{-1/2}}$. This then gives for any $t\leq t'$
\begin{align}
\widetilde{\boldsymbol{\phi}}_{u,\perp}^{(t)}=\widetilde{\boldsymbol{\phi}}_u^{(t)}+\Op{1}=   \widetilde{\boldsymbol{\phi}}_{e,u}^{(t)}+\Op{1}
\end{align}
Then, \eqref{res22} follows from the concentration result \eqref{usethisbound}.

\textbf{Proof of (ii)}
 By the premise of (ii), we have
 \[
\widehat{\psiv}_u^{(t)}
=
\psiv_{u}^{(t)}
+\mathcal{O}(1),
\qquad t\leq t',
\]
where the effective random field $\psiv_{u}^{(t)}$ is as defined in Definition~\ref{def-effective}. 

Firstly, we have
\begin{align}
\langle \gv_u^{(t)},\widehat{\psiv}_u^{(s)}\rangle&\overset{(a)}{=}\mathbb E[\langle \gv_u^{(s)},{\psiv}_{u}^{(t)}\rangle]+\Op{L^{-\frac 1 2}}\\
&=\mathbb E[\langle \gv_u^{(s)},(\Id-\Mm_u^{(t)}\Qm_u)\qv_{u}^{(t)}\rangle]+\Op{N^{-\frac 1 2}}\\
&=\widetilde{\mathcal B}_u^{(s,t)}\langle\Id-\Mm_u^{(t)}\Qm_u\rangle+\Op{L^{-\frac 1 2}}\\
&= \Op{L^{-\frac 1 2}}\label{mc2}
\end{align}
where the step $(a)$  follows from Lemma~\ref{conprod}.
The latter step follows from the fact that \[\langle\Id-\Mm_u^{(t)}\Qm_u\rangle=0\].

From \eqref{mc2} and together with  Lemma~\ref{aux2} we have $\langle \vv_u^{(2s-1)}, \widehat{\boldsymbol{\psi}}_u^{(t)} \rangle=\Op{L^{-1/2}}$. This then gives for any $t\leq t'$
\begin{align}
\widehat{\boldsymbol{\psi}}_{u,\perp}^{(t)}=\widehat{\boldsymbol{\psi}}_u^{(t)}+\Op{1}=  {\boldsymbol{\psi}}_{u}^{(t)}+\Op{1}
\end{align}
Then, we have
\begin{align}
\langle \widehat{\psiv}_{u,\perp}^{(t)},\widehat{\psiv}_{u,\perp}^{(s)}\rangle &=\langle {\psiv}_{u}^{(t)},{\psiv}_{u}^{(s)}\rangle+\Op{L^{-1/2}}\\
&\overset{(a)}{=}\underbrace{\mathbb E[\langle {\psiv}_{u}^{(t)},{\psiv}_{u}^{(s)}\rangle]}_{=\mathcal C_u^{(t,s)}}+\Op{L^{-1/2}}
\end{align}
where the concentration result in the step $(a)$ follows from Lemma~\ref{conprod}.

\section{Proof of Remark~\ref{relax1} and Remark~\ref{relax2}}\label{relaxition}

\subsection{Proof of Remark~\ref{relax2}}
We recall the arithmetic properties of the finite-sample notion in Lemma~\ref{lemma:op_properties}. Similarly, we highlight the arithmetic properties of the asymptotic concentration notion as in \eqref{O1}. Specifically, for any random variables $A_N = \mathcal O_\infty{(1)}$ and $B_N = \mathcal O_\infty{(1)}$, we evidently have
\begin{subequations}
\label{arit}
\begin{align}
A_N + B_N &= \mathcal O_\infty{(1)}, \\
A_N B_N &= \mathcal O_\infty{(1)}, 
\end{align}
\end{subequations}
Moreover, as noted below \eqref{ascon}, if $A_N = \Op{N^{-\epsilon}}$ for any small constant $\epsilon > 0$, then we have $A_N \simeq 0$. Also, if $\widehat{\av} = \av + \Op{1}$ for $\av \in \CC^{N_u}$, then $\widehat{\av} \simeq \av.$

Clearly, by replacing the notion of $\Op{1}$ in Assumption~\ref{as1}-(iii) with $\mathcal{O}_\infty(1)$, and the concentration assumptions in Assumption~\eqref{add_asm} of the form $\widehat A = A + \Op{N^{-1/2}}$ with almost sure limiting concentrations $\widehat A \simeq A$, we can follow exactly the same steps as in Section~\ref{step2} and verify $\rv_{u}^{(t)} \simeq \sv_u + \phiv_u^{(t)}$.
\subsection{Proof of Remark~\ref{relax1}}
To bypass the singularity issues of the cross-correlation matrices,
we consider the perturbation idea of \cite[Section~5.4]{Berthier20}.  
Specifically, for each $(u,t)\in[U]\times[T]$, we introduce arbitrary the Gaussian RVs  $\matr \xi_u^{(t)}\sim \mathcal{CN}(\matr 0,\Id_{N_u})$ and 
perturb the OAMP dynamics \eqref{resoamp} as
\begin{subequations}\label{oamp}
		\begin{align}
        \rv_{u,\epsilon}^{(t)}&=\frac{\Am_u^\dagger \matr{\Sigma}^{(t)}}{\langle \Am_u^\dagger \matr{\Sigma}^{(t)} \Am_{u}\rangle}\left(\yv-\sum_{u'}\Am_{u'}{\fv}_{u',\epsilon}^{(t)}\right)+{\fv}_{u,\epsilon}^{(t)} \\
\fv_{u,\epsilon}^{(t+1)}&=f_{u,t,\epsilon}(\rv_{u,\epsilon}^{(t)})\; +\epsilon\matr\xi_u^{(t+1)}
		\end{align}    
	\end{subequations}
with $\fv_{u,\epsilon}^{(1)}=\fv_u^{(1)}+\epsilon \matr\xi_u^{(1)}$. Moreover, we define 
\begin{equation}
f_{u,t,\epsilon}(\xv)\doteq f_{u,t}(\xv)-\beta_{u,\epsilon}^{(t)}\xv
\end{equation}
where  
$\beta_{u,\epsilon}^{(t)}\in \CC$ is a deterministic sequence defined using the state-evolution of the perpetuated dynamics (defined in the following) to ensure the diverge-free property 
\begin{equation}
\mathbb E[\langle f'_{u,t,\epsilon}(\sv_u+\sqrt{\mathcal C_{u,\epsilon}^{(t,t)}}\zv_u)\rangle]=0\;.\label{dv2}
\end{equation}
In particular, we will subsequently prove for all $t\in[T]$  
\begin{equation}
\beta_{u,\epsilon}^{(t)}=o_N(\epsilon)\;.
\end{equation}
Here, in general we write $\matr\delta= o_N(\epsilon)$ for $\deltav\in \CC^{d}$ (we use $d=1$ and $d=N_u$) if
\begin{equation}
\lim_{\epsilon\to0}\lim_{N\to \infty}\frac{1}{\sqrt{d}}\Vert \deltav\Vert=0\;.
\end{equation}

In the steps of Section~\ref{paux1},
by substituting e.g., $\widetilde {\rv}_u^{(t)}$ with the difference $\rv_{u,\epsilon}^{(t)}-\rv_u^{(t)}$ and  $ \widetilde \fv_u^{(t)}$ with the difference $ \fv_{u,\epsilon}^{(t)}-{\fv}_{u}^{(t)}$, etc. one can verify inductively (over iteration steps) 
\begin{align}
\rv_{u,\epsilon}^{(t)}-\rv_u^{(t)}&=o_N(\epsilon)\label{usethis}\;.
\end{align}

Next, we introduce the perturbated version of the two-time state evolution in Definition~\ref{defSE}. Specifically, for each $u\in[U]$, let \(\{{\matr\phi}_{u,\epsilon}^{(t)} \in \mathbb C^{N_u}\}_{t \in [T]}\) be zero-mean Gaussian process independent of $\{\sv_u\}_{u\in[U]}$ such that ${\matr\phi}_{u,\epsilon}^{(t)}\sim_{\text{i.i.d.}}\Phi_{u,\epsilon}^{(t)}$
and the two-time covariances  $\mathcal C_{u,\epsilon}^{(t,s)} \doteq\mathbb E[(\Phi_{u,\epsilon}^{(t)})^*\Phi_{u,\epsilon}^{(s)}]$
are recursively constructed as
% \matr\Delta_\epsilon^{(t,s)} should be replaced by \matr\Delta_{u,\epsilon}^{(t,s)}, because this matrix is related to u
\begin{align}
\Rm_\epsilon^{(t,s)}&=\sigma^2\Id_L+\sum_u\widetilde{\mathcal C}_{u,\epsilon}^{(t,s)}\Am_u\Am_u^\dagger\nonumber 
\\
\mathcal C_{{u},\epsilon}^{(t,s)}&={\left\langle \Mm_u^{(s)} (\Rm_\epsilon^{(t,s)}) (\Mm_u^{(t)})^\dagger\right\rangle}
-\widetilde{\mathcal C}_{u,\epsilon }^{(t,s)}
\end{align}
Here, $\widetilde{\mathcal C}_{u,\epsilon}^{(t,s)}$ are deterministic such that we have as $N_u\to \infty$ 
\begin{align}
 \widetilde{\mathcal C}_{u,\epsilon}^{(t,s)}&= 
\mathbb E[\langle\widetilde{\matr\phi'}_{u,\epsilon}^{(t)}, \widetilde{\matr\phi'}_{u,\epsilon}^{(s)}\rangle]
\end{align}
where  $\widetilde{\matr\phi'}_{\epsilon}^{(1)}\doteq\fv_{u}^{(1)}-\sv_u+\epsilon\matr\xi_u^{(1)}$ and for $t\geq1$
\begin{align}
\widetilde{\phiv'}_{u,\epsilon}^{(t+1)}\doteq f_{u,t,\epsilon}(\sv_u+\matr\phi_{u,\epsilon}^{(t)})-\sv_u +\epsilon\matr\xi_u^{(t+1)}\;.
\end{align}   
Finally,  we define 
\begin{equation}
\beta_{u,\epsilon}^{(t)}\doteq\frac{1}{\mathcal C ^{(t,t)}_{u,\epsilon}}\mathbb E\langle\matr\phi_{u,\epsilon}^{(t)},
f_{u,t}(\sv_u+\matr\phi_{u,\epsilon}^{(t)})\rangle.\nonumber 
\end{equation}
Notice that by this definition, the diverge-free property \eqref{dv2} is fulfilled.

We now have the desired properties
\begin{align}
\widetilde{\mathcal C}^{(1:T)}_{u,\epsilon}>\epsilon^2  \Id_T\;
\end{align}
We then follow the same steps as in the proof of Theorem~\ref{th1} (simply adding the subscript $\epsilon$ to the notations) and verify that under the premises of Theorem~\ref{th1}
\begin{align}
\rv_{u,\epsilon}^{(t)}&\simeq \sv_u+{\matr\phi}_{u,\epsilon}^{(t)}\;.
\end{align}
We then have from \eqref{usethis},
\begin{align}
\rv_{u}^{(t)}&=\sv_u+{\matr\phi}_{u,\epsilon}^{(t)}+o_{N}(\epsilon) \;.
\end{align}

Let $\mathcal H_{t'}$ stand for the hypothesis that for all $t,s\in [t']$
\begin{align}
\widetilde{\mathcal C}_{u,\epsilon}^{(t,s)}&=\widetilde{\mathcal C}_{u}^{(t,s)}+o_N(\epsilon)\label{perturb1}
\\
\beta_{u,\epsilon}^{(t)}&=o_N(\epsilon)\label{perturb2}
\end{align}
We also note that \eqref{perturb1} implies that \[{\mathcal C}_{u,\epsilon}^{(t,s)}={\mathcal C}_{u}^{(t,s)}+o_N(\epsilon)\;.\] We will complete the proof by verifying that $\mathcal H_{t'}$ implies $\mathcal H_{t'+1}$ in the sequel. The proof of the base case $\mathcal H_1$ is analogous (and rather simpler). 

For convenience, we fix $u\in[U]$ and suppres the subscript $u$ from the notation, e.g. ${\mathcal C}_\epsilon^{(t,s)}\equiv {\mathcal C}_{u,\epsilon}^{(t,s)}$, $\matr\phi_{\epsilon}^{(t)}\equiv \matr\phi_{u,\epsilon}^{(t)}$, $\matr\phi^{(t)}\equiv \matr\phi_{u}^{(t)}$, etc. By the inductive hypothesis $\mathcal H_{t'}$ we have for $t,s\in[t']$
\begin{align}
&\left[\begin{array}{c}
     \matr\phi_{\epsilon}^{(t)} \\ 
     \matr\phi_{\epsilon}^{(s)} 
\end{array}\right]\doteq \left(\left[ \begin{array}{cc}
		{\mathcal C}_\epsilon^{(t,t)}&{\mathcal C}_\epsilon^{(t,s)}\\
		{\mathcal C}_\epsilon^{(s,t)}& {\mathcal C}_\epsilon^{(s,s)}
	\end{array}\right]\otimes \Id_N\right)^{\frac 1 2} \left[\begin{array}{c}
     \gv_1 \\
     \gv_2 
\end{array}\right]\nonumber \\
    &= \underbrace{\left(\left[ \begin{array}{cc}
		{\mathcal C}^{(t,t)}&{\mathcal C}^{(t,s)}\\
		{\mathcal C}^{(s,t)}& {\mathcal C}^{(s,s)}
	\end{array}\right]\otimes \Id_N\right)^{\frac 1 2}\left[\begin{array}{c}
     \gv_1 \\
     \gv_2 
\end{array}\right]}_{\doteq \left[\begin{array}{c}
     \matr\phi^{(t)} \\ 
     \matr\phi^{(s)} 
\end{array}\right]}+ o_N(\epsilon)\left[\begin{array}{c}
     \gv_1 \\
     \gv_2 
\end{array}\right]\label{good1}
\end{align}
where $\gv_1,\gv_2\sim \mathcal{CN}(\matr 0,\Id_N)$ are independent.  Moreover, we note that for any $t,s>1$
\begin{align}
\mathbb E[\langle\widetilde{\matr\phi'}_{\epsilon}^{(t+1)},\widetilde{\matr\phi'}_{\epsilon}^{(s+1)} \rangle]= \mathbb E[\langle\widetilde\phiv_{\epsilon}^{(t+1)},\widetilde{\matr\phi}_{\epsilon}^{(s+1)} \rangle]-\beta_{\epsilon}^{(t)}\beta_{\epsilon}^{(s)}\mathcal C_{\epsilon}^{(t,s)}
\end{align}
where we have defined for any $t>1$ $\widetilde{\matr\phi}_{\epsilon}^{(t)}\doteq f_{t-1}(\sv+\matr\phi_{\epsilon}^{(t)})-\sv\;.$
We note that 
\begin{align}
&\vert\langle\widetilde{\matr\phi}_{\epsilon}^{(t+1)},\widetilde{\matr\phi}_{\epsilon}^{(s+1)} \rangle-\langle\widetilde{\matr\phi}^{(t+1)},\widetilde{\matr\phi}^{(s+1)} \rangle\vert\nonumber\\
&\leq C(1+\frac{\Vert \sv \Vert}{\sqrt N}+\frac{\Vert \matr\phi_{\epsilon}^{(t,s)} \Vert}{\sqrt N}+ \frac{\Vert \matr\phi^{(t,s)} \Vert}{\sqrt N})\frac{ \Vert \matr\phi_{\epsilon}^{(t,s)}-\matr\phi^{(t,s)}\Vert}{\sqrt N}\label{endd} 
\end{align}
where for short, we write e.g., $\phiv^{(t,s)}\doteq (\phiv^{(t)},\phiv^{(s)})$ and 
From \eqref{good1} we note that 
\begin{equation}
\Vert\phiv_{\epsilon}^{(t,s)}-\phiv^{(t,s)}  \Vert\leq (\Vert\gv_1\Vert+\Vert \gv_2\Vert)o_N(\epsilon)\;.
\end{equation}
Hence, it evidently follows from \eqref{endd} that
$\mathcal H_{t'}$ implies that \eqref{perturb1} holds for any $t,s\in[t+1].$ 

Similarly, to verify the implication that  $\mathcal H_{t'}$ implies \eqref{perturb2} holds for $t'+1$ we note the bound 
\begin{align}
&\vert\langle\matr\phi_{\epsilon}^{(t)},
f_{t}(\sv+\matr\phi_{\epsilon}^{(t)})\rangle-\langle\matr\phi^{(t)},
f_{t}(\sv+\matr\phi^{(t)})\rangle\vert \nonumber \\
&\leq C(1+\frac{\Vert \sv \Vert}{\sqrt N}+\frac{\Vert \matr\phi_{\epsilon}^{(t)} \Vert}{\sqrt N}+ \frac{\Vert \matr\phi^{(t)} \Vert}{\sqrt N})\frac{ \Vert \matr\phi_{\epsilon}^{(t)}-\matr\phi^{(t)}\Vert}{\sqrt N} 
\end{align}
Note that $\mathbb E[\langle\matr\phi^{(t)},
f_{t}(\sv_u+\matr\phi^{(t)})\rangle]\simeq 0$. This implies $\beta_{\epsilon}^{(t+1)}=o_N(\epsilon)$ which completes the proof.

\section{The RS Ansatz}\label{free_energy}
We only derive the replica-symmetric (RS) prediction of the mutual information in \eqref{rs_result}, which coincides with the “free energy” of statistical physics up to an additive constant. Similarly, the MMSE prediction in \eqref{mmse} can be obtained within the same steps. Specifically, following the arguments in \cite[Appendix~E]{Tulino13}, we can introduce an auxiliary “external field” of the form ${\rm e}^{h g(\sv^{(1)}, \sv^{(2)}, \cdots, \sv^{(n)})}$ to couple replicated independent RVs $\sv^{(i)}$. By computing the resulting modified free energy, say $\mathcal I(h)$ and evaluating its derivative $I'(0)$, one obtains the desired input–output decoupling principle, of the form $(\sv_u; \sv_u + \sqrt{\tau_u}\zv)$ in the MMSE.

We note that $\mathcal I(\sv;\yv\vert \Am)=\mathcal I(\sv;\frac{1}{\sigma}\yv\vert \Am)$. Hence, we can substitute $\sigma^2\to 1$ and $\Am_u\to \frac{1}{\sigma}\Am_u$. Second, for convenience, we introduce $\widetilde \Am_u\doteq \Am_u\Om_u^\dagger$, i.e., 
\begin{equation}
\Am_u\equiv \widetilde\Am_u\Om_u
\end{equation}
where $\Om_u$ are independent Haar unitary random matrices.

We define the partition function, i.e., the probability of data matrix $\yv$, as 
\begin{equation}
	Z(\yv)\doteq \int {\rm dP}(\sv)\; {\rm e}^{-{L}\ln\pi+ -\norm{\yv-\sum_u\widetilde\Am_u\Om_u\sv_u}^2\;}\;.
\end{equation}
We make use of the typical assumption that the normalized mutual information is asymptotically (as $L, N_u\to \infty$) self-averaging (w.r.t. disorder matrices $\{\Om_u\}$) and study the limit
\begin{align}
	&\lim_{N_u,L\to \infty}\frac{1}{L}\mathbb E_{\{\Om_u\}}[\mathcal I(\sv;\yv)]=-1-\ln\pi-\lim_{N_u,L\to \infty}\frac{1}{L}\mathbb E_{\{\Om_u\}}\int {\rm d}\yv\; Z(\yv)\ln Z(\yv) \label{fenergy}\;.
\end{align}
Remark that
\begin{align}
\mathbb E_{\{\Om_u\}}\int {\rm d}\yv\; Z(\yv)\ln Z(\yv)&=\lim_{n\to 1}\frac{\partial}{\partial n}\ln\mathbb E_{\{\Om_u\}}\int {\rm d}\yv\; Z^{n}(\yv)\;.
\end{align}
Then, assuming we can exchange the order of the limits $n\to 1$ and $L\to \infty$, we write
\begin{equation}
	\mathcal I=-1- \ln\pi-\lim_{n\to 1}\frac{\partial}{\partial n}\underbrace{\lim_{N_u,L\to \infty}\frac 1 L\ln\mathbb E_{\{\Om_u\}}\int {\rm d}\yv\; Z^{n}(\yv)}_{\eqdef \mathcal M^{(n)}}\;.
\end{equation}
In the RS ansatz, we treat $n$ as an integer. In particular, using a standard Gaussian integration method one can verify that 
\begin{equation}
    \int {\rm d}\yv Z^{n}(\yv)={n^{L}}\int  \prod_{a\leq n}{\rm dP}(\sv^{(a)})\; {\rm e}^{- {\rm tr}[(\sum_{u}\widetilde\Am_u\Om_u\Sm_u)\Pm^\perp_{\matr 1}(\sum_{u}\widetilde\Am_u\Om_u\Sm_u)^\dagger]}
\end{equation}  
where for convenience we the $N_u\times n$ matrices $\Sm_u\eqdef[\sv_u^{(1)},\sv_u^{(2)},\ldots,\sv_u^{(n)}]$ and define the $n\times n$ projection matrix the orthogonal complements of $n$-dim. all one vector as $\Pm^\perp_{\matr 1}\doteq \Id_n-\frac{1}{n}\matr 1_n\matr 1_n^\top$.
\subsection{Disorder Average}\label{disorder_average}
Computing the disorder average $\mathbb E_{\{\Om_u\}} {\rm e}^{- {\rm tr}\left[\left(\sum_{u}\widetilde\Am_u\Om_u\Sm_u\right)\Pm^\perp_{\matr 1}\left(\sum_{u}\widetilde\Am_u\Om_u\Sm_u\right)^\dagger\right]}$ can be (asymptotically) evaluated using the asymptotic Itzykson–Zuber integral for $U=1$. However, when $U > 1$, the application of the Itzykson--Zuber approach is not straightforward. A similar challenge is treated for the simplified $T$-orthogonal setup \cite[Appendix~A]{Kabashima}, i.e., where $\widetilde\Am_u = \Id_N$ for all $u$, and the disorder average is handled by transforming the ''spin" variables into a suitable space in which the modified spins can be treated as uncorrelated. Instead, we introduce a new approach that is both generally applicable and straightforward.

Let us introduce the $n\times n$ empirical cross-correlation matrices as
\begin{equation}
 \mathcal Q_{u}=\frac{1}{N_u}\Sm_u^\dagger \Sm_u\;.
\end{equation}
We then write from ${\rm QR}$ decomposition of $\Sm_u$ as
\begin{equation}
\Sm_u=\sqrt{N_u}\widetilde\Vm_{u}{\rm chol}(\mathcal Q_{u})
\end{equation}
such that $\frac{1}{N_u}
\widetilde\Vm_u^\dagger\widetilde\Vm_u=\Id_n$. Above ${\rm chol}(\cdot)$ stands for the cholesky decomposition of the matrix in the argument.  For each $u$, let $\Vm_u\in\CC^{N_u\times n}$ be an arbitrary \emph{bi-unitarily invariant} matrix such that $\frac{1}{N_u}\Vm_{u}^\dagger\Vm_{u}=\Id_n$ %i.e., $\frac{1}{\sqrt N}\Vm_u$ is uniformly distributed on the Stiefel manifold $\mathcal V_{n}(N_u):\doteq\{\Xm: \Xm_u^\dagger\Xm_u=\Id_n\}$.
Furthermore,  let  $\Om_u^{(N_u-n)}$ is a $(N_u-n)\times (N_u-n)$ be an arbitrary Haar unitary matrix.  Then, from \cite[Lemma~1]{ccakmak2024multi} we have that
\begin{equation}
\Om_u\equiv\frac{1}{{N_u}}\Vm_u \widetilde\Vm_u^\dagger + \matr\Pi_{\Vm_u}^\perp\Om_u^{(N-n)}(\matr\Pi_{\Sm_u}^\perp)^\dagger\label{haarrepnew}
\end{equation}
is Haar unitary and independent of $\widetilde\Vm_u$. Here, $ \matr\Pi_{\Vm_u^{(1)}}^\perp$ and $ \matr\Pi_{\Sm_u}^\perp$ are semi-unitary matrices whose columns span the orthogonal complements of the column spans of $\Vm_u$ and $\widetilde\Vm_u$, respectively. E.g., we have $\matr\Pi_{\Vm_u}^\perp(\matr\Pi_{\Vm_u}^\perp)^\dagger=\Id_{N_u}-\frac{1}{N_u}\Vm_u \Vm_u^\dagger$.
		
From the representation \eqref{haarrepnew} we have 
\begin{align}
\matr\Gamma_u\doteq \Om_u \Sm_u=\sqrt{N_u}\Vm_u{\rm chol}(\mathcal Q_{u})
\end{align}
Hence, each $\matr\Gamma_u$ is independent with the density function conditioned on $\mathcal Q_u$
\begin{equation}
p(\matr\Gamma_u)=\frac{1}{Z(\mathcal Q_{u})}\delta \left( \matr\Gamma_u^\dagger\matr\Gamma_u- N_u\mathcal Q_{u}\right)\; 
\end{equation}
where the normalization constant reads as (see \cite[Theorem 2.1.14]{muirhead2009aspects})
\begin{equation}
\ln Z(\mathcal Q_u)= (N_u-n-1)\ln\vert N_u\mathcal Q_u\vert +N_un\ln \pi-\sum_{i\leq n}\ln \Gamma(N_u-i+1)\;. \label{exact}
\end{equation}
with $\Gamma(\cdot)$ denoting the Gamma-function.  Thus, we have
\begin{align}
&\mathbb E_{\{\Om_u\}} {\rm e}^{- {\rm tr}\left[\left(\sum_{u}\widetilde\Am_u\Om_u\Sm_u\right)\Pm^\perp_{\matr 1}\left(\sum_{u}\widetilde\Am_u\Om_u\Sm_u\right)^\dagger\right]}={\rm e}^{-\sum_u\ln Z(\mathcal Q_u)}\times\nonumber \\
&\times\int\prod_{u}\delta( \matr\Gamma_u^\dagger\matr\Gamma_u- N_u\mathcal Q_{u}){\rm d\matr\Gamma_u}\;{\rm e}^{- {\rm tr}\left[\left(\sum_{u}\widetilde\Am_u\matr\Gamma_u\right)\Pm^\perp_{\matr 1}\left(\sum_{u}\widetilde\Am_u\matr\Gamma_u\right)^\dagger\right]} \;.\label{disorder-simple}
\end{align}
We note that the Fourier representation of the Dirac-delta function for $\mathcal A,\mathcal B\in \CC^{n\times n}$ as
\begin{equation}
\delta(\mathcal A-\mathcal B)=\frac{1}{(2\pi{\rm i})^{n^2}}\int {\rm d}\widetilde {\mathcal Q} \;{\rm e}^{-{\rm tr}(\widetilde {\mathcal Q} (\mathcal A-\mathcal B))} 
\end{equation}
where integration is taken over $(t-{\rm i}\infty,t+{\rm i}\infty)^{n\times n}$ for some $t\in \RR$. Hence, the integral in \eqref{disorder-simple} can be easily computed asymptotically by the saddle-point method. 
\subsection{Saddle-Point Method}
Next, we will invoke \eqref{disorder-simple} together with the standard saddle-point method to compute $\mathcal M^{(n)}$. 

Firstly, we define the ``rate'' function 
\begin{align}
{\mathcal I}_u(\mathcal Q)\eqdef\lim_{N_u\to\infty}\frac{1}{N_u}\ln \int \prod_{a\leq n} {\rm dP}_u(\sv_u^{(a)})\;\delta (\Sm_u^\dagger\Sm_u-N_u\mathcal Q)\;\label{latdef}
\end{align}
Then, by the standard saddle-point method, we can express \eqref{latdef} as
\begin{align}
	\mathcal I_{u}(\mathcal Q)=\sup_{\widetilde{\mathcal Q}}\left({\rm tr}(\widetilde{\mathcal Q}\mathcal Q)+\mathcal G_u(\widetilde{\mathcal Q})\right)\;. \label{saddle1}
\end{align}
where we have introduced the moment-generating function
\begin{equation}
	\mathcal G_u(\widetilde{\mathcal Q})\eqdef\lim_{N_u\to \infty}\frac{1}{N_u}\ln\int \prod_{a\leq n} {\rm dP}_u(\sv_u^{(a)})\; {\rm e}^{-{\rm tr}(\Sm_u\widetilde{\mathcal Q}\Sm_u^\dagger)}\;.\label{eqc}
\end{equation}

Second, conditioned on $\mathcal Q_{u}$, we define
\begin{align}
\mathcal G(\{\mathcal Q_u\})=\lim_{N_u,L\to\infty} \frac{1}{L}\ln \mathbb E_{\{\Om_u\}} {\rm e}^{- {\rm tr}\left[\left(\sum_{u}\widetilde\Am_u\Om_u\Sm_u\right)\Pm^\perp_{\matr 1}\left(\sum_{u}\widetilde\Am_u\Om_u\Sm_u\right)^\dagger\right]}
\end{align}
Then, from the representation \eqref{disorder-simple}, it again follows by the saddle point method that 
\begin{align}
&\mathcal G(\{\mathcal Q_u\})= -n\ln\pi- n -\sum_{u}\alpha_u\ln\mathcal Q_u
\nonumber \\
&+\sup_{\widehat{\mathcal Q}}\left[\sum_u\alpha_u{\rm tr}( \widehat{\mathcal Q}_u\mathcal Q_u)-{\alpha_u}\ln\vert\widehat{\mathcal Q}_u\vert- \lim_{N_u,L\to \infty}\frac{1}{L}\ln \left\vert \Id_n\otimes \Id_L +\sum_u \widehat{\mathcal Q}_u^{-1}\Pm^\perp_{\matr 1}\otimes \Am_u\Am_u^\dagger\right\vert
\right]\;.\label{saddle2}
\end{align}
Above, as to the first line, we note from \eqref{exact} that
\begin{equation}
\lim_{N_u\to \infty}\frac{1}{N_u}\ln Z(\mathcal Q_u)=n(1+\ln \pi)+\ln\vert \mathcal Q_u \vert.
\end{equation}

Combining the two saddle-point results \eqref{saddle1} and \eqref{saddle2} we then have 
\begin{align}
\mathcal M(n)=n+n\ln\pi+&\sup_{\{\mathcal Q_u,\mathcal{\widetilde Q}_u,\widehat{\mathcal Q}_u\}}f(\{\mathcal Q_u,\widetilde{\mathcal Q}_u,\widehat{\mathcal Q}_u\}).\label{extrm}
\end{align}
where for convenience, we define 
\begin{align}
f(\{\mathcal Q_u,\widetilde{\mathcal Q}_u,\widehat{\mathcal Q}_u\})\doteq &\sum_{u}{\alpha_u}[\mathcal G_u(\widetilde{\mathcal Q}_u)+{\rm tr}(\widetilde{\mathcal Q}_u\mathcal Q_u) +{\rm tr}(\widehat{\mathcal Q}_u\mathcal Q_u)-\ln\vert\widehat{\mathcal Q}_u\mathcal Q_u\vert]\nonumber \\
&-\lim_{N_u,L\to \infty}\frac{1}{L}\ln \big\vert \Id_n\otimes \Id_L +\sum_u \widehat{\mathcal Q}_u^{-1}\Pm^\perp_{\matr 1}\otimes \Am_u\Am_u^\dagger\big\vert\;.
\end{align}

\subsection{Replica Symmetry Assumption}
We essentially need to simplify the optimization problem in \eqref{extrm}, which we resolve by invoking the replica-symmetry assumption. To this end, we first examine the properties of the stationary points of $\{\mathcal Q_u, \widetilde{\mathcal Q}_u, \widehat{\mathcal Q}_u\}$.

\begin{remark}\label{remrs1}
	Let $\mathcal Q_u^\star$, $\widetilde{\mathcal Q}_u^\star$, and $\widehat{\mathcal Q}_u^\star$ denote the optimizers that achieve the supremum in~\eqref{extrm}. By taking the gradients of the objective function with respect to the optimization variables and setting them equal to zero, we obtain the following stationary-point conditions\footnote{We make use the technical assumption by replacing the order of the gradients with the limiting expression $\lim_{N\to \infty}$ as $\frac{\partial}{\partial \mathcal Q}\lim_{N\to \infty}f_{N}(\mathcal Q)=\lim_{N\to \infty}\frac{\partial}{\partial \mathcal Q}f_{N}(\mathcal Q)$.}:
    \begin{subequations}
    \label{general_stationary}
     \begin{align}
	&\mathcal Q_u^{\star(a,b)}=\lim_{N_u\to\infty}\mathbb E[\langle \sv_u^{(a)},\sv_u^{(b)}\rangle]_{\mathcal G_u}\quad \label{rem11}\\
	&(\mathcal Q_u^\star)^{-1}=\mathcal{\widetilde Q}^\star_u+\mathcal{\widehat Q}^\star_u \label{rem12}\\
    &(\mathcal {\widehat Q}_u^\star)^{-1}= \mathcal Q_u^\star+(\mathcal {\widehat Q}_u^\star)^{-1}\lim_{N\to \infty}\frac{1}{N_u}{\rm tr}_{L}\left[(\Id_n\otimes \Id_L +\sum_u \widehat{\mathcal Q}_u^{-1}\Pm^\perp_{\matr 1}\otimes \Am_u\Am_u^\dagger)^{-1}\Pm^\perp_{\matr 1}\otimes \Am_u\Am_u^\dagger\right](\mathcal {\widehat Q}_u^\star)^{-1} \;.\label{rem13}
    \end{align}   
    \end{subequations}
	Here, $\mathbb E[(\cdot)]_{\mathcal G_u}$ denotes the expectation w.r.t. the moment generating function $\mathcal G_u(\widetilde{\mathcal Q}_u^{\star})$ and ${\rm tr}_L$ stands for the partial trace, which can be compactly defined as for an $\Xm\in\CC^{nL\times nL}$
    \begin{equation}
    {\rm tr}_{L}(\matr X)\doteq \mathbb E[(\Id_n\otimes \zv)^\herm\Xm(\Id_n\otimes \zv)]\in \CC^{n\times n}\;.
    \end{equation}
\end{remark}

We now postulate that, in the solution $\mathcal Q_u^{\star(a,b)}$, there are no preferred replica indices, i.e., replica-symmetry assumption: For all replica indices $a,b,c \in [n]$,
\begin{align}
\mathcal Q_u^{\star(a,a)} &= \mathcal Q_u^{\star(b,b)}, \quad a\neq b \\
\mathcal Q_u^{\star(a,b)} &= \mathcal Q_u^{\star(a,c)}, \quad a \neq b \neq c .
\end{align}
Equivalently, we introduce the $n\times n$ orthogonal matrix $\Om_n$ through the eigenvalue decomposition
\begin{equation}
\Pm^\perp_{\matr 1}=\Om_n {\rm diag}(0,1,\cdots,1)\Om_n^\top 
\end{equation}
and then we have the eigenvalue decomposition $\mathcal Q_u^\star$ of the form
\begin{subequations}
\label{symmetry}
\begin{align}
\mathcal Q_u^\star\equiv\Om_n{\rm diag}(q_u, \nu_u,\cdots ,\nu_u)\Om_n^\top\;.
\end{align}
Furthermore, from \eqref{general_stationary} we have the replica-symmetry solutions for $\widetilde{\mathcal Q}_u$ and $\widetilde{\mathcal Q}_u$ of the form
\begin{align}
 \widetilde{\mathcal Q}_u^\star&=\Om_n{\rm diag}(\widetilde q_u, 1/\tau_u,\cdots ,1/\tau_u)\Om_n^\top  \\
  \widehat{\mathcal Q}_u^\star&=\Om_n{\rm diag}(\widehat q_u, 1/\lambda_u,\cdots ,1/\lambda_u)\Om_n^\top\;. 
\end{align}
\end{subequations}
Here, for the sake of upcoming notational flow, we choose the notational setup of $1/ \tau_u$ and $1/ \lambda_u$. In particular,  \eqref{symmetry} simplifies the the  stationary-point conditions in \eqref{rem12} and \eqref{rem13} as
\begin{subequations}
\label{sol_new}
\begin{align}
\left[ \begin{array}{c}
\frac{1}{q_u}\\
\frac{1}{\nu_u}
\end{array}\right]&=\left[ \begin{array}{c}
\widetilde q_u\\
\frac{1}{\tau_u}
\end{array}\right]  +\left[ \begin{array}{c}
\widehat q_u\\
\frac{1}{\lambda_u}
\end{array}\right] \\
\left[ \begin{array}{c}
\frac{1}{\widehat q_u}\\
{\lambda_u}
\end{array}\right]&=\left[ \begin{array}{c}
q_u\\
\nu_u
\end{array}\right] +\left[ \begin{array}{c}
0\\
\lambda_u^2{\rm G}_u(\matr\lambda)
\end{array}\right] \;.
\end{align}
where $g_u$ is defined as in \eqref{limits}. 
Note that this solution already implies that 
\begin{equation}
\widehat q_u=\frac{1}{q_u} \quad \text{and}\quad  \widetilde q_u=0. \label{simple}
\end{equation}
\end{subequations}

From the replica-symmetry solution in~\eqref{sol_new}, we first note that 
\begin{subequations}
\begin{align}
{\rm tr}(\widetilde{\mathcal Q}_u\mathcal Q_u) +{\rm tr}(\widehat{\mathcal Q}_u\mathcal Q_u)-\ln\vert\widehat{\mathcal Q}_u\mathcal Q_u\vert&=n+(n-1)\ln{(1+\tau_u/\lambda_u)}\\
\lim_{N_u,L\to \infty}\frac{1}{L}\ln \vert \Id_n\otimes \Id_L +\sum_u \widehat{\mathcal Q}_u^{-1}\Pm^\perp_{\matr 1}\otimes \Am_u\Am_u^\dagger\vert&=(n-1)\lim_{N_u,L\to\infty}\frac{1}{L}\ln\vert
\mathbf{I}_L
+
\sum_{u \leq U}
\lambda_u \mathbf{A}_u \mathbf{A}_u^\dagger\vert\;.
\end{align}
\end{subequations}
From the replica-symmetry solution in~\eqref{sol_new} we have
\begin{align}
\widetilde{\mathcal Q}_u^\star=\frac{1}{\tau_u}\Pm^\perp_{\matr 1}\;.
\end{align}
Hence, by using the standard Gaussian integral identity, we get
\begin{align}
	\mathcal G_u(\widetilde{\mathcal Q}_u^\star)&\equiv\lim_{N\to\infty}\frac{1}{N_u} \ln\int \prod_{a=1}^{n}{\rm dP}_u(\sv^{(a)})\; {\rm e}^{-\frac{1}{\tau_u}{\rm tr}(\sv^{(1:n)}\Pm^\perp_{\matr 1}(\sv^{(1:n)})^\dagger)}\\
&=\ln\pi+\lim_{N_u\to\infty}\frac{1}{N_u}\ln \int {\rm d}\rv (\mathcal Z_u(\rv))^{n}\label{qhatsym}
\end{align}
where we have defined $\mathcal Z_u(\rv)\doteq 
	\int {\rm dP}_u(\sv)\;{\rm e}^{-\norm{\rv-\frac{1}{\sqrt{\tau_u}}\sv}^2}$.
Then, have 
\begin{align}
	\lim_{n\to 1}\frac{\mathcal G_u(\widetilde{\mathcal Q}_u^\star)}{\partial n}&=\lim_{N_u\to \infty}\frac{1}{N_u} \int {\rm d}\rv \mathcal Z_u(\rv)\ln \mathcal Z_u(\rv)\;
    \\&=-1- \lim_{N_u\to \infty}\frac{1}{N_u}\mathcal I(\sv_u;\sv_u+\sqrt{\tau_u}\zv). 
\end{align}
Hence, putting everything together, we complete the derivation as 
\begin{equation*}
\mathcal I=\inf_{\matr\tau,\matr\lambda}\lim_{N_u,L\to\infty}\frac{1}{L}\ln\big\vert
\mathbf{I}_L
+
\sum_{u \leq U}
\lambda_u \mathbf{A}_u \mathbf{A}_u^\dagger\big\vert+\sum_u\alpha_u\big[\lim_{N_u\to \infty}\frac{1}{N_u}\mathcal I(\sv_u;\sv_u+\sqrt{\tau_u}\zv)-\ln(1+\frac{\tau_u}{\lambda_u})\big]\;.
\end{equation*}
\section{Proof of Remark~\ref{remdarya}}\label{proof_drem}
As a first step, we use the deterministic-equivalent result \cite[Theorem~2]{couillet2011deterministic} to simplify the the corresponding $\ln\det$ expression as
\begin{align}
&\lim_{N_u,L \to\infty}\frac{1}{L}
\ln\big\vert
\mathbf{I}_L
+
\frac{1}{\sigma^2}
\sum_{u}
\lambda_u \mathbf{A}_u \mathbf{A}_u^\dagger\big\vert\\
&= \lim_{L\to \infty} \ln\big\vert
\mathbf{I}_L
+
\frac{1}{\sigma^2}
\sum_{u}\frac{\alpha_u\lambda_u}{1+e_u}\Hm_u \Hm_u^\dagger\big\vert +\sum_u\alpha_u[\ln (1+e_u)- \frac{e_u}{1+e_u}]
\end{align}
where $\{e_u\}$ are the unique solutions (given $\matr\lambda$) of 
\begin{align}
e_u=\lambda_u\lim_{L\to\infty}\frac{1}{L}{\rm tr}(\Hm_u\Hm_u^\dagger(\sigma^2\Id_L+\sum_{u'}\alpha_{u'}\frac{\lambda_{u'}}{1+e_{u'}}\Hm_{u'}\Hm_{u'}^\dagger)^{-1})\;. \label{eu}
\end{align}

By using the matrix inversion Lemma,  we represent the RS-fixed point equations \eqref{fixed}  as
\begin{subequations}
\label{Rss}
\begin{align}
\nu_u&\equiv\lim_{N,L_u\to\infty}\frac{1}{N_u}{\rm tr}\left(\frac{1}{\lambda_u}\Id_{N_u} + \Am_u^\dagger\left(\sigma^2\Id_L+\sum_{u'\neq u}\lambda_{u'}\Am_{u'}\Am_{u'}^\dagger\right)^{-1} \Am_u
\right)^{-1}\\
\frac{1}{\tau_u}&\equiv\frac{1}{\nu_u}-\frac{1}{\lambda_u}\\
\nu_u &\equiv{\rm mmse}(\sv_u\vert \sv_u+\sqrt{\tau_u}\zv)\;.
\end{align}    
\end{subequations}
In the sequel, we will verify that
\begin{equation}
\label{RS_new}
\begin{aligned}
\frac{1}{\tau_u}&=\lim_{L\to\infty}\frac{1}{L}{\rm tr}(\Hm_u\Hm_u^\dagger(\sigma^2\Id_N+\sum_{u'}\alpha_u\nu_{u'}\Hm_{u'}\Hm_{u'}^\dagger)^{-1})\;.\\
\nu_u &={\rm mmse}(\sv_u\vert \sv_u+\sqrt{\tau_u}\zv)\;.
\end{aligned}   
\end{equation}
In particular, for any given solution $\matr \lambda$ and $\matr\tau$ of \eqref{RS_new} we get the solution of $e_u$ in \eqref{eu} as 
\begin{equation}
e_u=\frac{\lambda_u}{\tau_u}~~\text{and}
~~\frac{\lambda_u}{1+e_u}=\nu_u\;.
\end{equation}
Hence, we will complete the proof  by deriving the simplified representation of \eqref{Rss} as \eqref{RS_new}. 

To this end, for convenience, we introduce
\begin{equation}
\Dm_u\doteq \matr\Hm_u^{-1}\left(\sigma^2\Id_L+\sum_{u'\neq u}\lambda_{u'}\Am_{u'}\Am_{u'}^\dagger\right)^{-1}(\Hm_u^{-1})^\dagger\;.
\end{equation}
Here, $\Hm^{-1}$ in general stands for a pseudo inverse of $\Hm_u$. 
Then, by construction, we have
\begin{align}
\frac{1}{\tau_u}&={{\rm R}}_{\Xim_u^\dagger \Dm_u^{-1}\Xim_u}(-\nu_u)\\
&=\int \frac{x {\rm d F}_{\Dm_u^{-1}}(x)}{1+\alpha_u\nu_ux}\label{MP}\\
&=\underbrace{\int \frac{{\rm d F}_{\Dm_u}(x)}{x+\alpha_u\nu_{u}}}_{={\rm G}_{\Dm_u}(-\alpha_u \nu_u)}\;.
\end{align}
Here, ${\rm R}_{(\cdot)}$ and ${\rm G}_{(\cdot)}$ denote the \emph{R-transform} and \emph{Stieltjes transform}, respectively, of the limiting spectral distribution of the matrix in the argument \cite{mingo2017free}, while ${\rm F}_{(\cdot)}$ denotes the limiting distribution itself. Note that the step~\eqref{MP} follows from a simple application of \emph{Marčenko--Pastur Theorem}.

We then simplify ${\rm G}_{\Dm_u}(-\alpha_u\nu_u)$ by invoking the deterministic-equivalent result  \cite[Theorem~1]{couillet2011deterministic} (see also \cite{speicher2012free}): Specifically, consider the random-matrix-free [i.e., $\Xim_u$-free] representation
\begin{align}
\widetilde\Dm_u\doteq \Hm_u^{-1}(\sigma^2\Id_N+\sum_{u'\neq u}\alpha_{u'}\frac{\lambda_{u'}}{1+\tilde e_{u'}}\Hm_{u'}\Hm_{u'}^\dagger)(\Hm_u^{-1})^\dagger
\end{align}
where each $\tilde e_u$ has a unique solution through the $U$ fixed-point equations 
\begin{equation}
\tilde e_u=\lambda_u\lim_{L\to\infty}\frac{1}{L}{\rm tr}(\Hm_u\Hm_u^\dagger(\sigma^2\Id_L+\sum_{u'\neq u}\alpha_{u'}\frac{\lambda_{u'}}{1+e_{u'}}\Hm_{u'}\Hm_{u'}^\dagger+\alpha_u \nu_u\Hm_u\Hm_u^\dagger)^{-1})\;.\label{e_u}
\end{equation}
From  \cite[Theorem~1]{couillet2011deterministic}, we have 
\begin{equation}
\frac{1}{\tau_{u}}={\rm G}_{\widetilde \Dm_u}(-\alpha_u\nu_u)\;. 
\end{equation}
Now, let us check the consistency of the solutions:
\begin{equation}
\tilde e_u=\frac{\nu_u}{\tau_u-\nu_u}\quad \forall u\;. \label{postulate}
\end{equation}
Recall that $\lambda_u=\tau_u\nu_u/(\tau_u-\nu_u)$; so that from \eqref{postulate} we get
\begin{equation}
\frac{\lambda_u}{1+\tilde e_{u}}=\nu_u.
\end{equation}
Indeed, from \eqref{e_u} we have $\tilde e_{u}=\lambda_u{\rm G}_{\widetilde \Dm_u}(-\alpha_u\nu_u)\;.$ This completes the derivation.

\section{Proof of Lemma~\ref{contraction}}\label{proof_rem_contradiction}
We introduce the mappings for $\xv\in[0,\infty)^U$ and $x\in (0,\infty)$
\begin{align}
{F}_{N_u}(\xv)&=\frac{1}{\langle\Am_u^\dagger \matr\Sigma(\xv)\Am_u\rangle}-{x_u}\\
{G}_{N_u}(x)&\doteq  \left(\frac{1}{N_u}{\rm mmse}(\sv_u\vert \frac{1}{\sqrt x}\sv_u+\zv)\right)^{-1}-\frac{1}{x}
\end{align}
where for convenience, we define  
\begin{align}
\matr\Sigma(\xv)&\doteq\left(\sigma^2 \mathbf{I}_L + \sum_u x_{u}\mathbf{A}_{u} \mathbf{A}_{u}^\dagger\right)^{-1}\;.
\end{align}
We note that the SE recursions simplify to
\begin{align}
\tau_u^{(t)}&={ F}_{N_u}(\matr\lambda^{(t)})\\
\lambda_u^{(t+1)}&=\frac{1}{{G}_{N_u}(\tau_u^{(t)})}\;.     
\end{align}

 We have the derivatives 
\begin{align}
\frac{\partial{{ F}_{N_u}(\xv)}}{\partial x_{u'}}&=\left(\frac{\langle \Am_u^\dagger \Am_{u'}\matr\Sigma(\xv)^{2}\Am_{u'}^\dagger\Am_{u}\rangle}{\langle \Am_u^\dagger\matr\Sigma(\xv)\Am_u\rangle^2}-\delta_{u,u'}\right)\label{derbound1} \\
\frac{\partial{{G}_{N_u}(x)}}{\partial x}&=\frac{1}{x^2}\left(1-\frac{\frac{1}{N_u}\mathbb E[\norm{\eta_{u,x}'(\sqrt{\frac{1}{x}}\sv_u+\zv)}_{ F}^2]}{( \frac{1}{N_u}\mathbb E[{\rm tr}(\eta_{u,x}'(\sqrt{\frac 1 x}\sv_u+\zv))])^2 }\right)\label{derbound2}
\end{align}
Here, we define $\eta_{u,\tau}'$ stands for the Jacobian matrix of the posterior mean as 
\begin{align}
\eta'_{u,x}(\gammav)\doteq\frac{\partial }{\partial \gammav} \mathbb E[\sv_u\vert \gammav= \sqrt{\frac{1}{x}}\sv_u+\zv]\;.
\end{align}
As a matter of fact, \eqref{derbound2} is obtained by verifying that 
$\eta'_{u,x}(\gammav)$ coincides with the posterior covariance matrix 
$\sv_u$ given the observation $\gammav = \sqrt{\frac{1}{x}} \sv_u + \zv$, i.e.,
\begin{align}
\eta'_{u,x}(\gammav)
= \mathbb{E}\!\left[
(\sv_u - \mathbb{E}[\sv_u \mid \gammav])
(\sv_u - \mathbb{E}[\sv_u \mid \gammav])^\dagger
\,\middle|\, \gammav
\right].
\end{align}
We then invoke the classical result of the MMSE mapping \cite[Eq. (16)]{reeves2019understanding}:
\begin{align}
\frac{\partial \, \mathrm{mmse}(\sv_u \mid \sqrt{\rho} \sv_u + \zv)}{\partial \rho}
= - \mathbb{E}\!\left[\norm{
(\sv_u - \mathbb{E}[\sv_u \mid \sqrt{\rho} \sv_u + \zv])
(\sv_u - \mathbb{E}[\sv_u \mid \sqrt{\rho} \sv_u + \zv])^\dagger}_{\rm F}^2
\right]\;.
\end{align}
Note that the first partial derivative reads zero if and only if we have for all $\matr \Am_u\matr \Am_u^\dagger=c_u\Id_L$ for some constant $c_u$; otherwise, we have 
$
\frac{\partial{{F}_{N_u}(\xv)}}{\partial x_u}>0\;. $

Second, $\frac{\partial{{G}_{N_u}(x)}}{\partial x}=0$ if and only if $\eta_{u,x}'(\gammav)\equiv c_{u}\Id_{N_u}$, i.e., $\eta_{u,x}(\gammav)=\av_u+c_u\gammav$ for some constant vector $\av_u$. This can only happen if $\sv_u$ is an isotropic Gaussian distribution. Hence, unless $\sv_u$ is an isotropic Gaussian distribution, we have 
$
\frac{\partial{{G}_{N_u}(x)}}{\partial x}<0\;. $
This completes the proof. 
\subsection{The asymptotic equivalence with the RS fixed-point solution}
We now introduce the limiting mappings for $\xv \in [0,\infty)^U$ and $x \in (0,\infty)$,
\begin{align}
{F}_u(\xv) &\doteq \lim_{N_u,L\to\infty} { F}_{N_u}(\xv), \\
{G}_u(x) &\doteq \lim_{N_u\to\infty} { G}_{N_u}(x),
\end{align}
which exists by the existence of the limits in \eqref{limits}. Starting from
$\lambda_u^{\star(1)} = {\rm M}_u(0)$, we define the recursion for solving the RS fixed-point equations \eqref{fixed} as
\begin{align}
\tau_u^{\star(t)} &= {  F}_u\bigl(\matr\lambda^{\star(t)}\bigr), \\
\lambda_u^{\star(t+1)} &= \frac{1}{{G}_u(\tau_u^{\star(t)})} .
\end{align}
At this stage, we note that
\begin{equation}
F_u(\xv) \leq F_u(\xv'), \qquad \xv \leq \xv',
\end{equation}
which follows directly from the fact that
${F}_{N_u}(\xv) \leq { F}_{N_u}(\xv')$.
Recall also the assumption $
a_u \doteq \lim_{N_u,L\to\infty} \frac{1}{N_u}\norm{\Am_u}_{\rm F}^2 > 0$. Hence, we obtain the universal lower bound
\begin{equation}
\tau_u^\star,\; \tau_u^{\star(t)} \geq {F}_u(\matr 0)
\equiv \frac{\sigma^2}{a_u}.
\end{equation}

Next, we prove by induction over the iteration index $t'$ that
\begin{align}
\matr\lambda^{(t')} &= \matr\lambda^{\star(t')} + \matr\epsilon(1),
\label{hh11}\\
\matr\tau^{(t')} &= \matr\tau^{\star(t')} + \matr\epsilon(1),
\label{hh22}
\end{align}
where $\matr\epsilon$ denotes a sequence indexed by $(N,L_u)$ such that
\[
\norm{\matr\epsilon} \overset{\rm a.s.}{\longrightarrow} 0
\quad \text{as } N,L_u \to \infty.
\]
By the contraction property of $\matr\lambda^{(t)}$ and $\matr\tau^{(t)}$
(see Remark~\ref{contraction}), this will then implies
\begin{align}
\matr 0 \leq \matr\lambda^{\star(t+1)} \leq \matr\lambda^{\star(t)}, \\
\matr 0 < \matr\tau^{\star(t+1)} \leq \matr\tau^{\star(t)} .
\end{align}
Hence, as $t \to \infty$, the sequences
$\matr\lambda^{\star(t)}$ and $\matr\tau^{\star(t)}$
converge to a solution of the RS fixed-point equations \eqref{fixed}, namely,
\begin{align*}
\tau_u^\star &= { F}_u(\matr\lambda^\star), \\
\lambda_u^\star &= \frac{1}{{G}_u(\tau_u^\star)} .
\end{align*}

Equation \eqref{hh11} holds evidently for $t' = 1$.
Suppose now that \eqref{hh11} holds for $t' = t$.
Then, by the mean-value theorem,
\begin{align}
{ F}_{N_u}(\matr\lambda^{(t)}) - {F}_{N_u}(\matr\lambda^{\star(t)})
=
\sum_{u' \leq U}
[{ F}'_{N_u,}\!(\tilde{\matr\lambda}^{(t)})]_u
\epsilon_{u'},
\end{align}
where for short  we define $\tilde{\matr\lambda}^{(t)} \doteq
q\,\matr\lambda^{(t)} + (1-q)\,\matr\lambda^{\star(t)}$ with $q \equiv q_{N_u} \in (0,1)$ such that
$q_{N_u} \to 1$ as $N_u,L \to \infty$.

The partial derivatives can be bounded as (see \eqref{derbound1})
\begin{align}
\bigl| [{F}'_{N_u}(\tilde{\matr\lambda}^{(t)})]_{u'}\bigr|
<
\frac{\frac{1}{N_u}\norm{\Am_u \Am_{u'}^\dagger}_{\rm F}^2}
{\left(\frac{1}{N_u}\norm{\Am_u}_{\rm F}^2\right)^2}
\left(
1 + \frac{1}{\sigma^2}
\sum_{u \leq U} \tilde{\lambda}_u^{(t)} \norm{\Am_u}_2^2
\right)^2
- \delta_{u,u'} .
\end{align}
Since $\norm{\Am_u}_2$ is almost surely bounded as $N_u,L \to \infty$ and $
\lim_{N_u,L\to\infty} \frac{1}{N_u}\norm{\Am_u}_{\rm F}^2 > 0$, we obtain
\begin{equation}
\limsup_{N_u,L\to\infty}
\norm{{F}'_{N_u}(\tilde{\matr\lambda}^{(t)})}
< \infty
\quad \text{a.s.}
\end{equation}
This establishes \eqref{hh22}, i.e.,
\begin{equation}
\lim_{N_u,L\to\infty}
\Bigl(
{F}_{N_u}(\matr\lambda^{(t)})
- {F}_{N_u}(\matr\lambda^{\star(t)})
\Bigr)
= 0 .
\end{equation}

Similarly, by the mean-value theorem, we write
\begin{align}
{\rm M}_{N_u}\!\left({\tau_u^{\star(t)}}\right)
-
{\rm M}_{N_u}\!\left({\tau_u^{(t)}}\right)
=
\frac{1}{N_u}
\mathbb E \big[
\norm{
\eta_{u,1/\rho}\!\big(
\sqrt{\rho}\,\sv_u + \zv
\big)
}_{\rm F}^2
\big]
\epsilon_u\;,
\end{align}
where for short $\rho \doteq
\frac{q}{\tau_u^{(t)}} + \frac{1-q}{\tau_u^{\star(t)}}$ with $q \equiv q_{N_u} \in (0,1)$ such that $q_{N_u} \to 1$ as $N_u \to \infty$. From the regularity condition on the posterior mean denoiser, we get
\begin{equation}
\limsup_{N_u\to\infty}
\frac{1}{N_u}
\mathbb E\big[
\norm{
\eta_{u,1/\rho}'\!\big(
\sqrt{\rho}\,\sv_u + \zv
\big)
}_{\rm F}^2
\big]
< \infty\;.
\end{equation}
Hence, it follows that \eqref{hh11} holds for $t' = t+1$, completing the induction.

\bibliographystyle{IEEEtran}
\bibliography{report}
\end{document}

%% file: macros.tex
\usepackage{amsmath,amssymb,dsfont,stfloats,color,url,bbm}
\usepackage{subfigure}
\usepackage{dsfont}
\usepackage{cite}
\usepackage{xcolor}
\usepackage{pgfplots, pgfplotstable}
\renewcommand\fbox{\fcolorbox{red}{white}}
\allowdisplaybreaks
\usepackage{mathtools}
\usepackage{cite}
\usepackage[margin=1in]{geometry}
\usepackage{amsmath}
\usepackage{amssymb}
\usepackage{amsthm}
\usepackage{latexsym}
\usepackage{verbatim}
\usepackage{subfigure}
\usepackage{psfrag}
\usepackage{color}
\usepackage{enumitem}

\usepackage[utf8]{inputenc} 
\usepackage{amsfonts} 
\usepackage[T1]{fontenc}
\usepackage{url}              % provides \url{...}

\usepackage{hyperref}
\hypersetup{
	colorlinks=true,
	linkcolor=blue,
	filecolor=blue,
	citecolor = blue,      
	urlcolor=cyan,}

\usepackage{esdiff}
\usepackage{pgf,tikz,psfrag}
\usepackage{yfonts}

\newtheorem{theorem}{Theorem}

\newtheorem{lemma}{Lemma}

\newtheorem{assumption}{Assumption}
\newtheorem{corollary}{Corollary}

\newtheorem{definition}{Definition}
\newtheorem{remark}{Remark}

\def\mathlette#1#2{{\mathchoice{\mbox{#1$\displaystyle #2$}}%
{\mbox{#1$\textstyle #2$}}%
{\mbox{#1$\scriptstyle #2$}}%
{\mbox{#1$\scriptscriptstyle #2$}}}}
\newcommand{\matr}[1]{\mathlette{\boldmath}{#1}}

\providecommand{\GS}[2]{\mathcal{GS}(#1 \mid #2)}
\newcommand{\Op}[1]{\mathcal{O}(#1)}

\newcommand{\psiv}{\matr\psi}
\newcommand{\phiv}{\matr\phi}
\providecommand{\norm}[1]{\lVert#1\rVert}
\newfont{\bbb}{msbm10 scaled 700}

\newfont{\bb}{msbm10 scaled 1100}
\newcommand{\CC}{\mbox{\bb C}}

\newcommand{\RR}{\mbox{\bb R}}

\newcommand{\FF}{\mbox{\bb F}}

\newcommand{\NN}{\mbox{\bb N}}

\newcommand{\av}{{\bf a}}
\newcommand{\bv}{{\bf b}}

\newcommand{\ev}{{\bf e}}
\newcommand{\fv}{{\bf f}}
\newcommand{\gv}{{\bf g}}

\newcommand{\mv}{{\bf m}}
\newcommand{\nv}{{\bf n}}

\newcommand{\qv}{{\bf q}}
\newcommand{\rv}{{\bf r}}
\newcommand{\sv}{{\bf s}}

\newcommand{\vv}{{\bf v}}
\newcommand{\xv}{{\bf x}}
\newcommand{\yv}{{\bf y}}
\newcommand{\zv}{{\bf z}}
\newcommand{\zerov}{{\bf 0}}

\newcommand{\Am}{{\bf A}}

\newcommand{\Dm}{{\bf D}}

\newcommand{\Gm}{{\bf G}}
\newcommand{\Hm}{{\bf H}}
\newcommand{\Id}{{\bf I}}

\newcommand{\Mm}{{\bf M}}

\newcommand{\Om}{{\bf O}}
\newcommand{\Pm}{{\bf P}}
\newcommand{\Qm}{{\bf Q}}
\newcommand{\Rm}{{\bf R}}
\newcommand{\Sm}{{\bf S}}
\newcommand{\Tm}{{\bf T}}
\newcommand{\Um}{{\bf U}}

\newcommand{\Vm}{{\bf V}}
\newcommand{\Xm}{{\bf X}}

\newcommand{\Zm}{{\bf Z}}

\newcommand{\Cc}{{\cal C}}

\newcommand{\Nc}{{\cal N}}

\newcommand{\gammav}{\hbox{\boldmath$\gamma$}}
\newcommand{\deltav}{\hbox{\boldmath$\delta$}}

\newcommand{\muv}{\hbox{\boldmath$\mu$}}

\newcommand{\Sigmam}{\hbox{\boldmath$\Sigma$}}

\newcommand{\Xim}{\hbox{\boldmath$\Xi$}}

\renewcommand{\det}{{\hbox{det}}}

\newcommand{\eqdef}{\stackrel{\Delta}{=}}

\newcommand{\herm}{{\sf H}}

\newcommand{\transp}{{\sf T}}

\usepackage{stmaryrd} % for \mkv 